\newcommand\review[1]{#1}
\newcommand\cameraready[1]{}
\newcommand\fullversion[1]{#1}
\newcommand{\tods}{{TODs}}
\newcommand{\tod}{{TOD}}
\newcommand{\vampire}{\textsc{Vampire}}
\newcommand{\setof}[1]{\{#1\}}
\newcommand{\QEDsymbol}{\text{\ding{111}}}
\newcommand{\QED}{\hspace*{\stretch{1}}\QEDsymbol}
\def\orcidID#1{\href{http://orcid.org/#1}{\raisebox{-1.25pt}{\includegraphics{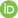}}}}
\newcommand\tsf[1]{\textsf{#1}}
\newcommand*{\defeq}{\stackrel{\scriptscriptstyle\tsf{def}}{=}}
\newcommand\eqs{\approx}
\newcommand\kbo{\tsf{kbo}}
\newcommand\lpo{\tsf{lpo}}
\newcommand\off{{\tt off}}
\newcommand\on{{\tt on}}
\newcommand\shared{{\tt shared}}
\newcommand\OD{\mathcal{T}}
\newcommand\ODprime{\mathcal{T}'}
\newcommand\comp{\mathrel{>^{\hspace{-1pt}?}\hspace{-1pt}}}
\newcommand\pos{\mathrel{\geq^{\hspace{-1pt}?}\hspace{-1pt}}}
\newcommand\notsotiny{\@setfontsize\notsotiny\@vipt\@viipt}
\def\rightedgefill@{\arrowfill@\relbar\relbar\rightarrow}
\def\leftedgefill@{\arrowfill@\leftarrow\relbar\relbar}
\newcommand\xedge[2][]{\ext@arrow 0359\rightedgefill@{#1}{#2}}
\newcommand\Vars{\mathcal{V}}
\newcommand\Funs{\mathcal{F}}
\newcommand\Poly{\mathcal{P}(\Vars)}
\newcommand\pathhighlightcolor{blue!30!white}
\newcommand\visitedcolor{red!50}
\tikzset{
    > = Stealth,
    node/.style={draw=black,fill=white,label={[font=\notsotiny,label distance=1pt]#1}},
    visitedA/.pic = {\begin{pgfinterruptboundingbox}
      \node[] (over) {\large \todSymbol};
    \end{pgfinterruptboundingbox}},
    processed/.style={
        fill=white,
        postaction={
            pattern={Lines[angle=45,distance={2pt},line width=.5pt]},
            pattern color=\visitedcolor
        }
    },
    unprocessed/.style={},
    genericnode/.style={node,circle,inner sep=2pt},
    datanode/.style={node=#1,rounded rectangle,rounded rectangle arc length=90,inner sep=2pt,font=\small},
    sourcenode/.style={node=#1,circle,processed,inner sep=2.5pt},
    sinknode/.style={node=#1,circle,fill=black,inner sep=2.5pt},
    termnode/.style={node=#1,rectangle},
    polynode/.style={node=#1,chamfered rectangle,chamfered rectangle corners={north west, south east},inner sep=1pt},
    currtermnode/.style={termnode,fill=\pathhighlightcolor,draw=none,inner sep=6pt},
    currpolynode/.style={polynode,fill=\pathhighlightcolor,draw=none,inner sep=3pt},
    currexitnode/.style={sinknode,fill=\pathhighlightcolor,draw=none,inner sep=4pt},
    currsuccessnode/.style={datanode,fill=\pathhighlightcolor,draw=none,inner sep=5pt},
    traceedge/.style={line width=5pt,line cap=rect,\pathhighlightcolor},
    edgenode/.style={pos=.45,fill=white,draw,circle,inner sep=1pt},
    gtedgenode/.style={edgenode,node contents={\tiny $>$}},
    geqedgenode/.style={edgenode,node contents={\tiny $\geq$},inner sep=.7pt},
    eqedgenode/.style={edgenode,node contents={\tiny $=$},inner sep=1.4pt},
    ngeqedgenode/.style={edgenode,node contents={\tiny $\ngeq$},inner sep=.2pt},
    canvassed/.style={execute at end picture={
        \begin{scope}[on background layer]
        \draw[draw=none,fill=black!3,rounded corners=1ex] ($(current bounding box.south west) - (.1,.1)$) rectangle ($(current bounding box.north east) + (.1,.1)$);
        \end{scope}
    }},
    transformedge/.style={-{Classical TikZ Rightarrow},double},
    labelnode/.style={draw,yshift=-1.2em,xshift=-.5em,fill=white},
}
\spnewtheorem{Definition}{Definition}{\bfseries}{\normalfont}
\newcommand\PostOrdering{{post-ordering}\xspace}
\newcommand\EqSet{\mathcal{E}\xspace}
\renewcommand{\paragraph}[1]{\par\smallskip\noindent\textbf{#1}}
\title{Term Ordering Diagrams}
\author{M\'arton Hajdu\inst{1}\textsuperscript{(\Envelope)}\orcidID{0000-0002-8273-2613}
\and
Robin Coutelier\inst{1}\orcidID{0009-0002-4735-5215}
\and
Laura Kov\'acs\inst{1}\orcidID{0000-0002-8299-2714}
\and
Andrei Voronkov\inst{2,3}
}
\institute{TU Wien, Vienna, Austria\\
\email{marton.hajdu@tuwien.ac.at}
\and University of Manchester, Manchester, UK \and EasyChair, Manchester, UK}
\date{}
\authorrunning{M\'arton Hajdu, Robin Coutelier, Laura Kov\'acs and Andrei Voronkov}
\begin{document}

\maketitle

\begin{abstract}
The superposition calculus for reasoning in first-order logic with equality relies on simplification orderings on terms. Modern saturation provers use the Knuth-Bendix order (KBO) and the lexicographic path order (LPO) for discovering redundant clauses and inferences. Implementing term orderings is however challenging. While KBO comparisons can be performed in linear time and LPO checks in quadratic time, using the best known algorithms for these orders is not enough. Indeed, our experiments show that for some examples term ordering checks may use about 98\% of the overall proving time. The reason for this is that some equalities that cannot be ordered can become ordered after applying a substitution (post-ordered), and we have to check for post-ordering repeatedly for the same equalities.
In this paper, we show how to improve post-ordering checks by introducing a new data structure called \emph{term ordering diagrams}, in short \tods{}, which creates an index for these checks. We achieve efficiency by lazy modifications of the index and by storing and reusing information from previously performed checks to speed up subsequent checks. Our experiments demonstrate efficiency of TODs.
\end{abstract}

\section{Introduction}
%
Superposition-based theorem provers commonly use simplification orderings on terms to restrict their search space~\cite{NieuwenhuisRubio:HandbookAR:paramodulation:2001}. All top performing provers from the last CASC competitions~\cite{Sut16} -- \vampire{}~\cite{CAV13}, iProver~\cite{iProver}, E~\cite{E19} and Zipperposition~\cite{zipperposition} -- use the Knuth-Bendix order (KBO)~\cite{KBO} and some also use the lexicographic path order (LPO)~\cite{LPOOriginal}. There is a linear time KBO algorithm~\cite{ThingsToKnowWhenImplementingKBO} and a quadratic time LPO algorithm~\cite{LPOConstraintSolving}, so implementing ordering comparisons does not seem to be challenging.

We denote by $\succ$ the simplification order used by the superposition calculus, and we call an \emph{ordering comparison} the operation of checking whether $s\succ t$ holds for two terms $s$ and $t$. Although there are very efficient algorithms for ordering comparison, surprisingly, it can easily take up a significant part of a theorem prover's running time. For example, ordering comparison is required in the following frequently used inference rule called \emph{demodulation}:
\begin{center}
\begin{tabular}{c >{\centering}m{.1\linewidth} l}
\multirow{2}{*}{
\AxiomC{$l\eqs r$}
\AxiomC{$C[l\sigma]$}
\BinaryInfC{$C[r\sigma]$}
\DisplayProof}
& \multirow{2}{*}{where} & (1) $l\sigma\succ r\sigma$,\\
& & (2) $C[l\sigma]\succ (l\simeq r)\sigma$,\\
\end{tabular}
\end{center}
between a unit clause consisting of an equality $l\eqs r$ and a clause $C[l\sigma]$ containing a subterm $l\sigma$, where $\sigma$ is a substitution. We can remove the right premise $C[l\sigma]$ after applying this rule, so demodulation is very valuable for keeping search space smaller.

\paragraph{Post-ordering checks with unordered equalities.} Demodulation is applied only when the side condition $l \sigma \succ r\sigma$ is satisfied. Based on properties of simplification orders, this side condition always holds when $l \succ r$. A more challenging case is when $l \not\succ r$ and $r \not\succ l$ (we say that $l \eqs r$ is \emph{unordered}) and we repeatedly apply demodulation with the same left premise $l \eqs r$ and different right premises. We call the comparison of $l \sigma$ and $r\sigma$ a \emph{post-ordering check} to emphasize that $l \eqs r$ is not pre-ordered. 
The number of required post-ordering checks can be very large. For example, if we have a unit equality problem (which is not unusual in algebraic reasoning) and generate $10^6$ clauses, the number of demodulation inferences
can be of the order of $10^{12}$. Even if only 1\% of all equalities is unordered, we still can have $10^{10}$ post-ordering comparisons. Further, while KBO has a linear time algorithm~\cite{ThingsToKnowWhenImplementingKBO}, it still takes a significant time compared to other algorithms used by a theorem prover (such as matching or unification). Improving post-ordering checks could thus further improve equational reasoning. 


%

\paragraph{Equality retrieval with post-ordering checks.} This paper focuses on retrieving equalities that become ordered after applying a substitution. This is used in the superposition and demodulation rules. Namely, \review{our goal is to efficiently solve} the following problem.

\begin{mdframed}[frametitle={\fcolorbox{black}{white}{~The Post-Ordering Problem~}}, innertopmargin=0pt, innerbottommargin=8pt, frametitleaboveskip=-5pt, frametitlealignment=\center, frametitlefont=\scshape]
\label{todProblem}
Given a finite set $\EqSet$ of unordered equalities $l \eqs r_1, \ldots, l \eqs r_n$ with the same left-hand side and a substitution $\sigma$, \emph{retrieve} from $\EqSet$ equalities that satisfy $l\sigma \succ r_i\sigma$.
\end{mdframed}
We can be interested in retrieving one, some, or all equalities. When this problem is used in demodulation, one normally first uses an index that retrieves a term $l$, and only then unordered equalities $l \eqs r_1, \ldots, l \eqs r_n$ that form the set $\EqSet$. In this case all equalities in $\EqSet$ have the same left-hand side.
We assume that the same set $\EqSet$  of equalities can be repeatedly used for retrieval, interleaved with operations of adding new equalities to $\EqSet$ or removing existing equalities from it. As such, the post-ordering problem becomes a \emph{term indexing problem}~\cite{TermIndexing}, where the substitution $\sigma$ is a \emph{query substitution}.

\paragraph{{\bf Motivating Example.}}\label{page:motivating} Let us illustrate the \PostOrdering{} problem using a KBO with a constant weight function. Let $l$ and $r_1$ denote the terms $f(x,y)$ and $f(y,x)$, respectively, and let $l\eqs r_1$ be an equality. Given a substitution $\sigma$, checking whether $l\sigma\succ r_1\sigma$ holds involves computing term weights and comparing symbol precedences of $f(x,y)\sigma$ and $f(y,x)\sigma$ recursively. Some of these operations are, however, independent of the substitution $\sigma$ applied. For example, the weight of $f(x,y)\sigma$ is the same as the weight of $f(y,x)\sigma$ independently of $\sigma$, so the weight check can be dropped. Simple analysis shows that checking  $l\sigma\succ r_1\sigma$ can be simplified to an \emph{equivalent} ordering check $x\sigma\succ y\sigma$. Note that this analysis has to be performed only once, so all consequent checks of $f(x,y)\sigma' \succ f(y,x)\sigma'$ for any substitution $\sigma'$ can be immediately reduced to checking $x\sigma' \succ y\sigma'$. 

We can do even better when we perform several consecutive checks. For example, suppose that we retrieve equalities in the same order and the next equality after $f(x,y) \eqs f(y,x)$ is $f(x,y) \eqs f(x,x)$. Then, if during the first check we established $y\sigma \succ x\sigma$, we can immediately conclude $f(x,y)\sigma \succ f(x,x)\sigma$.



%
\paragraph{Our contributions.} 
This paper is based on ideas explained in the example above: (i) simplify ordering checks using the definition of KBO or LPO, and (ii) use the previous computation history to get rid of redundant checks. We bring the following contributions. 
\begin{itemize}
\item We introduce a new data structure, called the \emph{term ordering diagram (\tod{})}, in  Section~\ref{sec:term_ordering_diagrams}. 
\item We describe \tod{} transformations in Section~\ref{sec:transformations}.  These transformations use KBO/LPO properties and  produce equivalent, yet more efficient TODs.\fullversion{\footnote{Detailed proofs can be found in the Appendix.}}
\cameraready{\footnote{Detailed proofs can be found in the extended version of this paper~\cite{FullPaper}.}}
\item We propose an equality retrieval algorithm in Section~\ref{sec:retrieval} for solving the \PostOrdering{} problem using TODs. To the best of our knowledge, our work provides the first algorithm for efficient solving of \PostOrdering{} checks. It is also the first paper discussing runtime specialization of LPO checks, previously used for other operations in~\cite{PartiallyAdaptiveCodeTrees,EfficientCheckingOfTermOrderingConstraints}.
\item We evaluate our work by implementing term ordering diagrams in \vampire{} and report on our results in Section~\ref{sec:evaluation}. The use of TODs in forward demodulation, that is, demodulation applied to newly generated clauses from previously stored unit equalities~\cite{CAV13}, has significantly contributed to \vampire{}'s success in the unit equality division (UEQ) of the CASC competition~\cite{Sut16} in 2024.
\end{itemize}


%

\section{Preliminaries}
We work with a fixed signature $\Funs$ consisting of a finite set of \emph{function symbols} with associated arities and consider an alphabet of \emph{variables} $\Vars$; variables are not part of the signature.
\emph{Function symbols} are denoted by $f, g$, and variables by $x, y$, possibly with indices. \emph{Terms} are defined in the standard way over $\Funs\cup\Vars$;  variable-free terms are called \emph{ground}. A \textit{substitution} $\sigma$ is a mapping from variables to terms, such that the set  $\{x\mid \sigma(x)\neq x\}$ of variables is finite.
We call a \emph{simplification ordering} any ordering $\succ$ on terms that is
\begin{enumerate}[label=(\arabic*)]
\item \emph{well-founded}: there is no infinite decreasing chain of terms $s_1\succ s_2\succ \ldots$,
\item \emph{stable under substitutions}: if $s\succ t$ then $s\sigma\succ t\sigma$, 
\item \emph{monotonic}: for  $f\in\Funs$ and  terms $s_1,\ldots,s_n,s,t$ such that $s \succ t$, we have $f(s_1,\dots,s_{i-1},s,s_{i},\dots,s_n)\succ f(s_1,\dots,s_{i-1},t,s_{i},\dots,s_n)$, 
\item satisfies the \emph{subterm property}: if $t$ is a proper subterm of $s$, then $s \succ t$.
\end{enumerate}
%
A \emph{precedence relation}, denoted by $\gg$, is a total order on the signature $\Funs$. A \emph{weight function} is a function $w$ from $\Funs$ to non-negative integers. We will refer to $w(f)$ as the \emph{weight} of $f$. We denote by $w_0$ the smallest weight of constants.
%
%
For $p\in\Funs\cup\Vars$, we write $|t|_p$ to denote  the \emph{number of occurrences} of $p$ in 
term $t$. \review{For example, $|f(x,x)|_f=1$, $|f(x,x)|_x=2$ and $f(x,x)_y=0$.} Let $\Poly$ be the set of linear expressions over $\Vars$ with integer coefficients. The \emph{weight of a term} $t$, denoted by $|t|$, is a linear expression in $\Poly$ defined as:
$$|t|\defeq\sum_{f\in\Funs}|t|_f\cdot w(f) + \sum_{x\in\Vars}|t|_x\cdot x$$
A substitution $\sigma$ can also be considered as mapping from linear expressions to linear expressions, as follows: 
\[
\sigma(\alpha_0 + \alpha_1\cdot x_1 + \ldots + \alpha_n\cdot x_n)\defeq \alpha_0 + \alpha_1\cdot |x_1\sigma| + \ldots + \alpha_n\cdot |x_n\sigma|.
\]
\review{For example, if $w(f)=2$, $w(a)=1$ and $\sigma=\{x\mapsto a\}$, then $|f(x,x)|=2\cdot x+2$ and $\sigma(|f(x,x)|)=\sigma(2\cdot x+2)=2\cdot|x\sigma|+2=4$.}
It is not hard to argue that $|t\sigma|$ = $\sigma(|t|)$. 
Let $e\in\Poly$ be a linear expression. We call a substitution $\sigma$ \emph{grounding} for $e$ if $\sigma(e)$ does not contain variables \review{and $|x\sigma|\geq w_0$ for all $x\in\Vars$ such that $|x\sigma|$ is a constant}.
We write $e > 0$ if $\sigma(e) > 0$ for all grounding substitutions $\sigma$ for $e$. We write $e \gtrsim 0$ if $\sigma(e) \geq 0$ for all grounding substitution $\sigma$ for $e$.
The \emph{Knuth-Bendix order (KBO)}, denoted by $\succ_\kbo$, is parameterized by a precedence relation $\gg$ and a weight function $w$. For terms $s,t$, we have  $s \succ_\kbo t$ if:
\begin{enumerate}[label=(K\arabic*),itemsep=2pt,leftmargin=2.5em]
\item\label{prop:kbo1} $|s|-|t|>0$, or
\item\label{prop:kbo2} $|s|-|t|\gtrsim 0$, $s = f(s_1,...,s_n)$, $t = g(t_1,...,t_m)$ and $f\gg g$, or
\item\label{prop:kbo3} $|s|-|t|\gtrsim 0$, $s = f(s_1,...,s_n)$, $t = f(t_1,...,t_n)$ and there exists $1\le i\le n$ such that $s_i\succ_\kbo t_i$ and
$s_j=t_j$ for all $1\le j<i$.
\end{enumerate}
The \emph{lexicographic path order (LPO)}, denoted by $\succ_\lpo$, is  parameterized by a precedence relation $\gg$. Let   $s,t$ be terms with  $s=f(s_1,...,s_n)$. We write  $ s \succ_\lpo t$ if:
\begin{enumerate}[label=(L\arabic*),itemsep=2pt,leftmargin=2.5em]
\item\label{prop:lpo1} $t=f(t_1,...,t_n)$ and there exists $1\le i\le n$ s.t. $s_j=t_j$ for  $1\le j<i$, $s_i\succ_\lpo t_i$, and $s\succ_\lpo t_k$ for  $i<k\le n$, or
\item\label{prop:lpo2} $t=g(t_1,...,t_m)$, $f\gg g$ and $s\succ_\lpo t_i$ for  $1\le i\le m$, or
\item\label{prop:lpo3} $s_i\succeq_\lpo t$ for some $1\le i\le n$.
\end{enumerate}
\review{It is known that LPO is a simplification order; and that KBO is a simplification order for any precedence relation $\gg$ and weight function $w$, if $w_0>0$ and $f\gg g$ for all $g\in\Funs$ different from $f$, for all unary $f\in\Funs$ with $w(f)=0$.}

\section{Term Ordering Diagram -- \tod}
\label{sec:term_ordering_diagrams}

To solve the \PostOrdering{} problem, 
we introduce the \emph{term ordering diagram (TOD)} data structure  (Definition~\ref{def:tod}). For retrieving equalities using TODs we will need the following two operations on substitutions.



\begin{enumerate}[itemsep=2pt,leftmargin=1.3em]
\item Given two terms $s$ and $t$, compare $s\sigma$ and $t\sigma$ using $\succ$. We call the expression $s \comp t$ a \emph{term comparison}. We consider a substitution $\sigma$ as a mapping from term comparisons to the set of values $\{ >, =, \ngeq\}$, as follows:

\[
  \sigma(s \comp t) \defeq 
    \left\{
      \begin{array}{rl}
         >,  & \text{~~ if } s\sigma \succ t\sigma, \\
         =, & \text{~~ if } s\sigma = t\sigma, \\
         \ngeq, & \text{~~ otherwise. }
      \end{array}
    \right.
\]


\item Given a linear expression $e$, check the sign of the linear expression $\sigma(e)$. We call the expression $e\pos 0$ a \emph{positivity check}. We consider a substitution $\sigma$ as a mapping from positivity checks to the set of values $\{ >, \geq, \ngeq\}$, as follows:

\[
  \sigma(e \pos 0) \defeq 
    \left\{
      \begin{array}{rl}
         >,  & \text{~~ if } \sigma(e)>0, \\
         \geq, & \text{~~ if } \sigma(e)\gtrsim 0,\\
         \ngeq, & \text{~~ otherwise. }
      \end{array}
    \right.
\]
Note that positivity checks are only defined for KBO.

\end{enumerate}
Before defining \tods{}, we illustrate how we solve the \PostOrdering{} problem.
\begin{example}\label{ex:tod1}
Let $\sigma$ be a query substitution and  $l\defeq f(x,y)$, $r_1\defeq f(y,x)$ be terms. The rooted directed graph of Figure~\ref{fig:example1}(a) illustrates the two key operations for retrieving the equality $l\eqs r_1$ if $f(x,y)\sigma\succ f(y,x)\sigma$ holds. We first evaluate the \emph{term comparison} $f(x,y)\comp f(y,x)$ in the top node. If $\sigma(f(x,y)\comp f(y,x))$ evaluates to $>$, the computation proceeds to the bottom node, which contains the equality $l\eqs r_1$, which is then included in the result of the retrieval.

The evaluation of $\sigma(f(x,y)\comp f(y,x))$ in Figure~\ref{fig:example1}(a)  uses  KBO, which in turn  computes the linear expression $|f(x,y)|-|f(y,x)|$, 
performs a \emph{positivity check} and proceeds with one of the subcases~\ref{prop:kbo1}--\ref{prop:kbo3}. The linear expression is 0 as both terms have the same number of $f$, $x$ and $y$ symbols. Hence, the positivity check always results in $\geq$, which rules out subcase~\ref{prop:kbo1}. Similarly, \ref{prop:kbo2} is violated, as $f(x,y)$ and $f(y,x)$ both have $f$ as top-most symbol.

Note that \ref{prop:kbo1} and \ref{prop:kbo2} are not applicable \emph{regardless of the query substitution $\sigma$}, hence we can simplify $f(x,y)\sigma\succ f(y,x)\sigma$ into $(x\succ y\lor (x=y\land y\succ x))\sigma$. As $(x=y\land y\succ x)\sigma$ is false, this further simplifies into $(x\succ y)\sigma$. The term comparison $f(x,y)\comp f(y,x)$ of Figure~\ref{fig:example1}(a) is thus turned into the (cheaper) term comparison $x\comp y$ shown in Figure~\ref{fig:example1}(b).

\begin{figure}[t]
\begin{minipage}{.25\linewidth}

\begin{center}
\begin{tikzpicture}[canvassed]

\coordinate (source) at (0,.6) {};
\node[termnode] (c1) at (0,0) {$f(x,y)\comp f(y,x)$};
\node[datanode] (s1) at (0,-1.5) {$l\eqs r_1$};

\draw[->] (source) -- (c1);
\draw[->] (c1) -- node[midway,gtedgenode]{} (s1);

\end{tikzpicture}\\
(a)
\end{center}
\end{minipage}\begin{minipage}{.16\linewidth}

\begin{center}
\begin{tikzpicture}[canvassed]
\coordinate (source) at (0,.6) {};
\node[termnode] (c1) at (0,0) {$x\comp y$};
\node[datanode] (s1) at (0,-1.5) {$l\eqs r_1$};

\draw[->] (source) -- (c1);
\draw[->] (c1) -- node[midway,gtedgenode]{} (s1);

\end{tikzpicture}\\
(b)
\end{center}
\end{minipage}\begin{minipage}{.25\linewidth}

\begin{center}
\begin{tikzpicture}[canvassed]
\coordinate (source) at (0,.6) {};
\node[termnode] (c1) at (0,0) {$f(x,y)\comp f(x,x)$};
\node[datanode] (s1) at (0,-1.5) {$l\eqs r_2$};

\draw[->] (source) -- (c1);
\draw[->] (c1) -- node[midway,gtedgenode]{} (s1);

\end{tikzpicture}\\
(c)
\end{center}
\end{minipage}\begin{minipage}{.34\linewidth}

\begin{center}
\begin{tikzpicture}[canvassed]
\coordinate (source) at (0,.6) {};
\node[polynode] (c1) at (0,0) {$y-x\pos 0$};
\node[datanode] (s1) at (0,-1.5) {$l\eqs r_2$};
\node[termnode] (c2) at (2,-.8) {$y\comp x$};

\draw[->] (source) -- (c1);
\draw[->] (c1) --  node[midway,gtedgenode]{} (s1);
\draw[->] (c1) -- node[midway,geqedgenode]{} (c2);
\draw[->] (c2) -- node[midway,gtedgenode]{} (s1);

\end{tikzpicture}\\
(d)
\end{center}
\end{minipage}
\caption{Equality retrievals, where $l\defeq f(x,y)$, $r_1\defeq f(y,x)$ and $r_2\defeq f(x,x)$. (a) and (b) show retrieval of $l\eqs r_1$, where (a) contains a term comparison $l\comp r_1$ and (b) is a simplified version of (a) with term comparison $x\comp y$. Further,  (c) and (d) show retrieval of $l\eqs r_2$, where (c) contains term comparison $l\comp r_2$ and (d) is a simplified version of (c) with a positivity check $y-x\pos 0$ and term comparison $y\comp x$.  }
\label{fig:example1}
\end{figure}

Let us now consider an additional equality $l\eqs r_2$, with $r_2\defeq f(x,x)$. Figure~\ref{fig:example1}(c) displays the steps of  retrieving $l\eqs r_2$ if $f(x,y)\sigma\succ f(x,x)\sigma$ holds. Similarly to Figure~\ref{fig:example1}(a), the \emph{term comparison} $f(x,y)\comp f(x,x)$ is simplified using KBO properties into $(y-x>0\lor (y-x\gtrsim 0\land y\succ x))\sigma$.

The simplified computation is depicted in Figure~\ref{fig:example1}(d), where we first perform a \emph{positivity check} $y-x\pos 0$, corresponding to deciding which of $\sigma(y-x)>0$ or $\sigma(y-x)\gtrsim 0$ hold in the above formula, then performing at most one more term comparison.
The computation of Figure~\ref{fig:example1}(d) is more efficient than the one in Figure~\ref{fig:example1}(c) because we avoid the computation of the linear expression $|f(x,y)|-|f(x,x)|$ and as well as some intermediate term comparisons.\QED

\end{example}

Example~\ref{ex:tod1} shows that interleaving term comparisons and positivity checks can simplify, and hence speed up, term ordering checks. Let us now give formal definitions. We will use standard graph-theoretic notions related to directed acyclic graphs (dags), trees, and paths in a graph.

\begin{Definition}[Term Ordering Diagram -- TOD]
\label{def:tod}
A \emph{term ordering diagram} (TOD) is a directed acyclic graph $\OD$ which contains five kinds of nodes:

\begin{enumerate}[label=(\arabic*),itemsep=2pt,leftmargin=1.7em]
    \item A single \emph{root node}, so that every node is reachable from the root node.
    The root node has 
    one outgoing edge.
    \item A single \emph{exit node}, reachable from every node.
    \item 
    A \emph{term comparison node} is labeled by a term comparison $s \comp t$ and has 
    three outgoing edges, labeled by $>$, $=$ and $\ngeq$. 
    \item  
    A \emph{positivity check node} is labeled by a positivity check $e \pos 0$ and has 
    three outgoing edges, labeled by $>$, $\geq$, and $\ngeq$.
    \item 
    A  \emph{success node} is labeled by an equality $l \eqs r$ and has 
    one outgoing edge.\QED
\end{enumerate}
\end{Definition}
We  collectively refer to term comparison nodes and positivity check nodes as \emph{evaluation nodes}. The idea is that we evaluate the substitution in these nodes and follow the node label resulting from the evaluation. Note that the exit node and its incoming edges are usually omitted when displaying a TOD (as in~\Cref{fig:example1}).
To retrieve equalities towards solving the \PostOrdering{} problem, we traverse a \tod{} as follows. 
\begin{Definition}[TOD Traversal]\label{def:tod:traversal:foces}
Let $\sigma$ be a query substitution and $\OD$ a \tod. 
The \emph{$\OD$ traversal for $\sigma$} is the path in $\OD$ from its root  to its exit node so that:  
\begin{enumerate}[label=(\arabic*)]
\item If the path contains a term comparison node labeled by $s \comp t$, then the path also contains the outgoing edge of this node labeled by $\sigma(s \comp t)$.
\item If the path contains a positivity check node labeled by $e \pos 0$, then the path also contains the outgoing edge of this node labeled by $\sigma(e \pos 0)$.
\end{enumerate}
The \emph{success set of the $\OD$  traversal for $\sigma$} is the set of labels of all success nodes on this path; we also refer to it as \emph{the success set of $\sigma$ in $\OD$}.
For any non-exit node in a traversal, we refer to its  \emph{next edge} and \emph{next node}.

For a  node $n$ labeled by $c$, where $c$ is either a term comparison or a positivity check, we say that the node $n$ \emph{forces an edge label} $\ell$, if for every query substitution $\sigma$, if the $\OD$ traversal for $\sigma$ reaches $n$, then $\sigma(c) = \ell$.\QED
\end{Definition}

We use a \tod{} to perform term ordering checks only when it is necessary. Key to this are \tod{} transformations (Section~\ref{sec:transformations}): given a \tod{} $\OD_1$, we transform $\OD_1$ into an equivalent \tod{} $\OD_2$, so that 
$\OD_2$ can be traversed faster, making ordering checks cheaper. The efficiency gain in $\OD_2$ traversal comes  by replacing $\OD_1$ checks by less expensive ones or removing redundant checks of $\OD_1$. 
Importantly, our \tod{} transformations on $\OD_1$ are performed while we traverse $\OD_1$ for a specific query substitution, as shown  in 
Example~\ref{ex:tod2}. 

%
%

\begin{example}\label{ex:tod2}
Consider again 
$l\defeq f(x,y)$, $r_1\defeq f(y,x)$ and $r_2\defeq f(x,x)$. Let $\sigma$ be a query substitution. 
Figure~\ref{fig:example2}(a) shows the sequence of steps for retrieving $l\eqs r_1$ if $(l\succ r_1)\sigma$, and then retrieving $l\eqs r_2$ if $(l\succ r_2)\sigma$. 
Similarly to Figures~\ref{fig:example1}(a) and (c), we can modify Figure~\ref{fig:example2}(a) into its more efficient version displayed in Figure~\ref{fig:example2}(b). However, in Figure~\ref{fig:example2}(b) we  implement stronger \tod{} modifications:  in the (second) retrieval of $l\eqs r_2$ we  use information from the (first) retrieval of $l\eqs r_1$.
\review{For example, if $x\sigma=y\sigma$, the evaluation of $x\comp y$ in~\Cref{fig:example2}(b) is sufficient to conclude that neither $l\eqs r_1$ nor $l\eqs r_2$ are retrieved; saving computation compared to its equivalent version in~\Cref{fig:example2}(a) where both $f(x,y)\comp f(y,x)$ and $f(x,y)\comp f(x,x)$ have to be evaluated to reach this conclusion.}
\QED
\begin{figure}[t]

\begin{minipage}{.52\linewidth}

\begin{center}
\scalebox{.94}{
\begin{tikzpicture}[canvassed]
\coordinate (source) at (0,.6) {};
\node[termnode] (c1) at (0,0) {$f(x,y)\comp f(y,x)$};
\node[datanode] (s1) at (0,-1.5) {$l\eqs r_1$};
\node[termnode] (c2) at (3.5,0) {$f(x,y)\comp f(x,x)$};
\node[datanode] (s2) at (3.5,-1.5) {$l\eqs r_2$};

\draw[->] (source) -- (c1);
\draw[->] (c1) -- node[midway,gtedgenode]{} (s1);
\draw[->] (c1) edge[bend left=10] node[midway,eqedgenode]{} (c2);
\draw[->] (c1) edge[bend right=10] node[midway,ngeqedgenode]{} (c2);
\draw[->] (s1) -- (c2);
\draw[->] (c2) -- node[midway,gtedgenode]{} (s2);

\end{tikzpicture}}\\
(a)
\end{center}

\end{minipage}\begin{minipage}{.48\linewidth}

\begin{center}
\scalebox{.94}{
\begin{tikzpicture}[canvassed]
\coordinate (source) at (0,.6) {};
\node[termnode] (c1) at (0,0) {$x\comp y$};
\node[datanode] (s1) at (0,-1.5) {$l\eqs r_1$};
\node[polynode] (c2) at (2.5,0) {$y-x\pos 0$};
\node[datanode] (s2) at (2.5,-1.5) {$l\eqs r_2$};
\node[termnode] (c3) at (4.5,-.8) {$y\comp x$};

\draw[->] (source) -- (c1);
\draw[->] (c1) -- node[midway,gtedgenode]{} (s1);
\draw[->] (c1) -- (c2) node[midway,ngeqedgenode]{} (c2);
\draw[->] (s1) -- (c2);
\draw[->] (c2) -- node[midway,gtedgenode]{} (s2);
\draw[->] (c2) -- node[midway,geqedgenode]{} (c3);
\draw[->] (c3) -- node[midway,gtedgenode]{} (s2);

\end{tikzpicture}}\\
(b)
\end{center}

\end{minipage}
\caption{Retrieving multiple equalities, where 
$l\defeq f(x,y)$, $r_1\defeq f(y,x)$ and $r_2\defeq f(x,x)$. }
\label{fig:example2}
\end{figure}
\end{example}

Examples~\ref{ex:tod1}--\ref{ex:tod2} show that \emph{we transform  a \tod{} while we are traversing it}. This is done for two reasons. (i) First, term comparisons may involve expensive computations, for example, when we deal with linear expressions (as shown in Example~\ref{ex:tod1}). When we compute a linear expression we modify the \tod{} to have a corresponding positivity check node, so that on the next traversal, if we reach the same node, we do not compute the linear expression again. 
(ii) 
Second, our transformations are \emph{lazy} -- we only modify nodes in a \tod{} when we reach them during traversal. The reason for this is that modifying \tods{} is expensive due to KBO/LPO checks; hence, these modifications should not be performed for nodes that will never be visited.
Despite modifying \tods, we  do not change their semantics:  
our \tod{} transformations (Section~\ref{sec:transformations}) yield equivalent \tods.
\begin{Definition}[TOD Equivalence\label{def:tod-equiv}]
Two \tods{} $\OD_1$ and $\OD_2$ are \emph{equivalent} if for every substitution $\sigma$, the success sets of $\sigma$ in $\OD_1$ and $\OD_2$ coincide.\QED
\end{Definition}
%

\newsavebox\redundantNodeOne

\begin{lrbox}{\redundantNodeOne}
    \begin{tikzpicture}[canvassed]
    \node[genericnode] (n1) at (0,0) {$n_1$};
    \node[genericnode,inner sep=3.5pt] (n2) at (-.8,-1.5) {$n$};
    \node[genericnode] (n3) at (.8,-1.5) {$n_2$};

    \draw[->] (n1) -- (n2) node[edgenode,inner sep=.5pt]{\scriptsize $\ell'$};
    \draw[->] (n2) -- (n3) node[edgenode]{\scriptsize $\ell$};
    \end{tikzpicture}
\end{lrbox}

\newsavebox\redundantNodeTwo

\begin{lrbox}{\redundantNodeTwo}
    \begin{tikzpicture}[canvassed]
    \node[genericnode] (n1) at (0,0) {$n_1$};
    \node[genericnode,inner sep=3.5pt] (n2) at (-.8,-1.5) {$n$};
    \node[genericnode] (n3) at (.8,-1.5) {$n_2$};

    \draw[->] (n1) -- (n3) node[edgenode,inner sep=.5pt]{\scriptsize $\ell'$};
    \draw[->] (n2) -- (n3) node[edgenode]{\scriptsize $\ell$};
    \end{tikzpicture}
\end{lrbox}

\newsavebox\nodeReplicationOne

\begin{lrbox}{\nodeReplicationOne}
    \begin{tikzpicture}[canvassed]
    \node[inner sep=0pt] (a) at (-.8,-1) {};
    \node[genericnode] (n1) at (0,0) {$n_1$};
    \node[genericnode,inner sep=3.5pt] (n2) at (0,-1.5) {$n$};

    \draw[->] (a) -- (n2);
    \draw[->] (n1) -- (n2) node[edgenode]{\scriptsize $\ell$};
    \end{tikzpicture}
\end{lrbox}

\newsavebox\nodeReplicationTwo

\begin{lrbox}{\nodeReplicationTwo}
    \begin{tikzpicture}[canvassed]
    \node[inner sep=0pt] (a) at (-1.3,-1) {};
    \node[genericnode] (n1) at (0,0) {$n_1$};
    \node[genericnode,inner sep=3.5pt] (n2) at (-.5,-1.5) {$n$};
    \node[genericnode] (n3) at (1,-1.5) {$n_2$};

    \draw[->] (a) -- (n2);
    \draw[->] (n1) -- (n3) node[edgenode]{\scriptsize $\ell$};
    \end{tikzpicture}
\end{lrbox}

\newsavebox\insertionOne

\begin{lrbox}{\insertionOne}
    \begin{tikzpicture}[canvassed]
    \node[inner sep=0pt] (source1) at (-.6,.6) {};
    \node[inner sep=0pt] (sourcen) at (.6,.6) {};
    \node at (0,.6) {$\ldots$};
    \node[sinknode] (sink) at (0,0) {};

    \draw[->] (source1) -- (sink);
    \draw[->] (sourcen) -- (sink);
    \end{tikzpicture}
\end{lrbox}

\newsavebox\insertionTwo

\begin{lrbox}{\insertionTwo}
    \begin{tikzpicture}[canvassed]
    \node[inner sep=0pt] (source1) at (-.6,.6) {};
    \node[inner sep=0pt] (sourcen) at (.6,.6) {};
    \node at (0,.6) {$\ldots$};
    \node[termnode] (c1) at (0,0) {$l\comp r$};
    \node[datanode] (s1) at (0,-1.5) {$l\eqs r$};
    \node[sinknode] (sink) at (1.5,-.75) {};

    \draw[->] (source1) -- (c1);
    \draw[->] (sourcen) -- (c1);

    \draw[->] (c1) -- node[gtedgenode]{} (s1);
    \draw[->] (c1) edge[bend right=20] node[eqedgenode]{} (sink);
    \draw[->] (c1) edge[bend left=20] node[ngeqedgenode]{} (sink);
    \draw[->] (s1) -- (sink);
    \end{tikzpicture}
\end{lrbox}

\newsavebox\maincase

\begin{lrbox}{\maincase}
    \begin{tikzpicture}[canvassed]
    \node[termnode] (n) at (0,0) {$f(\bar{s})\comp g(\bar{t})$};
    \node[genericnode] (n1) at (-1.5,-1.5) {$n_1$};
    \node[genericnode] (n2) at (0,-1.5) {$n_2$};
    \node[genericnode] (n3) at (1.5,-1.5) {$n_3$};
    
    \draw[->] (n) -- (n1) node[midway,gtedgenode]{};
    \draw[->] (n) -- (n2) node[midway,eqedgenode]{};
    \draw[->] (n) -- (n3) node[midway,ngeqedgenode]{};
    \end{tikzpicture}
\end{lrbox}

\newsavebox\kboGreater

\begin{lrbox}{\kboGreater}
    \begin{tikzpicture}[canvassed]
    \node[polynode] (n) at (0,0) {$|f(\bar{s})|-|g(\bar{t})|\pos 0$};
    \node[genericnode] (n1) at (-1.5,-2) {$n_1$};
    \node[genericnode] (n2) at (0,-2) {$n_2$};
    \node[genericnode] (n3) at (1.5,-2) {$n_3$};
    
    \draw[->] (n) -- node[midway,gtedgenode]{} (n1);
    \draw[->] (n) -- (0,-1) -- node[geqedgenode,pos=0]{} (n1);
    \draw[->] (n) -- node[midway,ngeqedgenode]{} (n3);
    \end{tikzpicture}
\end{lrbox}

\newsavebox\kboLess

\begin{lrbox}{\kboLess}
    \begin{tikzpicture}[canvassed]
    \node[polynode] (n) at (0,0) {$|f(\bar{s})|-|g(\bar{t})|\pos 0$};
    \node[genericnode] (n1) at (-1.5,-2) {$n_1$};
    \node[genericnode] (n2) at (0,-2) {$n_2$};
    \node[genericnode] (n3) at (1.5,-2) {$n_3$};
    
    \draw[->] (n) -- node[midway,gtedgenode]{} (n1);
    \draw[->] (n) -- (0,-1) -- node[geqedgenode,pos=0]{} (n3);
    \draw[->] (n) -- node[midway,ngeqedgenode]{} (n3);
    \end{tikzpicture}
\end{lrbox}

\newsavebox\kboEqual

\begin{lrbox}{\kboEqual}
    \begin{tikzpicture}[canvassed]
    \node[polynode] (n) at (0,0) {$|f(\bar{s})|-|g(\bar{t})|\pos 0$};
    \node[termnode] (arg1) at (0,-1.3) {$s_1\comp t_1$};
    \node[] (argi) at (0,-2.4) {$\ldots$};
    \node[termnode] (argk) at (0,-3.5) {$s_k\comp t_k$};
    \node[genericnode] (n1) at (-2,-4.8) {$n_1$};
    \node[genericnode] (n2) at (0,-4.8) {$n_2$};
    \node[genericnode] (n3) at (2,-4.8) {$n_3$};
    
    \draw[->] (n) -- (-1,-1) -- node[gtedgenode,pos=0]{} (n1);
    \draw[->] (n) -- node[geqedgenode,pos=0.4]{} (arg1);
    \draw[->] (n) -- (1,-1) -- node[ngeqedgenode,pos=0]{} (n3);
    
    \draw[->] (arg1) -- (-1,-2.4) -- node[gtedgenode,pos=0]{} (n1);
    \draw[->] (arg1) -- node[eqedgenode,pos=0.4]{} (argi);
    \draw[->] (arg1) -- (1,-2.4) -- node[ngeqedgenode,pos=0]{} (n3);
    
    \draw[->] (argi) -- (-1,-3.3) -- node[gtedgenode,pos=0]{} (n1);
    \draw[->] (argi) -- node[eqedgenode,pos=0.4]{} (argk);
    \draw[->] (argi) -- (1,-3.3) -- node[ngeqedgenode,pos=0]{} (n3);
    
    \draw[->] (argk) -- node[gtedgenode]{} (n1);
    \draw[->] (argk) -- node[eqedgenode,pos=0.4]{} (n2);
    \draw[->] (argk) -- node[ngeqedgenode]{} (n3);
    \end{tikzpicture}
\end{lrbox}

\newsavebox\lpoGreaterBase

\begin{lrbox}{\lpoGreaterBase}
    \begin{tikzpicture}[canvassed]
    \node[node,circle] {$n_1$};
    \end{tikzpicture}
\end{lrbox}

\newsavebox\lpoGreater

\begin{lrbox}{\lpoGreater}
    \begin{tikzpicture}[canvassed]
    \node[termnode] (arg1) at (0,0) {$f(\bar{s})\comp t_1$};
    \node[] (argi) at (0,-1.5) {$\ldots$};
    \node[termnode] (argk) at (0,-3) {$f(\bar{s})\comp t_m$};
    \node[genericnode] (n1) at (0,-4.5) {$n_1$};
    \node[genericnode] (n2) at (1.5,-4.5) {$n_2$};
    \node[genericnode] (n3) at (3,-4.5) {$n_3$};

    \draw[->] (arg1) -- node[gtedgenode]{} (argi);
    \draw[->] (arg1) -- (1.5,-.8) -- node[eqedgenode,pos=0]{} (n3);
    \draw[->] (arg1) -- (2,-.7) -- node[ngeqedgenode,pos=0]{} (n3);

    \draw[->] (argi) -- node[gtedgenode]{} (argk);
    \draw[->] (argi) -- (1,-2.2) -- node[eqedgenode,pos=0]{} (n3);
    \draw[->] (argi) -- (1.5,-2.1) -- node[ngeqedgenode,pos=0]{} (n3);

    \draw[->] (argk) -- node[gtedgenode]{} (n1);
    \draw[->] (argk) -- (1.2,-3.6) -- node[eqedgenode,pos=0]{} (n3);
    \draw[->] (argk) -- (1.7,-3.5) -- node[ngeqedgenode,pos=0]{} (n3);
    \end{tikzpicture}
\end{lrbox}

\newsavebox\lpoLessBase

\begin{lrbox}{\lpoLessBase}
    \begin{tikzpicture}[canvassed]
    \node[node,circle] {$n_3$};
    \end{tikzpicture}
\end{lrbox}

\newsavebox\lpoLess

\begin{lrbox}{\lpoLess}
    \begin{tikzpicture}[canvassed]
    \node[termnode] (arg1) at (3,0) {$s_1\comp g(\bar{t})$};
    \node[] (argi) at (3,-1.5) {$\ldots$};
    \node[termnode] (argk) at (3,-3) {$s_k\comp g(\bar{t})$};
    \node[genericnode] (n1) at (0,-4.5) {$n_1$};
    \node[genericnode] (n2) at (1.5,-4.5) {$n_2$};
    \node[genericnode] (n3) at (3,-4.5) {$n_3$};
    
    \draw[->] (arg1) -- (1,-.7) -- node[gtedgenode,pos=0]{} (n1);
    \draw[->] (arg1) -- (1.5,-.8) -- node[eqedgenode,pos=0]{} (n1);
    \draw[->] (arg1) -- node[midway,ngeqedgenode]{} (argi);
    
    \draw[->] (argi) -- (1.5,-2.1) -- node[gtedgenode,pos=0]{} (n1);
    \draw[->] (argi) -- (2,-2.2) -- node[eqedgenode,pos=0]{} (n1);
    \draw[->] (argi) -- node[midway,ngeqedgenode]{} (argk);
    
    \draw[->] (argk) -- (1.3,-3.5) -- node[gtedgenode,pos=0]{} (n1);
    \draw[->] (argk) -- (1.8,-3.6) -- node[eqedgenode,pos=0]{} (n1);
    \draw[->] (argk) -- node[midway,ngeqedgenode]{} (n3);
    \end{tikzpicture}
\end{lrbox}

\newsavebox\lpoEqualBase

\begin{lrbox}{\lpoEqualBase}
    \begin{tikzpicture}[canvassed]
    \node[node,circle] {$n_2$};
    \end{tikzpicture}
\end{lrbox}

\newsavebox\lpoEqual

\begin{lrbox}{\lpoEqual}
    \begin{tikzpicture}[canvassed]
    \node[termnode] (arg1) at (0,0) {$s_1\comp t_1$};
    \node[termnode] (allgt2) at (-2,-1.5) {$f(\bar{s})\comp t_2$};
    \node[termnode] (arg2) at (0,-1.5) {$s_2\comp t_2$};
    \node[termnode] (alllt2) at (2,-1.5) {$s_2\comp g(\bar{t})$};
    \node[] (allgti) at (-2,-3) {$\ldots$};
    \node[] (argi) at (0,-3) {$\ldots$};
    \node[] (alllti) at (2,-3) {$\ldots$};
    \node[termnode] (allgtk) at (-2,-4.6) {$f(\bar{s})\comp t_k$};
    \node[termnode] (argk) at (0,-4.6) {$s_k\comp t_k$};
    \node[termnode] (allltk) at (2,-4.6) {$s_k\comp g(\bar{t})$};
    \node[genericnode] (n1) at (4,-6.5) {$n_1$};
    \node[genericnode] (n2) at (0,-6.5) {$n_2$};
    \node[genericnode] (n3) at (-4,-6.5) {$n_3$};
    
    
    \draw[->] (arg1) -- node[midway,gtedgenode]{} (allgt2);
    \draw[->] (arg1) -- node[midway,eqedgenode]{} (arg2);
    \draw[->] (arg1) -- node[midway,ngeqedgenode]{} (alllt2);
    
    
    \draw[->] (allgt2) -- node[midway,gtedgenode]{} (allgti);
    \draw[->] (allgt2) -- (-3.8,-2.4) -- node[eqedgenode,pos=0]{} (n3);
    \draw[->] (allgt2) -- (-3.3,-2.5) -- node[ngeqedgenode,pos=0]{} (n3);
    
    \draw[->] (arg2) -- node[midway,gtedgenode]{} (allgti);
    \draw[->] (arg2) -- node[midway,eqedgenode]{} (argi);
    \draw[->] (arg2) -- node[midway,ngeqedgenode]{} (alllti);
    
    \draw[->] (alllt2) -- (3.3,-2.5) -- node[gtedgenode,pos=0]{} (n1);
    \draw[->] (alllt2) -- (3.8,-2.4) -- node[eqedgenode,pos=0]{} (n1);
    \draw[->] (alllt2) -- node[midway,ngeqedgenode]{} (alllti);
    
    
    \draw[->] (allgti) -- node[midway,gtedgenode]{} (allgtk);
    \draw[->] (allgti) -- (-3.2,-3.6) -- node[eqedgenode,pos=0]{} (n3);
    \draw[->] (allgti) -- (-2.7,-3.7) -- node[ngeqedgenode,pos=0]{} (n3);
    
    \draw[->] (argi) -- node[midway,gtedgenode]{} (allgtk);
    \draw[->] (argi) -- node[midway,eqedgenode]{} (argk);
    \draw[->] (argi) -- node[midway,ngeqedgenode]{} (allltk);
    
    \draw[->] (alllti) -- (2.7,-3.7) -- node[gtedgenode,pos=0]{} (n1);
    \draw[->] (alllti) -- (3.2,-3.6) -- node[eqedgenode,pos=0]{} (n1);
    \draw[->] (alllti) -- node[midway,ngeqedgenode]{} (allltk);
    
    
    \draw[->] (allgtk) -- node[midway,gtedgenode]{} (n1);
    \draw[->] (allgtk) -- (-3.3,-5.5) -- node[eqedgenode,pos=0]{} (n3);
    \draw[->] (allgtk) -- (-2.8,-5.5) -- node[ngeqedgenode,pos=0]{} (n3);
    
    \draw[->] (argk) -- node[midway,gtedgenode]{} (n1);
    \draw[->] (argk) -- node[midway,eqedgenode]{} (n2);
    \draw[->] (argk) -- node[midway,ngeqedgenode]{} (n3);
    
    \draw[->] (allltk) -- (2.8,-5.5) -- node[gtedgenode,pos=0]{} (n1);
    \draw[->] (allltk) -- (3.3,-5.5) -- node[eqedgenode,pos=0]{} (n1);
    \draw[->] (allltk) -- node[midway,ngeqedgenode]{} (n3);
    \end{tikzpicture}
\end{lrbox}

\section{\tod{} Transformations}
\label{sec:transformations}
%
%
We explain our \tod{} transformations via a sequence of subgraph replacements and show how \tod{} equivalences are preserved. Replacements are performed while visiting specific nodes $n$ in the \tod, depending on the node label. As such, some replacements are \emph{specific to KBO/LPO} constraints (Figures~\ref{fig:kbo-transformations}--\ref{fig:lpo-transformations}), whereas others are \emph{generic} (Figure~\ref{fig:universal-transformations}), independent of the ordering.  
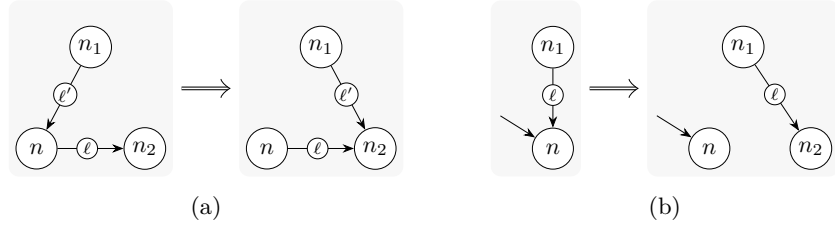
\begin{figure}[t]
    \centering
    \begin{minipage}{.5\linewidth}
    \begin{center}
        \begin{tikzpicture}[node distance=2em]
        \node (one) {\scalebox{0.9}{\usebox\redundantNodeOne}};
        \node[right=of one] (two) {\scalebox{0.9}{\usebox\redundantNodeTwo}};
        \draw[transformedge] (one) -- (two);
        \end{tikzpicture}\\
        (a)
    \end{center}
    \end{minipage}\begin{minipage}{.5\linewidth}
    \begin{center}
        \begin{tikzpicture}[node distance=2em]
        \node (one) {\scalebox{0.9}{\usebox\nodeReplicationOne}};
        \node[right=of one] (two) {\scalebox{0.9}{\usebox\nodeReplicationTwo}};
        \draw[transformedge] (one) -- (two);
        \end{tikzpicture}\\
        (b)
    \end{center}
    \end{minipage}
    \caption{Generic transformations on \tods: (a)  \emph{redundant node removal}, where node $n$ forces label $\ell$; and (b)  \emph{node replication}, where 
    $n_2$ is an exact copy of a non-exit node $n$ that has multiple incoming edges. }
    \label{fig:universal-transformations}
\end{figure}
Figure~\ref{fig:universal-transformations} summarizes our generic \tod{} transformations, as detailed below: 
\begin{itemize}[leftmargin=1.1em]
\item Figure~\ref{fig:universal-transformations}(a) corresponds to a \emph{redundant node removal}, where $n$ forces $\ell$. Simply removing nodes might violate \tod{} properties: as $n$ may have no incoming edges, the existence of a single root node in a \tod{} is not fulfilled. In such cases, we also remove $n$ from the \tod{}. Generally, if a redundant node removal  introduces a node with no incoming edges, we repeatedly remove such nodes. 
\item  Figure~\ref{fig:universal-transformations}(b) shows a \emph{node replication}, where $n$ is a non-exit node with multiple incoming edges and $n_2$ is the exact copy of $n$ with the same outgoing edges. \review{We use node replication to make nodes reachable via one path, simplifying reasoning for redundant node removal (\Cref{sec:retrieval}).}
\end{itemize}
Our \emph{KBO and LPO transformations} on \tods{} are shown Figures~\ref{fig:kbo-transformations} and~\ref{fig:lpo-transformations}, respectively. Here, we denote by $s$ a term $f(s_1,\ldots,s_k)$, also written as $f(\bar{s}$), with $k\geq 0$. Similarly,  
$t$ is a term $g(t_1,\ldots,t_m)$, also written as $g(\bar{t})$, with $m \geq 0$.
Note that in Figure~\ref{fig:lpo-transformations}, if $f\gg g$ and $m = 0$, the node $n_1$ replaces the top node; conversely, in case of $g\gg f$ and $k=0$, the node $n_3$ replaces the top node; finally, if $f=g$ and $k=0$, then the node $n_2$ replaces the top node.
%

\newsavebox\kboTransformations

\begin{lrbox}{\kboTransformations}
    \begin{tikzpicture}[node distance=3.5em and 6em]
    \node (center) {};
    \node[above=of center] (main) {\scalebox{0.8}{\usebox\maincase}};
    \node[left=of center] (case1) {\scalebox{0.8}{\usebox\kboGreater}};
    \node[right=of center] (case2) {\scalebox{0.8}{\usebox\kboLess}};
    \node[yshift=-5.5em] (case3) at (center) {\scalebox{0.8}{\usebox\kboEqual}};

    \draw[transformedge] (main) -| node[near end,fill=white,inner sep=1pt]{\scriptsize\begin{tabular}{c}Case 1: $f\gg g$\end{tabular}} (case1);
    \draw[transformedge] (main) -| node[near end,fill=white,inner sep=1pt]{\scriptsize\begin{tabular}{c}Case 3: $g\gg f$\end{tabular}} (case2);
    \draw[transformedge] (main) -- node[fill=white,inner sep=1pt]{\scriptsize\begin{tabular}{c}Case 2: $f=g$\end{tabular}} (case3);
    \end{tikzpicture}
\end{lrbox}

\newsavebox\lpoTransformations

\begin{lrbox}{\lpoTransformations}
    \begin{tikzpicture}[node distance=4.5em and 6.5em]
    \node (center) {};
    \node[above=of center] (main) {\scalebox{0.8}{\usebox\maincase}};
    \node[below=of center,yshift=6em] (case3) {\scalebox{0.8}{\usebox\lpoEqual}};
    \node[left=of center,yshift=2.5em,draw=white,line width=6pt,inner sep=0pt,rounded corners=1ex] (case1) {\scalebox{0.75}{\usebox\lpoGreater}};
    \node[right=of center,yshift=2.5em,draw=white,line width=6pt,inner sep=0pt,rounded corners=1ex] (case2) {\scalebox{0.75}{\usebox\lpoLess}};

    \draw[transformedge] ($(main.west) + (0,.8)$) -| node[near start,fill=white,inner sep=2pt]{\scriptsize\begin{tabular}{c}Case 1: $f\gg g$\end{tabular}} ($(case1.north) - (.5,0)$);
    \draw[transformedge] ($(main.east) + (0,.8)$) -| node[near start,fill=white,inner sep=2pt]{\scriptsize\begin{tabular}{c}Case 3: $g\gg f$\end{tabular}} ($(case2.north) + (.5,0)$);
    \draw[transformedge] (main) -- node[fill=white,inner sep=2pt]{\scriptsize\begin{tabular}{c}Case 2: $f=g$\end{tabular}} (case3);
    \end{tikzpicture}
\end{lrbox}

\begin{figure}[t]
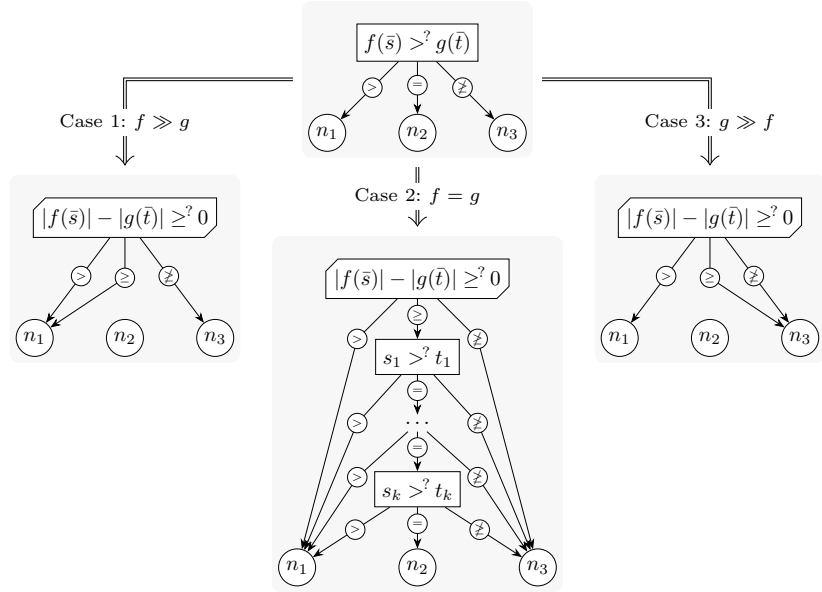

    \centering
    \usebox\kboTransformations
    \caption{KBO transformations on \tods.}
    \label{fig:kbo-transformations}
\end{figure}
\begin{figure}[t]
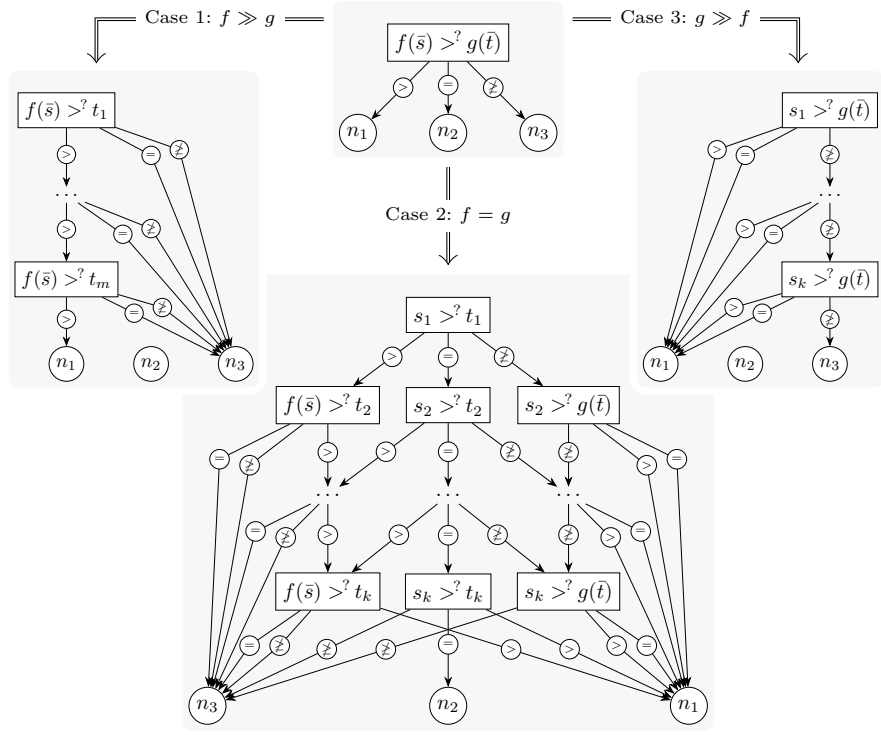

    \centering
    \usebox\lpoTransformations
    \caption{LPO transformations on \tods.}
    \label{fig:lpo-transformations}
\end{figure}


\begin{restatable}[Correctness of \tod{} Transformations]{theorem}{correctness}
(1) Every sequence of \tod{} transformations terminates. (2) Every \tod{} transformation of Figures~\ref{fig:universal-transformations}--\ref{fig:lpo-transformations} preserves \tod{} equivalence.\label{thm:tranformations}
\end{restatable}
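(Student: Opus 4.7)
The plan is to prove the two parts of \Cref{thm:tranformations} separately.

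For part~(1), termination, I will construct a well-founded lexicographic measure on \tods{}. Its main component is the multiset $M(\OD)\defeq\{\{(|s|,|t|) : s\comp t \text{ labels a term comparison node of }\OD\}\}$, ordered by the Dershowitz--Manna multiset extension of the lexicographic order on $\mathbb{N}\times\mathbb{N}$. Each KBO/LPO transformation strictly decreases $M$: KBO cases~1 and~3 delete the pair outright; KBO case~2 and LPO case~2 replace $(|f(\bar s)|,|g(\bar t)|)$ by pairs whose both components are strictly smaller; LPO cases~1 and~3 replace it by pairs whose second (resp.\ first) component is strictly smaller and the other unchanged. The subtle point is node replication, which can duplicate a term comparison and transiently enlarge $M$; I handle this by lexicographically prepending a structural measure, for instance $\sum_v\max(0,\mathrm{indeg}(v)-1)$, that strictly decreases on every node replication (which is only applicable at nodes of in-degree $\geq 2$) and is bounded across any KBO/LPO step by considering the replaced subgraph modulo its fresh internal structure.

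For part~(2), equivalence, I proceed by case analysis on the transformation. Redundant node removal is immediate from the definition of \emph{forces}: every traversal reaching the deleted $n$ was already committed to its unique $\ell$-successor $n_2$, so rerouting $n_1$ directly to $n_2$ changes no success labels, and cascaded removals of unreachable nodes are harmless because unreachable nodes contribute to no traversal. Node replication is immediate because $n_2$ is label-identical to $n$ with the same outgoing edges, so redirecting $n_1\to n$ to $n_1\to n_2$ leaves all downstream behaviour intact.

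For each of the KBO/LPO rules I fix an arbitrary substitution $\sigma$, compute $\sigma(f(\bar s)\comp g(\bar t))\in\{>,=,\ngeq\}$ from the defining clauses, and check that the new subgraph routes the $\sigma$-traversal to the same destination among $n_1,n_2,n_3$. KBO cases~1 and~3 encode \ref{prop:kbo1} via the $>$-edge of the positivity check and \ref{prop:kbo2} via the $\geq$-edge (using that $f\neq g$ blocks $s\sigma=t\sigma$); KBO case~2 encodes \ref{prop:kbo1} by the positivity check and \ref{prop:kbo3} by the lexicographic argument scan. For LPO cases~1 and~3 the crucial observation is that when $f\neq g$, condition \ref{prop:lpo3} implies \ref{prop:lpo2}: from $s_i\succeq_\lpo t$ one derives $s\succ_\lpo s_i\succeq_\lpo t\succ_\lpo t_k$ by the subterm property and transitivity, so the chain $f(\bar s)\comp t_i$ (resp.\ $s_i\comp g(\bar t)$) alone suffices to witness $s\sigma\succ_\lpo t\sigma$.

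The main obstacle is LPO case~2, whose subgraph interleaves three columns of comparisons---$s_i\comp t_i$ (the diagonal for \ref{prop:lpo1}'s first-differing-index scan), $f(\bar s)\comp t_i$ (the left column for the tail requirement $s\succ t_k$ of \ref{prop:lpo1}), and $s_i\comp g(\bar t)$ (the right column for \ref{prop:lpo3})---to jointly encode the disjunction \ref{prop:lpo1}$\vee$\ref{prop:lpo3}. I plan to discharge it by induction on the first index $i$ with $s_i\sigma\neq t_i\sigma$, maintaining as invariant that the column positions at depth $i$ correspond to the running partial verification of \ref{prop:lpo1} and \ref{prop:lpo3} on arguments $1,\ldots,i-1$, and then showing that the exits $n_1$, $n_2$, $n_3$ coincide with $s\sigma\succ_\lpo t\sigma$, $s\sigma=t\sigma$, $s\sigma\not\succeq_\lpo t\sigma$ respectively. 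A subordinate obstacle in the termination argument is justifying the structural-measure component precisely; I expect the cleanest resolution is to treat node replication together with the redundant removals it enables as a single bundled rewrite step, so its net effect on the measure is accounted for in one bundle.
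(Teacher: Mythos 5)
Your Part~(2) is essentially the paper's own route (generic transformations via the ``forces'' relation and path-invariance, then a per-$\sigma$ case analysis showing the replacement subgraph routes the traversal to the same $n_1/n_2/n_3$; your use of the subterm property plus transitivity to show that \ref{prop:lpo3} cannot add successes when $f\gg g$, and that only \ref{prop:lpo3} is live when $g\gg f$, is exactly what the paper does), and the induction you sketch for LPO Case~2 is the right plan, though you still owe the failure subcases where one must argue that neither \ref{prop:lpo1} nor \ref{prop:lpo3} fires. The genuine gap is in Part~(1), and it is concentrated where you yourself flag trouble. Your structural component $\sum_v\max(0,\mathrm{indeg}(v)-1)$ does \emph{not} strictly decrease under node replication: the copy $n_2$ carries the same outgoing edges as $n$, so every successor of $n$ gains an incoming edge, and the net change is $-1+\mathrm{outdeg}(n)$, i.e.\ $+2$ when $n$ is an evaluation node and $0$ when it is a success node. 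Making this quantity the most significant lexicographic component is also incompatible with the KBO/LPO steps, which reroute several edges into $n_1$ and $n_3$ and can therefore increase it; ``bounded modulo fresh internal structure'' does not turn that into a decrease. The fallback of bundling a replication with the removals it ``enables'' proves a different statement: the theorem quantifies over arbitrary sequences of the individual transformations of Figures~\ref{fig:universal-transformations}--\ref{fig:lpo-transformations}, and in the retrieval algorithm a replication may be followed merely by marking the node visited, with no removal. Note also that redundant node removal in its general form (where $n$ keeps other incoming edges and is not deleted) changes neither your node multiset $M$ nor the indegree excess, so nothing strictly drops there either. A further, repairable, flaw: the pairs $(|s|,|t|)$ are not in $\mathbb{N}\times\mathbb{N}$ for non-ground terms ($|\cdot|$ is a linear expression), and even on ground terms ``both components strictly smaller'' in KBO Case~2 fails in the presence of a zero-weight unary symbol; term size, or the paper's multisets $\setof{s,t}$ under the multiset extension of $\succ$, fixes this.

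The paper avoids the replication problem altogether by measuring a \tod{} by the multiset of $\mu(\pi)$ over \emph{paths from the root}, where $\mu$ collects the compared terms along a path: node replication leaves this multiset unchanged (it only un-shares nodes, producing the same paths), every KBO/LPO step and every redundant node removal strictly decreases it, and pure replication sequences are handled by a separate finiteness argument (replication drives the dag, minus the exit node, towards a tree). To salvage your node-based measure you would need either to switch to this path-based measure or to find a component that genuinely decreases under replication while being preserved by the other steps; as written, the termination argument does not go through.
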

\begin{proof}
(1) For proving 
\emph{termination}, we  introduce a well-founded order on TODs and show that every transformation replaces a TOD by a smaller one. In the proof, we use (well-founded) finite multiset extensions of (well-founded) orders.

We first introduce a mapping $\mu$ from nodes to finite multisets of terms as:
\begin{enumerate}
    \item[(i)] If $n$ is a term comparison node $s\comp t$, then $\mu(n)$ is the multiset 
    $\setof{s,t}$.
    \item[(ii)] For any other node $n$, we define $\mu(n)$ to be the empty multiset.
\end{enumerate}
Let us also define an order $>_\mu$ on nodes as follows: $n_2 >_\mu n_1$ if $\mu(n_2)$ is greater than $\mu(n_1)$ in the multiset extension of the order $\succ$ on terms. Note that $>_\mu$ is well-founded, since $\mu$ embeds it in the multiset extension of a well-founded order.

For every path $\pi$ in a \tod, we denote by $\mu(\pi)$ the multiset consisting of elements $\mu(n)$ for all nodes $n$ in $\pi$. We define an ordering, also denoted by $>_\mu$, on paths as follows: $\pi_2 >_\mu \pi_1$ if $\mu(\pi_2)$ is greater than $\mu(\pi_1)$ in the multiset extension of the order $>_\mu$ on nodes. Again,  $>_\mu$ is well-founded, since it can be embedded to the multiset extension of a well-founded order.

Finally, for a \tod{} $\OD$, we denote by $\mu(\OD)$ the multiset consisting of all multisets $\mu(\pi)$, where $\pi$ is a path from the root node in $\OD$. We also define an ordering $>_\mu$ on \tods{} by letting $\OD_2 >_\mu \OD_1$ if $\mu(\OD_2) > \mu(\OD_1)$. Using the same arguments as before, we conclude that $>_\mu$ on TODs is well-founded, too.

For every transformation apart from node replication that changes a \tod{} $\OD$ to a 
\tod{} $\OD'$, we have that $\OD >_\mu \OD'$, which implies termination. The proof is by routine 
inspection of transformations. For example for the case $f = g$ of the LPO transformations 
(last transformation of Figure~\ref{fig:lpo-transformations}), we replace a path with a term comparison node $f(\bar{s}) \comp f(\bar{t})$ by a 
finite number of paths, so that every new term comparison node on these paths contains a 
comparison of a pair of terms strictly smaller in the multiset order than the multiset $\setof{f(\bar{s}),f(\bar{t})}$. 
Another example is the redundant node removal of Figure~\ref{fig:universal-transformations}. 
This transformation replaces on some paths nodes $n_1,n,n_2$ by $n_1,n_2$, which results in a smaller multiset.

Finally, we note that node replication results in a \tod{} containing exactly the same multiset of paths, but it can only be applied a finite number of times. 

\noindent{\emph{(2) \tod{} equivalence.}} 
The \tod{} transformations of Figure~\ref{fig:kbo-transformations}--\ref{fig:lpo-transformations} preserve \tod{} equivalence by properties of KBO/LPO. 
Redundant node removal in Figure~\ref{fig:universal-transformations}(a)  ensures equivalence by the ``forces'' relation (Definition~\ref{def:tod:traversal:foces}). Node replication in Figures~\ref{fig:universal-transformations}(b) preserves equivalence since the set of \tod{} paths does not change. \QED
\end{proof}

\section{TOD Retrieval and Maintenance}
\label{sec:retrieval}

TOD is a data structure intended to be an index~\cite{TermIndexing} for retrieval of equalities, which become ordered by the query substitution. In this section we describe the three main operations on the \tod{} index: \emph{insertion, deletion and retrieval}. An interesting feature of TODs is that the main modifications of them occur not during insertion but during retrieval. As mentioned in Section~\ref{sec:term_ordering_diagrams}, we modify the TOD $\OD$ lazily, when we traverse it for a query substitution $\sigma$, resulting in an equivalent TOD $\OD'$. The new TOD $\OD'$ is less expensive to traverse, since it either removes checks performed at nodes (redundant node removal), or replaces them by simpler ones (all transformations of Figures~\ref{fig:kbo-transformations}--\ref{fig:lpo-transformations}). Subsequent retrievals are then performed on the TOD $\OD'$ instead of $\OD$.

\begin{wrapfigure}[8]{r}[0pt]{0.45\textwidth}
\vspace{-3em}
\centering
\begin{tikzpicture}[node distance=2em]
\node (one) {\scalebox{0.9}{\usebox\insertionOne}};
\node[right=of one] (two) {\scalebox{0.9}{\usebox\insertionTwo}};
\draw[transformedge] (one) -- (two);
\end{tikzpicture}
\vspace{-1em}
\caption{Insertion of $l\eqs r$}
\label{fig:insertion}
\end{wrapfigure}

\paragraph{Index operations.} The \emph{insertion} of an equality $l \eqs r$ is simple: we insert a term comparison node just before the exit node, as shown in Figure~\ref{fig:insertion}.

The \emph{deletion} of an equality $l \eqs r$ is not performed. Instead of doing the deletion, we simply memorize that $l \eqs r$ is to be deleted. In practice, we do not even have to do this, since the clause containing $l \eqs r$ will be marked as deleted anyhow. 

A \emph{retrieval} from a TOD consists of traversal, possibly interleaved with TOD transformations (Section~\ref{sec:transformations}). We next describe our \tod{} {retrieval} algorithm. 

\paragraph{\tod{} retrieval algorithm.}
We  introduce a notion meant to capture sufficient conditions for the (expensive) ``forced" relation
\cameraready{\footnote{We discuss forcing functions implemented in \vampire{} in the extended version of this paper~\cite{FullPaper}.}}
\fullversion{\footnote{We discuss forcing functions implemented in \vampire{} in the Appendix.}}
, in extension of Definition~\ref{def:tod:traversal:foces}. 

\begin{Definition}[Forcing function\label{def:forcing:function}]
Let 
$\sigma$ be a substitution. We call a path $\pi$ a \emph{$\sigma$-path}, if for every edge from a node $n$ to a node $n'$ in it labeled by $\ell$, (1) if $n$ is a term comparison node $s \comp t$, then $\sigma(s \comp t) = \ell$, and (2) if $n$ is a positivity check node $e \pos 0$, then $\sigma(e \pos 0) = \ell$.
We call a partial function $F$ from paths to edge labels a \emph{forcing function} if it has the following property. For every substitution $\sigma$ and every $\sigma$-path $\pi = n_0,\ldots,n_i,n_{i+1}$ from the root, if $F(n_0,\ldots,n_i) = \ell$, then the edge in $\pi$ from $n_i$ to $n_{i+1}$ is labeled with $\ell$.\QED
\end{Definition}

Our retrieval algorithm is parametrized by a forcing function $F$. 
During the retrieval, we mark some nodes \emph{visited}. Once a node $n$ is marked visited, the path from the root to $n$ will never change.  
Initially, all nodes in a TOD are unvisited. Let $\OD$ be a TOD and $\sigma$ a substitution. The traversal starts in the successor of the root node and repeatedly applies steps shown in Figure~\ref{fig:retrieval-steps}. When we describe the steps, we assume that during the traversal we already followed a path $\pi = n_0,n_1,\ldots, n_k,n$ leading to the current node $n$.

\begin{figure}[t]

\noindent\hrulefill

\begin{enumerate}[leftmargin=1.3em]
    \item If $n$ is the exit node, we terminate and return the success set of $\sigma$ for $\OD$ (the set of all non-deleted equalities in success nodes visited during the traversal).\label{case:exit}

    \item If $n$ is a success node, we set $n$ to the successor node of $n$.
    Alternatively, if we are interested in only one candidate, we can terminate once we reached the first success node with a non-deleted equality.\label{case:success}

    \item Let $n$ be an unvisited evaluation node. If $n$ has more than one incoming edge, we first apply node replication so that the only incoming edge to $n$ is from $n_k$. Then, \label{case:visited}

    \begin{enumerate}
        \item If $F(\pi) = \ell$, we apply the transformation of Figure~\ref{fig:universal-transformations} 
        (a).\label{case:redundant-node}
        
       \item If any of the transformations of Figures~\ref{fig:kbo-transformations} and \ref{fig:lpo-transformations} applies to $n$, we apply this transformation.\label{case:non-visited}

       \item Otherwise, we mark $n$ visited.\label{case:mark-visited}
    \end{enumerate}

    \item If $n$ is a visited term comparison node containing $s \comp t$, we follow the edge $\sigma(s \comp t)$ from $n$. \label{case:follow-term-node}
    
    \item If $n$ is a visited positivity check node containing $e \pos 0$, we follow the edge $\sigma(e \pos 0)$ from $n$.\label{case:follow-positivity}

\end{enumerate}

    \noindent\hrulefill

    \caption{Retrieval algorithm steps.}
    \label{fig:retrieval-steps}
\end{figure}

Let us now establish some properties of the algorithm,  proving its correctness. 

\begin{lemma}\label{lem:visited}%
    If $n$ is a visited node, there is a single path from the root to $n$.
\end{lemma}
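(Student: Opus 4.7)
I will prove the lemma by induction on the time at which nodes are marked visited. The key observation is that a node $n$ becomes visited only through step~\ref{case:mark-visited} of the retrieval algorithm, and that step is guarded by step~\ref{case:visited}, which has just executed node replication (Figure~\ref{fig:universal-transformations}(b)) whenever $n$ had more than one incoming edge. Consequently, at the instant $n$ becomes visited, it has exactly one incoming edge, namely the edge from $n_k$ on the current traversal path $\pi = n_0,\ldots,n_k,n$.

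Given the above, to finish the induction it suffices to argue that $n_k$ already has a unique path from the root. The previous node $n_k$ in the traversal must itself be visited: the transition from $n_k$ to $n$ was taken either via step~\ref{case:follow-term-node} or step~\ref{case:follow-positivity}, both of which require $n_k$ to be a visited evaluation node; or it was taken via step~\ref{case:success} from a success node whose own predecessor is a visited evaluation node. In each scenario, by the inductive hypothesis applied to $n_k$ (or to the predecessor of a chain of success nodes, whose single outgoing edges trivially preserve uniqueness), there is a unique path from the root to $n_k$, and extending it by the single edge $n_k \to n$ yields the unique path to $n$.

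Finally, I must verify that the claim survives subsequent operations on the \tod{}. The transformations of Figures~\ref{fig:kbo-transformations}--\ref{fig:lpo-transformations} and redundant node removal are applied only to unvisited evaluation nodes, so they never remove or alter a visited node directly. Node replication preserves the multiset of root-to-node paths, as already observed in the termination argument of Theorem~\ref{thm:tranformations}. Insertions attach fresh nodes only immediately above the exit, again leaving visited nodes and the paths leading to them untouched. Thus the single path established when $n$ was marked visited persists afterwards, matching the remark preceding the lemma that \emph{``once a node $n$ is marked visited, the path from the root to $n$ will never change.''}

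The principal obstacle in a fully detailed argument is the last point: ruling out that a transformation or node replication performed on a neighbour of a visited node can, through its effect on shared successor structure, introduce a second path to an already-visited node. I would address this by strengthening the invariant to say that every ancestor of a visited node is itself visited (established by the same induction, using the fact that $n_k$ is visited whenever $n$ becomes visited), so that the unvisited region on which transformations operate is always disjoint from the ancestors of any visited node, ensuring uniqueness is preserved.
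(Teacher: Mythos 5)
Your proof is correct and takes essentially the same route as the paper's own argument: induction over the events at which nodes are marked visited, using the node replication performed in step~\ref{case:visited} to guarantee that $n$ has a single incoming edge from $n_k$ at that moment, together with the observation that no later step adds an incoming edge to or alters a visited node. The strengthened invariant you propose at the end (every ancestor of a visited node is itself visited) is left implicit in the paper's much terser proof, but it is precisely what justifies the paper's claim that subsequent transformations---which are applied only to unvisited nodes---cannot introduce a second path to an already-visited node.
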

\begin{proof}
    When we make a node visited at step~\ref{case:mark-visited} of Figure~\ref{fig:retrieval-steps}, by induction we can assume that $n_k$ already has this property. Since after this step there is only one edge to $n$, and this edge is from $n_k$, 
    $n$  satisfies this property.
    None of the other steps adds an incoming edge to a visited node or changes the content of a visited node, so $n_0,\ldots,n_k,n$ remains the only path to $n$. \QED
\end{proof}

\begin{lemma}\label{lem:forcing:function}
    Step \ref{case:redundant-node} of Figure~\ref{fig:retrieval-steps} transforms the TOD into an equivalent one. 
\end{lemma}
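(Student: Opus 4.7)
The plan is to establish TOD equivalence between the TOD $\OD$ just before step~\ref{case:redundant-node} is applied and the TOD $\OD'$ produced by the transformation, by matching the traversals of an arbitrary query substitution $\sigma'$. The setup is as follows: the retrieval has reached an unvisited evaluation node $n$ along the path $\pi = n_0,\ldots,n_k,n$ from the root, and by the preamble of step~\ref{case:visited} node replication has already been applied, so in $\OD$ the node $n$ has $n_k$ as its unique predecessor. The transformation of Figure~\ref{fig:universal-transformations}(a) reroutes the edge $n_k\to n$ so that it points directly to the successor $n_2$ of $n$ along the outgoing edge labeled $\ell = F(\pi)$; the node $n$ is then detached and removed, while the rest of the TOD, including the subgraph rooted at $n_2$, is unchanged.

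I would then fix an arbitrary $\sigma'$ and compare the traversals. Because node replication has made $\pi$ the only path from the root to $n$ in $\OD$, the traversal of $\sigma'$ in $\OD$ either avoids $n$ entirely, in which case it uses only edges that are identical in $\OD'$, or it begins with $n_0,\ldots,n_k,n,n'$ for some node $n'$. In the latter case the prefix $n_0,\ldots,n_k,n,n'$ is a $\sigma'$-path by the very definition of a traversal, so applying the defining property of a forcing function to $F(\pi) = \ell$ forces the edge $n\to n'$ to carry label $\ell$, which pins $n' = n_2$. The corresponding traversal in $\OD'$ therefore takes the rerouted edge $n_k\to n_2$ and continues with exactly the same suffix as in $\OD$.

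The two traversals now differ only in whether $n$ is present; since $n$ is an evaluation node, not a success node, its removal does not change the set of success-node labels encountered, so the success sets of $\sigma'$ in $\OD$ and in $\OD'$ coincide, and this is precisely TOD equivalence (Definition~\ref{def:tod-equiv}). The main subtlety I expect is making the uniqueness of the path from the root to $n$ fully explicit: this rests on node replication having already normalized the incoming edges of $n$ and on the fact that all graph surgery happens strictly downstream of $n_k$, so that the suffix from $n_2$ onward is literally the same subgraph in both TODs and no other node's successor structure is disturbed.
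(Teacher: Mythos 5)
Your argument is correct and takes essentially the same route as the paper: uniqueness of the path from the root to $n$ plus the forcing-function property yields that $n$ forces $\ell$, so the step is an instance of redundant node removal, whose preservation of the success sets you verify directly by matching traversals instead of citing Theorem~\ref{thm:tranformations}. The only imprecision is attributing the uniqueness of the path to node replication alone; replication only normalizes the incoming edge of $n$, and the uniqueness of the prefix up to $n_k$ is exactly what Lemma~\ref{lem:visited} (a visited node has a single path from the root) supplies.
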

\begin{proof}
    By Lemma~\ref{lem:visited}, there is only one path from the root to $n$. By Definition~\ref{def:forcing:function}, it follows that $n$ forces $\ell$, so this step is a special case of redundant mode removal from Figure~\ref{fig:universal-transformations}(a), which preserves \tod{} equivalence by Theorem~\ref{thm:tranformations}. \QED
\end{proof}

\begin{lemma}[Termination]
    The retrieval algorithm terminates.
\end{lemma}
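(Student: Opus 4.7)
The plan is to partition the iterations of the retrieval algorithm into two kinds and bound each kind separately. Call an iteration a \emph{transformation iteration} if it performs the node replication implicit in step~3, the redundant-node removal of step~3(a), or a KBO/LPO transformation of step~3(b); call it an \emph{advance iteration} otherwise, covering steps~1,~2,~3(c),~4, and~5. Transformation iterations modify the current \tod{} but need not change the current node, while advance iterations leave the \tod{} unchanged.

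For the bound on transformation iterations, I would appeal directly to part~(1) of Theorem~\ref{thm:tranformations}: every sequence of \tod{} transformations terminates. Since advance iterations do not alter the \tod, stripping them out of any execution yields a sequence of consecutive \tod{} transformations starting from the input \tod. Theorem~\ref{thm:tranformations} therefore bounds this sequence, already handling the subtlety that node replication does not strictly decrease the well-founded measure $>_\mu$ (the cited proof observes that node replication may be applied only finitely often in any sequence).

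For the bound on advance iterations occurring between two consecutive transformation iterations (or before the first, or after the last), the \tod{} is a fixed finite DAG. Each such iteration either follows an outgoing edge (steps~2,~4, or~5), strictly moving the current node toward the unique exit, or marks an unvisited evaluation node as visited (step~3(c)), after which the node is visited and the very next iteration necessarily dispatches to step~4 or~5 and follows an edge. Hence consecutive current-node positions trace a simple path in a finite acyclic graph, whose length is bounded by the longest root-to-exit path of the fixed \tod{} and is therefore finite.

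The main obstacle is the bookkeeping around step~3(c): one must verify that mark-visited cannot be applied twice at the same current node and that control necessarily moves to an edge-following step in the next iteration. This is immediate from Figure~\ref{fig:retrieval-steps}, since steps~3(a),~3(b), and~3(c) apply only when the current node is \emph{unvisited}, and after marking visited the node matches precisely step~4 or~5. Combining the two finite bounds, the total number of iterations in any execution is finite, so the algorithm eventually reaches the exit node and step~1 halts.
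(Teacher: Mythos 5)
Your proof is correct and takes essentially the same route as the paper's: both bound the transformation steps by Theorem~\ref{thm:tranformations}(1) and the remaining steps (edge-following and marking visited) by the finiteness and acyclicity of the current \tod{}. Your extra bookkeeping about advance iterations between consecutive transformations merely spells out what the paper leaves implicit.
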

\begin{proof}
    Straightforward by Lemma~\ref{lem:forcing:function} and Theorem~\ref{thm:tranformations}. Indeed, all transformations made during the retrieval are special cases of TOD transformations, so we can only make a finite number of them. All other steps of the algorithm either change the current node to its successor in the TOD, or mark an unvisited node as visited. Since any TOD is a dag, we can only make a finite number of such steps. \QED
\end{proof}

\begin{lemma}[Correctness]
    The TOD $\OD'$ resulting from the retrieval is equivalent to the TOD $\OD$ before the retrieval.
\end{lemma}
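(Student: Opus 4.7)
The plan is to argue by induction on the number of steps performed by the retrieval algorithm, showing that every step either leaves the TOD unchanged or replaces it by an equivalent one; equivalence of TODs is clearly transitive (from Definition~\ref{def:tod-equiv}), so composing the per-step equivalences yields $\OD \equiv \OD'$.

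The base case is trivial, since initially the traversal has not modified anything. For the inductive step, I would do a case analysis on which of the five steps of Figure~\ref{fig:retrieval-steps} is executed. Steps~\ref{case:exit}, \ref{case:success}, \ref{case:follow-term-node} and~\ref{case:follow-positivity} are pure traversal steps: they only advance the current node along an existing edge and do not alter the graph, so the TOD is literally unchanged. The substantive case is step~\ref{case:visited}, where we first possibly apply node replication and then dispatch on substeps~\ref{case:redundant-node}--\ref{case:mark-visited}.

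For step~\ref{case:visited}, I would handle each substep separately. The optional node replication at the start is a special case of the transformation in Figure~\ref{fig:universal-transformations}(b) and hence preserves equivalence by Theorem~\ref{thm:tranformations}. Substep~\ref{case:redundant-node} is exactly the content of Lemma~\ref{lem:forcing:function}, which gives equivalence. Substep~\ref{case:non-visited} applies one of the KBO/LPO transformations of Figures~\ref{fig:kbo-transformations}--\ref{fig:lpo-transformations}, which are covered by Theorem~\ref{thm:tranformations}. Substep~\ref{case:mark-visited} only changes the ``visited'' flag on a node, which is auxiliary bookkeeping and does not appear in Definition~\ref{def:tod} or Definition~\ref{def:tod:traversal:foces}, so the TOD (as a labeled dag) and all success sets are unaffected.

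The only mild obstacle is justifying that node replication, when applied before substep~\ref{case:redundant-node} or~\ref{case:non-visited}, does not interfere with the applicability of the subsequent transformation: after replication, $n$ has a single incoming edge from $n_k$, and its outgoing edges and label are unchanged, so the preconditions of the forcing step (single path from the root, invoked via Lemma~\ref{lem:visited}) and of the KBO/LPO rewrites (which only inspect the label of $n$) remain satisfied. Combining these observations, each iteration of the loop preserves equivalence, and by induction on the number of iterations (finite by the Termination lemma) we obtain $\OD \equiv \OD'$. \QED
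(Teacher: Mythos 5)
Your proof is correct and follows essentially the same route as the paper: it reduces each modification made during retrieval to one of the transformations of Figures~\ref{fig:universal-transformations}--\ref{fig:lpo-transformations} (with step~\ref{case:redundant-node} justified via Lemma~\ref{lem:forcing:function}) and invokes Theorem~\ref{thm:tranformations} for equivalence. The explicit induction over steps, transitivity of equivalence, and the remarks about visited-marking and node replication merely spell out bookkeeping the paper leaves implicit.
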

\begin{proof}
    By Lemma~\ref{lem:forcing:function}, step~\ref{case:redundant-node} of Figure~\ref{fig:retrieval-steps} is a special case of redundant node removal.  All other transformations in Figure~\ref{fig:retrieval-steps} are special cases of TOD transformations, which preserve equivalence by Theorem~\ref{thm:tranformations}. \QED
\end{proof}

\newsavebox\retrievalExampleOne

\begin{lrbox}{\retrievalExampleOne}
    \begin{tikzpicture}[canvassed]
    \coordinate (sourceC) at (0,.6);
    \coordinate (c1C) at (0,0);

    \draw[traceedge,line cap=round] (sourceC) edge (c1C);
    \node[currtermnode] at (c1C) {$f(x,y)\comp f(y,x)$};

    \node[termnode,unprocessed] (c1) at (c1C) {$f(x,y)\comp f(y,x)$};
    \node[datanode,unprocessed] (s1) at (0,-1.4) {$f(x,y)\eqs f(y,x)$};

    \draw[->] (sourceC) -- (c1);
    \draw[->] (c1) -- node[midway,gtedgenode]{} (s1);
    \end{tikzpicture}
\end{lrbox}

\newsavebox\retrievalExampleTwo

\begin{lrbox}{\retrievalExampleTwo}
    \begin{tikzpicture}[canvassed]
    \coordinate (sourceC) at (0,.6);
    \coordinate (c1C) at (0,0);

    \draw[traceedge,line cap=round] (sourceC) edge (c1C);
    \node[currpolynode] at (c1C) {$0\pos 0$};

    \node[polynode,unprocessed] (c1) at (c1C) {$0\pos 0$};
    \node[termnode,unprocessed] (c2) at (2,0) {$x\comp y$};
    \node[termnode,unprocessed] (c3) at (4,0) {$y\comp x$};
    \node[datanode,unprocessed] (s1) at (2,-1.4) {$f(x,y)\eqs f(y,x)$};

    \draw[->] (sourceC) -- (c1);
    \draw[->] (c1) -- node[gtedgenode]{} (s1);
    \draw[->] (c1) -- node[geqedgenode]{} (c2);

    \draw[->] (c2) -- node[gtedgenode]{} (s1);
    \draw[->] (c2) -- node[eqedgenode]{} (c3);
    
    \draw[->] (c3) -- node[gtedgenode]{} (s1);
    
    \end{tikzpicture}
\end{lrbox}

\newsavebox\retrievalExampleThree

\begin{lrbox}{\retrievalExampleThree}
    \begin{tikzpicture}[canvassed]
    \coordinate (sourceC) at (0,.6);
    \coordinate (c1C) at (0,0);
    
    \draw[traceedge,line cap=round] (sourceC) edge (c1C);
    \node[currtermnode] at (c1C) {$x\comp y$};

    \node[termnode,unprocessed] (c1) at (c1C) {$x\comp y$};
    \node[termnode,unprocessed] (c2) at (2,0) {$y\comp x$};
    \node[datanode,unprocessed] (s1) at (1,-1.2) {$f(x,y)\eqs f(y,x)$};

    \draw[->] (sourceC) -- (c1);

    \draw[->] (c1) -- node[gtedgenode]{} (s1);
    \draw[->] (c1) -- node[eqedgenode]{} (c2);

    \draw[->] (c2) -- node[gtedgenode]{} (s1);
    
    \end{tikzpicture}
\end{lrbox}

\newsavebox\retrievalExampleThreePrime

\begin{lrbox}{\retrievalExampleThreePrime}
    \begin{tikzpicture}[canvassed]
    \coordinate (sourceC) at (0,.6);
    \coordinate (c1C) at (0,0);
    
    \draw[traceedge,line cap=round] (sourceC) edge (c1C);
    \node[currtermnode] at (c1C) {$x\comp y$};

    \node[termnode,processed] (c1) at (c1C) {\contour{white}{$x\comp y$}};
    \node[termnode,unprocessed] (c2) at (2,0) {$y\comp x$};
    \node[datanode,unprocessed] (s1) at (1,-1.2) {$f(x,y)\eqs f(y,x)$};

    \draw[->] (sourceC) -- (c1);

    \draw[->] (c1) -- node[gtedgenode]{} (s1);
    \draw[->] (c1) -- node[eqedgenode]{} (c2);

    \draw[->] (c2) -- node[gtedgenode]{} (s1);
    
    \end{tikzpicture}
\end{lrbox}

\newsavebox\retrievalExampleFour

\begin{lrbox}{\retrievalExampleFour}
    \begin{tikzpicture}[canvassed]
    \coordinate (sourceC) at (0,.6);
    \coordinate (c1C) at (0,0);
    \coordinate (c2C) at (2,0);

    \draw[traceedge,line cap=round] (sourceC) edge (c1C);
    \node[currtermnode] at (c1C) {$x\comp y$};
    \draw[traceedge,line cap=round] (c1C) edge (c2C);
    \node[currtermnode] at (c2C) {$y\comp x$};

    \node[termnode,processed] (c1) at (c1C) {\contour{white}{$x\comp y$}};
    \node[termnode,unprocessed] (c2) at (c2C) {$y\comp x$};
    \node[datanode,unprocessed] (s1) at (1,-1.2) {$f(x,y)\eqs f(y,x)$};

    \draw[->] (sourceC) -- (c1);

    \draw[->] (c1) -- node[gtedgenode]{} (s1);
    \draw[->] (c1) -- node[eqedgenode]{} (c2);

    \draw[->] (c2) -- node[gtedgenode]{} (s1);
    \end{tikzpicture}
\end{lrbox}

\newsavebox\retrievalExampleFive

\begin{lrbox}{\retrievalExampleFive}
    \begin{tikzpicture}[canvassed]
    \coordinate (sourceC) at (0,.6);
    \coordinate (c1C) at (0,0);
    \coordinate (exitC) at (2.3,0);

    \draw[traceedge,line cap=round] (sourceC) edge (c1C);
    \node[currtermnode] at (c1C) {$x\comp y$};
    \draw[traceedge,line cap=round] (c1C) edge (exitC);
    \node[currexitnode] at (exitC) {};

    \node[termnode,processed] (c1) at (c1C) {\contour{white}{$x\comp y$}};
    \node[sinknode] (exit) at (exitC) {};
    \node[datanode,unprocessed] (s1) at (1,-1.2) {$f(x,y)\eqs f(y,x)$};

    \draw[->] (sourceC) -- (c1);

    \draw[->] (c1) -- node[eqedgenode]{} (exit);
    \draw[->] (c1) -- node[gtedgenode]{} (s1);
    \end{tikzpicture}
\end{lrbox}

\newsavebox\retrievalExampleSix

\begin{lrbox}{\retrievalExampleSix}
    \begin{tikzpicture}[canvassed]
    \coordinate (sourceC) at (0,.6);
    \coordinate (c1C) at (0,0);
    \coordinate (s1C) at (.6,-1.3);
    \coordinate (c2C) at (3.3,0);

    \draw[traceedge,black!5,line cap=round] (sourceC) edge (c1C);
    \node[currtermnode,black!5] at (c1C) {$x\comp y$};
    \draw[traceedge,black!5,line cap=round] (c1C) edge (s1C);
    \node[currsuccessnode,black!5] at (s1C) {$f(x,y)\eqs f(y,x)$};
    \draw[traceedge,black!5,line cap=round] (s1C) edge (c2C);
    \node[currtermnode,black!5] at (c2C) {$f(x,y)\comp f(x,x)$};

    \node[termnode,processed] (c1) at (c1C) {\contour{white}{$x\comp y$}};
    \node[datanode,unprocessed] (s1) at (s1C) {$f(x,y)\eqs f(y,x)$};
    \node[termnode,unprocessed] (c2) at (c2C) {$f(x,y)\comp f(x,x)$};
    \node[datanode,unprocessed] (s2) at (3.5,-1.3) {$f(x,y)\eqs f(x,x)$};

    \draw[->] (sourceC) -- (c1);
    \draw[->] (c1) -- node[midway,gtedgenode]{} (s1);
    \draw[->] (c1) edge[bend left=10] node[midway,eqedgenode]{} (c2);
    \draw[->] (c1) edge[bend right=10] node[midway,ngeqedgenode]{} (c2);
    \draw[->] (s1) -- (c2);
    \draw[->] (c2) -- node[midway,gtedgenode]{} (s2);
    
    \end{tikzpicture}
\end{lrbox}

\newsavebox\retrievalExampleSeven

\begin{lrbox}{\retrievalExampleSeven}
    \begin{tikzpicture}[canvassed]
    \coordinate (sourceC) at (0,.6);
    \coordinate (c1C) at (0,0);
    \coordinate (s1C) at (.6,-1.3);
    \coordinate (c2C) at (3.3,0);

    \draw[traceedge,line cap=round] (sourceC) edge (c1C);
    \node[currtermnode] at (c1C) {$x\comp y$};
    \draw[traceedge,line cap=round] (c1C) edge (s1C);
    \node[currsuccessnode] at (s1C) {$f(x,y)\eqs f(y,x)$};
    \draw[traceedge,line cap=round] (s1C) edge (c2C);
    \node[currtermnode] at (c2C) {$f(x,y)\comp f(x,x)$};

    \node[termnode,processed] (c1) at (c1C) {\contour{white}{$x\comp y$}};
    \node[datanode,processed] (s1) at (s1C) {\contour{white}{$f(x,y)\eqs f(y,x)$}};
    \node[termnode,unprocessed] (c2) at (c2C) {$f(x,y)\comp f(x,x)$};
    \node[datanode,unprocessed] (s2) at (3.5,-1.3) {$f(x,y)\eqs f(x,x)$};

    \draw[->] (sourceC) -- (c1);
    \draw[->] (c1) -- node[midway,gtedgenode]{} (s1);
    \draw[->] (c1) edge[bend left=10] node[midway,eqedgenode]{} (c2);
    \draw[->] (c1) edge[bend right=10] node[midway,ngeqedgenode]{} (c2);
    \draw[->] (s1) -- (c2);
    \draw[->] (c2) -- node[midway,gtedgenode]{} (s2);
    
    \end{tikzpicture}
\end{lrbox}

\newsavebox\retrievalExampleEight

\begin{lrbox}{\retrievalExampleEight}
    \begin{tikzpicture}[canvassed]
    \coordinate (sourceC) at (0,.6);
    \coordinate (c1C) at (0,0);
    \coordinate (s1C) at (0,-1.4);
    \coordinate (c2C) at (3,0);

    \draw[traceedge,line cap=round] (sourceC) edge (c1C);
    \node[currtermnode] at (c1C) {$x\comp y$};
    \draw[traceedge,line cap=round] (c1C) edge (s1C);
    \node[currsuccessnode] at (s1C) {$f(x,y)\eqs f(y,x)$};
    \draw[traceedge,line cap=round] (s1C) edge (c2C);
    \node[currpolynode] at (c2C) {$y-x\pos 0$};

    \node[termnode,processed] (c1) at (c1C) {\contour{white}{$x\comp y$}};
    \node[datanode,processed] (s1) at (s1C) {\contour{white}{$f(x,y)\eqs f(y,x)$}};
    \node[polynode,processed] (c2) at (c2C) {\contour{white}{$y-x\pos 0$}};
    \node[termnode,unprocessed] (c3) at (6,0) {$x\comp x$};
    \node[termnode,unprocessed] (c4) at (6,-1.4) {$y\comp x$};
    \node[datanode,unprocessed] (s2) at (3,-1.4) {$f(x,y)\eqs f(x,x)$};

    \draw[->] (sourceC) -- (c1);
    \draw[->] (c1) -- node[midway,gtedgenode]{} (s1);
    \draw[->] (c1) edge[bend left=10] node[midway,eqedgenode]{} (c2);
    \draw[->] (c1) edge[bend right=10] node[midway,ngeqedgenode]{} (c2);
    \draw[->] (s1) -- (c2);
    \draw[->] (c2) -- node[midway,gtedgenode]{} (s2);
    \draw[->] (c2) -- node[midway,geqedgenode]{} (c3);
    \draw[->] (c3) -- node[midway,gtedgenode]{} (s2);
    \draw[->] (c3) -- node[midway,eqedgenode]{} (c4);
    \draw[->] (c4) -- node[midway,gtedgenode]{} (s2);
    
    \end{tikzpicture}
\end{lrbox}
%
\makeatletter
\newcommand{\subdiag}[2]{%
   \protected@write \@auxout {}{\string \newlabel {#1}{{#2}{\thepage}{#2}{#1}{}} }%
   \hypertarget{#1}{#2}
}
\makeatother
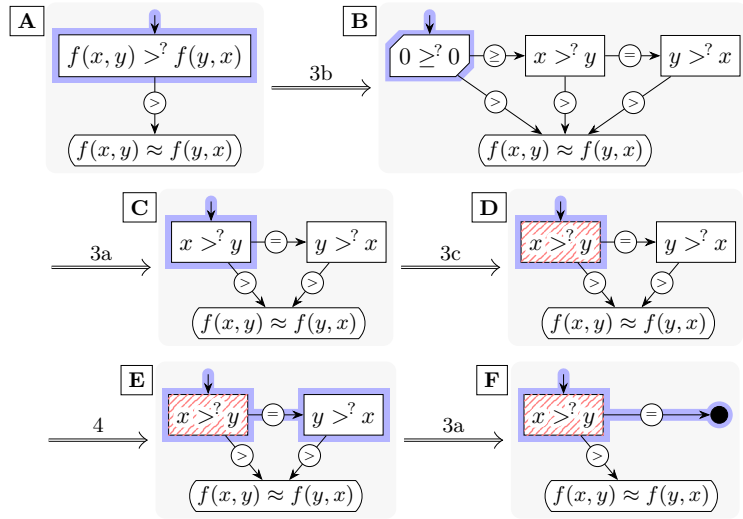
\begin{figure}[t]
    \centering
    \scalebox{.9}{
    \begin{tikzpicture}[node distance=0em and 4.5em]
    \node[] (n1) {\scalebox{1}{\usebox\retrievalExampleOne}};
    \node[right=of n1] (n2) {\scalebox{1}{\usebox\retrievalExampleTwo}};
    \node[below=of n1,xshift=5.5em] (n3) {\scalebox{1}{\usebox\retrievalExampleThree}};
    \node[draw=none,left=of n3] (n3p) {};
    \node[right=of n3] (n4) {\scalebox{1}{\usebox\retrievalExampleThreePrime}};
    \node[below=of n3] (n5) {\scalebox{1}{\usebox\retrievalExampleFour}};
    \node[draw=none,left=of n5] (n5p) {};
    \node[right=of n5] (n6) {\scalebox{1}{\usebox\retrievalExampleFive}};

    \node[labelnode] at (n1.north west) {\subdiag{diagA}{\bfseries A}};
    \node[labelnode] at (n2.north west) {\subdiag{diagB}{\bfseries B}};
    \node[labelnode] at (n3.north west) {\subdiag{diagC}{\bfseries C}};
    \node[labelnode] at (n4.north west) {\subdiag{diagD}{\bfseries D}};
    \node[labelnode] at (n5.north west) {\subdiag{diagE}{\bfseries E}};
    \node[labelnode] at (n6.north west) {\subdiag{diagF}{\bfseries F}};

    \draw[transformedge] (n1) -- node[above]{\small 3b} (n2);
    \draw[transformedge] (n3p) -- node[above]{\small 3a} (n3);
    \draw[transformedge] (n3) -- node[above]{\small 3c} (n4);
    \draw[transformedge] (n5p) -- node[above]{\small 4} (n5);
    \draw[transformedge] (n5) -- node[above]{\small 3a} (n6);
    \end{tikzpicture}}

    \caption{Retrieval with a query substitution $\sigma$ such that $x\sigma=y\sigma$.}
    \label{fig:retrieval-example1}
\end{figure}
\begin{example}
Consider the TOD in Figure~\ref{fig:example1}(a).
We perform retrieval on this TOD using a KBO with constant weight function  and a query substitution $\sigma$ such that $x\sigma=y\sigma$. The retrieval steps are shown in Figure~\ref{fig:retrieval-example1}. Initially, all nodes in the TOD are unvisited. We highlight the path up to the current node in each sub-diagram with blue, and denote visited nodes with a red striped background. Note that the last node on the blue path is the current node $n$ from the algorithm.

\begin{enumerate}[leftmargin=1.1em]
\item Starting with sub-diagram \ref{diagA} of Figure~\ref{fig:retrieval-example1}, we use step~\ref{case:non-visited}, applying case~2 of the KBO transformations (see Figure~\ref{fig:kbo-transformations}).
\item In sub-diagram \ref{diagB}, for the current path $\pi$ we have that $F(\pi)$ equals $\geq$, so we apply redundant node removal (step~\ref{case:redundant-node}).
\item In sub-diagram \ref{diagC}, the current node does not admit any transformations, so we mark it visited (step~\ref{case:mark-visited}).
\item In sub-diagram \ref{diagD}, $\sigma(x\comp y)$ is $=$, so we follow the edge labeled $=$ (step~\ref{case:follow-term-node}).
\item In sub-diagram \ref{diagE}, for the current path $\pi$ we get that $F(\pi)$ equals $=$, due to $\sigma(x\comp y)$ being $=$. We apply redundant node removal and get to the exit node denoted explicitly with a black disc (step~\ref{case:redundant-node}).
\item In sub-diagram \ref{diagF}, we return from the retrieval with no equalities (step~\ref{case:exit}).
\end{enumerate}

\begin{figure}[t]
    \centering
    \begin{tikzpicture}[node distance=0em and 2.4em]
    \node[] (n1) {\scalebox{.9}{\usebox\retrievalExampleSix}};
    \node[right=of n1] (n2) {\scalebox{.9}{\usebox\retrievalExampleSeven}};
    \node[below=of n1,xshift=11em] (n3) {\scalebox{.9}{\usebox\retrievalExampleEight}};
    \node[draw=none,left=of n3,xshift=-1em] (n3p) {};

    \node[labelnode] at (n1.north west) {\subdiag{diagG}{\bfseries G}};
    \node[labelnode] at (n2.north west) {\subdiag{diagH}{\bfseries H}};
    \node[labelnode] at (n3.north west) {\subdiag{diagI}{\bfseries I}};

    \draw[transformedge] (n1) -- node[above]{\scriptsize 4,3c,2} (n2);
    \draw[transformedge] (n3p) -- node[above]{\scriptsize  3b,3c} (n3);
    \end{tikzpicture}

    \caption{Result of insertion of $f(x,y)\eqs f(x,x)$ into the TOD of in sub-diagram~\ref{diagF} of Figure~\ref{fig:retrieval-example1}, and retrieval steps from top-left TOD with query substitution $\sigma$ s.t. $x\sigma\succ y\sigma$.}
    \label{fig:retrieval-example2}
\end{figure}
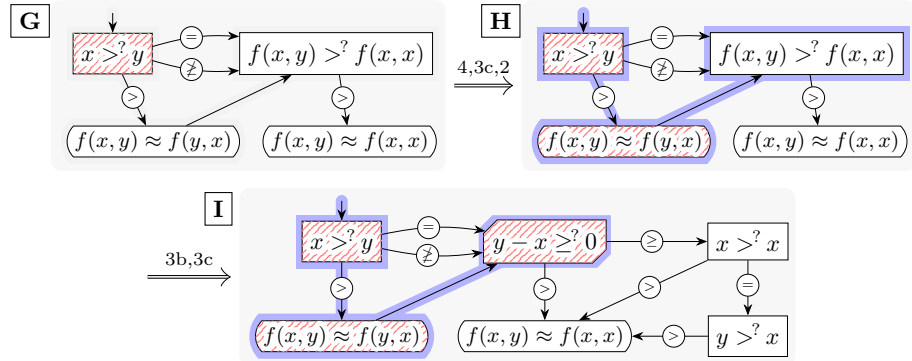

Suppose that, before further traversals, we insert the equality $f(x,y)\eqs f(x,x)$ into the TOD in sub-diagram~\ref{diagF} of Figure~\ref{fig:retrieval-example1}.
The resulting TOD is shown in sub-diagram~\ref{diagG} of Figure~\ref{fig:retrieval-example2}, with a new term comparison node labeled $f(x,y)\comp f(x,x)$ and a new success node labeled $f(x,y)\eqs f(x,x)$. The $>$ edge of the new term comparison node is connected to the new success node. The $=$ and $\ngeq$ edges of the new term comparison node, and the outgoing edge of the success node are connected to a new exit node (not shown).
%
%
We traverse the TOD in sub-diagram~\ref{diagG} with a query substitution $\sigma$ such that $x\sigma\succ y\sigma$.
\begin{enumerate}[leftmargin=1.1em]
\item Starting from sub-diagram~\ref{diagG} of Figure~\ref{fig:retrieval-example2}, we follow the edge $>$ (step~\ref{case:follow-term-node}), mark the success node labeled $f(x,y)\eqs f(y,x)$ visited (step~\ref{case:mark-visited}) and follow its outgoing edge (step~\ref{case:success}).
\item In sub-diagram~\ref{diagH}, we apply Case~2 of the KBO transformations (step~\ref{case:non-visited}) and mark the current node labeled $y-x\pos 0$ visited (step~\ref{case:mark-visited}).
\item In sub-diagram~\ref{diagI}, we follow the $\ngeq$ edge of the current node (step~\ref{case:follow-positivity}) and get to the exit node.
\item We exit with a single equality $f(x,y)\eqs f(y,x)$ (step~\ref{case:exit}).\QED
\end{enumerate}

\end{example}

\section{Evaluation}
\label{sec:evaluation}

We implemented our equality retrieval approach using \tod{} in  \vampire{}. 

\paragraph{Prover setup.}
We considered twelve configurations of \vampire{} options for reasoning using \tod{}, by 
taking Discount~\cite{Discount} or Otter~\cite{Otter} as 
saturation algorithm ({\tt -sa discount/otter}), KBO or LPO as term order ({\tt -to kbo/lpo}), and \tod{} variants using the new option {\tt -fdtod} with values {\tt off/on/shared}
\footnote{The option values {\tt on} and {\tt shared} are implemented in the {\tt master} and {\tt term-ordering-diagrams} branches of \vampire{}, respectively.}
for using \tods{} in forward demodulation, where:
%
%
\begin{enumerate}[label=(\arabic*)]
    \item \off{} does not use TODs, 
    \item \on{} uses a separate \tod{} for each forward demodulator (see e.g. \Cref{fig:example1}),
    \item \shared{} uses a shared \tod{} for each set of forward demodulators with the same left-hand side  (see e.g.~\Cref{fig:example2}).
\end{enumerate}

\paragraph{Benchmarks.} \review{We used the TPTP 8.2.0 library~\cite{TPTP} which contains 25474 problems. We ignored all problems where the TOD implementation is never called. Overall, we evaluated our work on 9310 TPTP problems.}

\paragraph{Hardware.} We used compute nodes  with AMD Epyc 7502 2.5GHz processors and 1TB RAM. Each benchmark run relied on a single core and 16GB of memory.
\begin{table}[t]
    \caption{Number of problems solved by \vampire{} within 60s. The numbers in parentheses show the number of problems lost and won by \tod{} variants compared to \texttt{off}.}
    \centering
    \setlength{\tabcolsep}{1em}
    \begin{tabular}{|c c| c | c | c |}
        \hline
                 &     & \off & \on & \shared \\
        \hline
        \multirow{2}{1cm}{Otter}
                 & KBO & 3250 (-0, +0) & 3258 (-0, +8) &	3261 (-0, +11)	\\
                 & LPO & 3064 (-0, +0) &	3083 (-1, +20) &	3092 (-2, +30)\\\hdashline
        \multirow{2}{1cm}{Discount}
                 & KBO & 3182 (-0, +0) &	3194 (-1, +13) &	3196 (-1, +15)\\
                 & LPO & 3016 (-0, +0) &	3044 (-1, +29) &	3058 (-1, +43)\\
        \hline
    \end{tabular}
    \label{tab:vampire-vanilla}
\end{table}

\paragraph{Experimental summary.} 
\Cref{tab:vampire-vanilla} shows the number of problems solved by \vampire{} with a 60 seconds timeout. Our approach using TODs is  better, regardless of the term ordering ({\tt to}) or the saturation algorithm ({\tt sa}). Further, \vampire{} \shared{} proves theorems significantly faster than \off{} (see \Cref{app:runtimes}).

Experiments using \tods{} seem to reach the memory limit in more instances than without \tods{}. For example in {\tt -sa otter -to lpo}, 44 problems reach a memory limit with \shared, whereas only 14 do so when \tods{} are \off. However, \vampire{} {\tt -fdtod off} hits some difficult ordering checks and is unable to generate huge clauses that consume a lot of memory 
(see \Cref{app:emprical-eval}).

\Cref{tab:instr-count}\todo{The titles of columns are a bit inconsistent}  shows the proportion of machine instructions spent on forward demodulation and on entire runs, showcasing that we 
reduce the solving time of the \PostOrdering{} problem. 
We used an instruction limit of $200\times\ 10^9$ (approximately 60 seconds).
\review{We highlight sub-results from \Cref{tab:instr-count} for problems in the UEQ (unit equality) category of TPTP in \Cref{tab:ueq-res}. Post-ordering checks problem in this category take up a more significant proportion of instructions, and even dominate the number of instructions when using LPO without TODs.}
\begin{table}[t]
    \caption{Number of instructions spent on \PostOrdering{} checks and forward demodulation compared to total instruction count. The first column describes the options, and the next three display the number of instructions performed on parts of \vampire. Columns 5--6 list the percentage of instructions spent on \PostOrdering{} checks and forward demodulation and the last column gives the number of forward demodulations.}
    \centering
        \renewcommand*\arraystretch{1.1}
    \setlength{\tabcolsep}{0.3em}
        \begin{tabular}{|l l l| c c c | c c | c |}
        \hline
        \multicolumn{3}{|c}{Options} & \multicolumn{3}{|c|}{\# of instructions $\times 10^{12}$} & \multicolumn{2}{c|}{Ratio to total} & \# of $\times 10^{6}$ \\
        \hdashline
                 &     &              & PostOrd.     & Fw dem.        & Total            & PostOrd. & Fw dem. & Fw dem.\\
        \hline
\multirow{6}{0.5cm}{\rotatebox{90}{Otter}}    & \multirow{3}{0.5cm}{\rotatebox{90}{KBO}}
        
               & \off    & \cellcolor{teal!5.8}73  & \cellcolor{teal!13.9}174 & 1258
                         & \cellcolor{teal!5.8}5.8\% & \cellcolor{teal!13.9}13.9\% & 2598 \\
         &     & \on     & \cellcolor{teal!2.7}34   & \cellcolor{teal!11.3}142	& 1256
                         & \cellcolor{teal!2.7}2.7\% & \cellcolor{teal!11.3}11.3\% & 2714 \\
         &     & \shared & \cellcolor{teal!1.4}18   & \cellcolor{teal!9.2}116 & 1255
                         & \cellcolor{teal!1.4}1.4\% & \cellcolor{teal!9.2}9.2\% & 2813 \\

\cdashline{2-9}
         & \multirow{3}{0.5cm}{\rotatebox{90}{LPO}}
        
               & \off    & \cellcolor{teal!20.0}259  & \cellcolor{teal!28.0}362 & 1292
                         & \cellcolor{teal!20.0}20.0\% & \cellcolor{teal!28.0}28.0\% & 1922 \\
         &     & \on     & \cellcolor{teal!12.0}154   & \cellcolor{teal!21.7}279	& 1286
                         & \cellcolor{teal!12.0}12.0\% & \cellcolor{teal!21.7}21.7\% & 2064 \\
         &     & \shared & \cellcolor{teal!7.7}98   & \cellcolor{teal!16.6}213 & 1284
                         & \cellcolor{teal!7.7}7.7\% & \cellcolor{teal!16.6}16.6\% & 2238 \\

\cdashline{1-9}
\multirow{6}{0.5cm}{\rotatebox{90}{Discount}}    & \multirow{3}{0.5cm}{\rotatebox{90}{KBO}}
        
               & \off    & \cellcolor{teal!7.2}91  & \cellcolor{teal!24.7}314 & 1272
                         & \cellcolor{teal!7.2}7.2\% & \cellcolor{teal!24.7}24.7\% & 3044 \\
         &     & \on     & \cellcolor{teal!3.6}45   & \cellcolor{teal!22.4}281	& 1258
                         & \cellcolor{teal!3.6}3.6\% & \cellcolor{teal!22.4}22.4\% & 3279 \\
         &     & \shared & \cellcolor{teal!3.0}37   & \cellcolor{teal!21.4}272 & 1269
                         & \cellcolor{teal!3.0}3.0\% & \cellcolor{teal!21.4}21.4\% & 3520 \\

\cdashline{2-9}
         & \multirow{3}{0.5cm}{\rotatebox{90}{LPO}}
        
               & \off    & \cellcolor{teal!16.5}214  & \cellcolor{teal!32.1}417 & 1301
                         & \cellcolor{teal!16.5}16.5\% & \cellcolor{teal!32.1}32.1\% & 2376 \\
         &     & \on     & \cellcolor{teal!8.9}115   & \cellcolor{teal!27.6}354	& 1283
                         & \cellcolor{teal!8.9}8.9\% & \cellcolor{teal!27.6}27.6\% & 2768 \\
         &     & \shared & \cellcolor{teal!6.7}86   & \cellcolor{teal!25.5}330 & 1294
                         & \cellcolor{teal!6.7}6.7\% & \cellcolor{teal!25.5}25.5\% & 3228 \\

\hline
    \end{tabular}
    \label{tab:instr-count}
\end{table}

\begin{table}[t]
    \caption{Number of instructions spent on \PostOrdering{} checks and forward demodulation compared to total instruction count, within UEQ problems (1456 problems).}
    \centering
    \renewcommand*\arraystretch{1.1}
    \setlength{\tabcolsep}{0.3em}
    \begin{tabular}{|l l l| c c c | c c | c |}
        \hline
        \multicolumn{3}{|c}{Options} & \multicolumn{3}{|c|}{\# of instructions $\times 10^{12}$} & \multicolumn{2}{c|}{Ratio to total} & \# of $\times 10^{6}$ \\
        \hdashline
                 &     &              & PostOrd.     & Fw dem.        & Total            & PostOrd. & Fw dem. & Fw dem.\\
        \hline
\multirow{6}{0.5cm}{\rotatebox{90}{Otter}}    & \multirow{3}{0.5cm}{\rotatebox{90}{KBO}}
        
               & \off    & \cellcolor{teal!23.4}23.6  & \cellcolor{teal!37.6}37.9 & 101.0
                         & \cellcolor{teal!23.4}23.4\% & \cellcolor{teal!37.6}37.6\% & 295.3 \\
         &     & \on     & \cellcolor{teal!10.8}10.8   & \cellcolor{teal!28.0}28.1	& 100.3
                         & \cellcolor{teal!10.8}10.8\% & \cellcolor{teal!28.0}28.0\% & 342.4 \\
         &     & \shared & \cellcolor{teal!5.1}5.1   & \cellcolor{teal!18.8}18.8 & 99.9
                         & \cellcolor{teal!5.1}5.1\% & \cellcolor{teal!18.8}18.8\% & 382.9 \\

\cdashline{2-9}
         & \multirow{3}{0.5cm}{\rotatebox{90}{LPO}}
        
               & \off    & \cellcolor{teal!51.6}58.4  & \cellcolor{teal!60.9}69.0 & 113.2
                         & \cellcolor{teal!51.6}51.6\% & \cellcolor{teal!60.9}60.9\% & 130.3 \\
         &     & \on     & \cellcolor{teal!30.8}33.9   & \cellcolor{teal!46.0}50.6	& 110.0
                         & \cellcolor{teal!30.8}30.8\% & \cellcolor{teal!46.0}46.0\% & 158.2 \\
         &     & \shared & \cellcolor{teal!18.0}20.0   & \cellcolor{teal!30.6}34.1 & 111.2
                         & \cellcolor{teal!18.0}18.0\% & \cellcolor{teal!30.6}30.6\% & 215.5 \\

\cdashline{1-9}
\multirow{6}{0.5cm}{\rotatebox{90}{Discount}}    & \multirow{3}{0.5cm}{\rotatebox{90}{KBO}}
        
               & \off    & \cellcolor{teal!23.8}26.8  & \cellcolor{teal!62.5}70.3 & 112.6
                         & \cellcolor{teal!23.8}23.8\% & \cellcolor{teal!62.5}62.5\% & 726.5 \\
         &     & \on     & \cellcolor{teal!13.8}15.4   & \cellcolor{teal!58.3}65.1	& 111.8
                         & \cellcolor{teal!13.8}13.8\% & \cellcolor{teal!58.3}58.3\% & 815.1 \\
         &     & \shared & \cellcolor{teal!10.4}11.6   & \cellcolor{teal!54.0}60.2 & 111.7
                         & \cellcolor{teal!10.4}10.4\% & \cellcolor{teal!54.0}54.0\% & 919.5 \\

\cdashline{2-9}
         & \multirow{3}{0.5cm}{\rotatebox{90}{LPO}}
        
               & \off    & \cellcolor{teal!51.8}63.0  & \cellcolor{teal!78.4}95.2 & 121.5
                         & \cellcolor{teal!51.8}51.8\% & \cellcolor{teal!78.4}78.4\% & 420.3 \\
         &     & \on     & \cellcolor{teal!32.6}39.0   & \cellcolor{teal!71.4}85.3	& 119.5
                         & \cellcolor{teal!32.6}32.6\% & \cellcolor{teal!71.4}71.4\% & 567.1 \\
         &     & \shared & \cellcolor{teal!24.2}28.9   & \cellcolor{teal!64.8}77.4 & 119.4
                         & \cellcolor{teal!24.2}24.2\% & \cellcolor{teal!64.8}64.8\% & 732.0 \\

\hline
    \end{tabular}
    \label{tab:ueq-res}
\end{table}

\section{Related Work and Conclusion}
We introduce the term ordering diagram (TOD) index to improve efficiency of term ordering checks. 
Our evaluation using \vampire{} shows significant improvement over  naive term ordering checks.

Decidability and complexity of problems related to KBO and LPO are studied in~\cite{LPOConstraintSolving,KBOConstraintSolvingNPComplete,OrientingKBO,PathOrderingConstraints}, complemented by efficient implementations~\cite{ThingsToKnowWhenImplementingKBO,EfficientCheckingOfTermOrderingConstraints,ThingsToKnowWhenImplementingLPO}. \review{In particular, in~\cite{EfficientCheckingOfTermOrderingConstraints}, each KBO ordering check is runtime specialized individually in one step. We improve upon this by preprocessing ordering checks completely lazily. Additionally, TODs allow algorithm specialization for an arbitrary number of sequential ordering checks, and also for LPO.}
The open problem~9 of~\cite{RewriteBasedDeductionAndSymbolicConstraints} is a generalization of the post-ordering problem. {Confluence trees}, developed to decide ground confluence of equational systems~\cite{ConfluenceTrees}, are similar to TODs. Confluence trees have also been adapted to ground reducibility checking~\cite{GroundReducibility}. {Shared rewriting} avoids the post-ordering problem by caching the results of demodulations in shared terms~\cite{SharedRewriting}. \review{Similarly to term indexing structures, the efficiency of term ordering diagrams is based on sharing information between indexed elements. However, insertion into and evaluation of term ordering diagrams is done lazily, necessitated by the more expensive ordering comparison operations they support.} To the best of our knowledge, our \tod{} approach gives the first algorithmic solution to efficient post-ordering checks.

Further work includes applying TODs to, for example, backward demodulation, constrained superposition, subsumption demodulation~\cite{SubsumptionDemodulation}, ground reducibility~\cite{GroundReducibility} or ground  joinability~\cite{Duarte2022}. Extending our  framework to further   simplification orders, such as the Weighted Path Order (WPO)~\cite{WeightedPathOrders}, is another task for the future. 


\paragraph{Acknowledgements.} This research was funded in whole or in part by the  ERC Consolidator Grant ARTIST 101002685, the ERC Proof of Concept Grant LEARN 101213411, the TU Wien Doctoral College SecInt, the FWF SpyCoDe Grant 10.55776/F85,  the WWTF grant ForSmart   10.47379/ICT22007, and the Amazon Research Award 2023 QuAT. 

\bibliographystyle{splncs04}
\bibliography{bibliography}

\newpage

\appendix

\chapter*{Appendix}

\section{Correctness of TOD Transformations}

In this section, we present a detailed proof of our main result.

\correctness*
\begin{proof}

(1) For proving 
\emph{termination}, we  introduce a well-founded order on TODs and show that every transformation replaces a TOD by a smaller one. In the proof, we use finite multiset extensions of orders and the fact that the multiset extension of a well-founded order is also well-founded.

We first introduce a mapping $\mu$ from nodes to finite multisets of terms as:
\begin{enumerate}
    \item[(i)] If node $n$ is a term comparison node $s\comp t$, then $\mu(n)$ is the multiset 
    $\setof{s,t}$.
    \item[(ii)] For any other node $n$, we set $\mu(n)$ to be  the empty multiset.
\end{enumerate}
Let us also define an order $>_\mu$ on nodes as follows: $n_2 >_\mu n_1$ if $\mu(n_2)$ is greater than $\mu(n_1)$ in the multiset extension of the order $\succ$ on terms. 

For every path $\pi$ in a \tod, we denote by $\mu(\pi)$ the multiset consisting of elements $\mu(n)$ for all nodes $n$ in $\pi$. We define an ordering, also denoted by $>_\mu$, on paths as follows: $\pi_2 >_\mu \pi_1$ if $\mu(\pi_2)$ is greater than $\mu(\pi_1)$ in the multiset extension of the order $>_\mu$ on nodes. 

Finally, for a \tod{} $\OD$, we denote by $\mu(\OD)$ the multiset consisting of all multisets $\mu(\pi)$, where $\pi$ is a path from the root node in $\OD$. We also define an ordering $>_\mu$ on \tods{} by letting $\OD_2 >_\mu \OD_1$ if $\mu(\OD_2) > \mu(\OD_1)$. 

We claim that for every transformation apart from node replication that changes a \tod{} $\OD$ to a 
\tod{} $\OD'$, we have that $\OD >_\mu \OD'$, which implies termination. The proof is by routine 
inspection of transformations. For example for the case $f = g$ of the LPO transformations 
(last transformation of Figure~\ref{fig:lpo-transformations}), we replace a path with a term comparison node $f(\bar{s}) \comp f(\bar{t})$ by a 
finite number of paths, so that every new term comparison node on these paths contains a 
comparison of a pair of terms strictly smaller in the multiset order than the multiset $\setof{f(\bar{s}),f(\bar{t})}$. 
Another example is the redundant node removal of Figure~\ref{fig:universal-transformations}. 
This transformation replaces on some paths nodes $n_1,n_2,n_3$ by $n_1,n_3$, which results in a smaller multiset.

Finally, we note that node replication results in a \tod{} containing exactly the same multiset of paths, but it can only be applied a finite number of times (it terminates when the \tod{} excluding the exit node becomes a tree). This concludes, (1) termination.

\noindent{\emph{(2) \tod{} equivalence.}} Every \tod{} transformation of Figures~\ref{fig:universal-transformations}--\ref{fig:lpo-transformations} preserves \tod{} equivalence. Redundant node removal preserves equivalence by the definition of the ``forces" relation (Definition~\ref{def:tod:traversal:foces}). Node replication preserves equivalence because the set of paths in the \tod{} does not change.

Let $n$ be a term comparison node labeled with $s\comp t$ in $\OD$ and let $s=f(s_1,\ldots,s_k)$ and $t=g(t_1,\ldots,t_m)$. Assume that we apply a KBO or LPO transformation to this node resulting in a TOD $\OD'$. It is sufficient to prove that for each query substitution s.t. the $\OD$ traversal for $\sigma$ contains $n$ and either $n_1$, $n_2$ or $n_3$, then the $\OD'$ traversal for $\sigma$ also contains $n_1$, $n_2$ or $n_3$, respectively.

\begin{figure}[t]
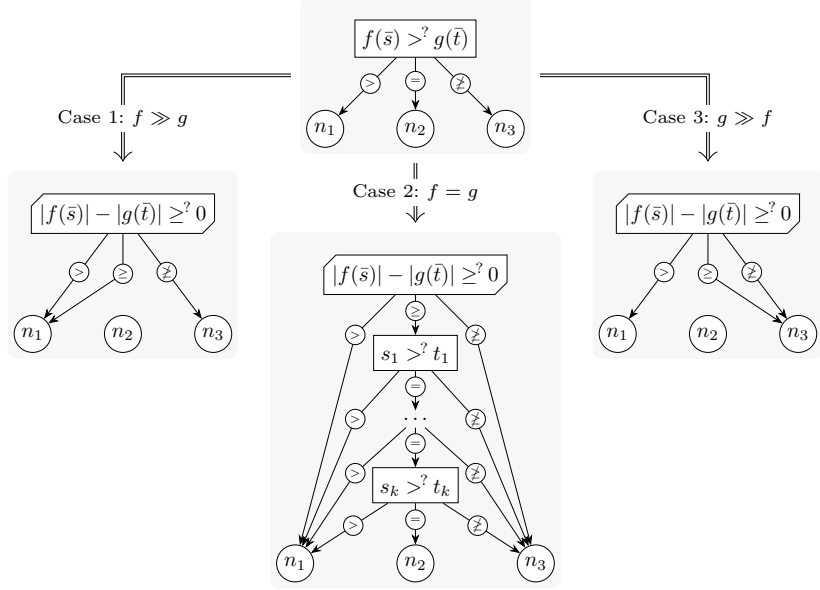

    \centering
    \usebox\kboTransformations
    \caption{KBO transformations on \tods{} (repeated).}
    \label{fig:kbo-transformations2}
\end{figure}
For KBO, we transform $n$ into a positivity check node labeled $|s|-|t|\pos 0$ in all cases. We repeat the KBO transformations for convenience in \Cref{fig:kbo-transformations2}. We consider the following subcases based on the definition of KBO:

\begin{enumerate}[label=(\arabic*),itemsep=.5em]

\item $\sigma(|s|-|t|)>0$. Then by~\ref{prop:kbo1}, $\sigma(s\comp t)$ equals $>$, so the $\OD$ traversal for $\sigma$ contains $n_1$. Further, $\sigma(s\comp t)$ equals $>$, so the $\ODprime$ traversal for $\sigma$ contains $n_1$ too.

\item $\sigma(|s|-|t|)\gtrsim 0$ and $f\gg g$. Then by~\ref{prop:kbo2}, $\sigma(s\comp t)$ equals $>$, so the $\OD$ traversal for $\sigma$ continues $n_1$. Further, $\sigma(|s|-|t|\pos 0)$ equals $\geq$ by assumption, so the $\ODprime$ traversal for $\sigma$ contains $n_1$ too.

\item $\sigma(|s|-|t|)\gtrsim 0$ and $g\gg f$. Then none of~\ref{prop:kbo1}--\ref{prop:kbo3} applies, so $\sigma(s\comp t)$ equals $\ngeq$, and the $\OD$ traversal for $\sigma$ contains $n_3$. Further, $\sigma(|s|-|t|\pos 0)$ equals $\geq$ and the $\ODprime$ traversal for $\sigma$ contains $n_3$ too.

\item $\sigma(|s|-|t|)\gtrsim 0$ and $f=g$. Then we also add $k$ new nodes $n'_1,\ldots,n'_k$ to $\ODprime$. We consider the following subcases:

\begin{enumerate}[label=(4.\arabic*),leftmargin=.7em]

    \item There is some $1\le i\le k$ s.t. $s_j\sigma=t_j\sigma$ for all $1\le j < i$ and $s_i\sigma\succ_\kbo t_i\sigma$. Then by~\ref{prop:kbo3}, $\sigma(s\comp t)$ equals $>$, so the $\OD$ traversal for $\sigma$ contains $n_1$. By assumption, $\sigma(|s|-|t|\pos 0)$ equals $\geq$. Also, $k\geq i>0$, so the $\ODprime$ traversal for $\sigma$ contains $n'_1$ with label $s_1\comp t_1$. For each $1\le j<i$, $n'_j$ is labeled with $s_j\comp t_j$, and $\sigma(s_j\comp t_j)$ equals $=$ by assumption, so we get that the $\ODprime$ traversal for $\sigma$ contains $n'_i$. This node is labeled with $s_i\comp t_i$, and by assumption, $\sigma(s_i\comp t_i)$ equals $>$, hence the $\ODprime$ traversal for $\sigma$ contains $n_1$ too.

    \item There is some $1\le i\le k$ s.t. $s_j\sigma=t_j\sigma$ for all $1\le j < i$ and $s_i\sigma\nsucceq_\kbo t_i\sigma$. Then none of~\ref{prop:kbo1}--\ref{prop:kbo3} applies, and $\sigma(s\comp t)$ equals $\ngeq$, so the $\OD$ traversal for $\sigma$ contains $n_3$. By assumption, $\sigma(|s|-|t|\pos 0)$ equals $\geq$. Also, $k\geq i>0$, so the $\ODprime$ traversal for $\sigma$ contains $n'_1$ with label $s_1\comp t_1$. Similarly to the previous case, we get that the $\ODprime$ traversal for $\sigma$ contains $n'_i$. This node is labeled with $s_i\comp t_i$, and by assumption, $\sigma(s_i\comp t_i)$ equals $\ngeq$, hence the $\ODprime$ traversal for $\sigma$ contains $n_3$ too.

    \item Otherwise, $s_i\sigma=t_i\sigma$ for all $1\le i\le k$. Then $s\sigma=t\sigma$, so $\sigma(s\comp t)$ equals $=$, so the $\OD$ traversal for $\sigma$ contains $n_2$. Also, $\sigma(|s|-|t|\pos 0)$ equals $\geq$. If $k=0$, then the $\ODprime$ traversal for $\sigma$ contains $n_2$. Otherwise, it contains $n'_1$. For each $1\le j\le k$, $n'_j$ is labeled with $s_j\comp t_j$, and $\sigma(s_j\comp t_j)$ equals $=$ by assumption, so we get that the $\ODprime$ traversal for $\sigma$ contains $n_2$ too.

\end{enumerate}

\item Otherwise, $\sigma(|s|-|t|)\ngtr 0$ and $\sigma(|s|-|t|)\not\gtrsim 0$ and none of~\ref{prop:kbo1}--\ref{prop:kbo3} applies, so $\sigma(s\comp t)$ equals $\ngeq$, and the $\OD$ traversal for $\sigma$ contains $n_3$. Also, $\sigma(|s|-|t|\pos 0)$ equals $\ngeq$, hence the $\ODprime$ traversal for $\sigma$ contains $n_3$ too.

\end{enumerate}
\begin{figure}[t]
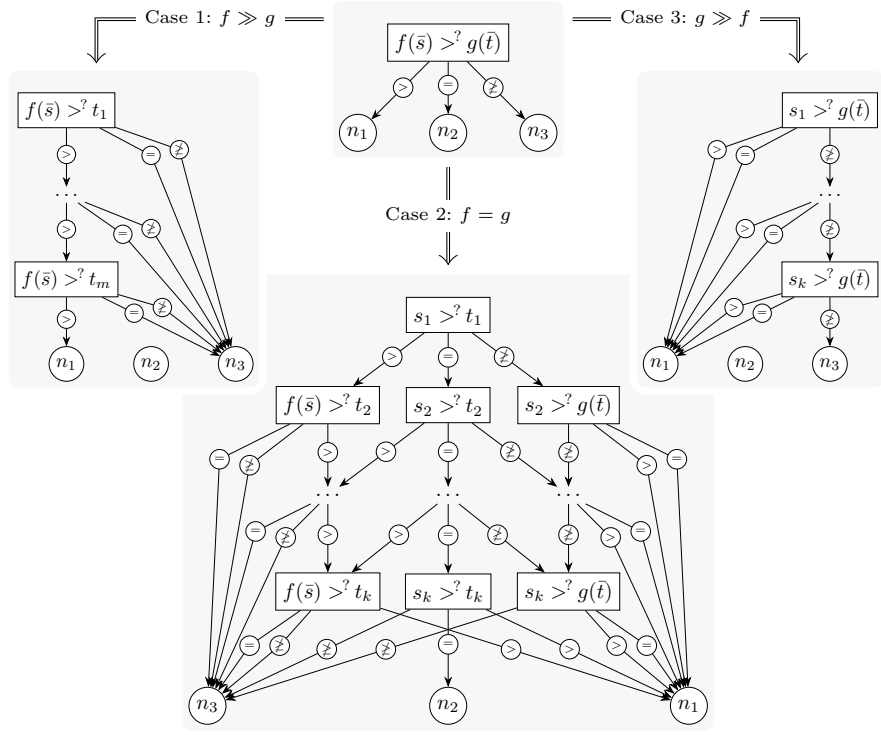

    \centering
    \usebox\lpoTransformations
    \caption{LPO transformations on \tods{} (repeated).}
    \label{fig:lpo-transformations2}
\end{figure}
For LPO, we consider the following cases. We also repeat the LPO transformations for convenience in~\Cref{fig:lpo-transformations2}.
\begin{enumerate}[label=(\arabic*),itemsep=.5em]

\item $f\gg g$. In the $m=0$ case, the $\OD$ traversal for $\sigma$ contains $n_1$, and we replace $n$ by $n_1$ in $\OD'$, so the $\OD'$ traversal for $\sigma$ contains $n_1$ too. Otherwise $m>0$. In this case, we also add $m-1$ new nodes to $\ODprime$. Let us denote $n$ in $\ODprime$ with $n'_1$ and the $m-1$ new nodes with $n'_2,\ldots,n'_m$, respectively. We have the following subcases.

\begin{enumerate}[label=(1.\arabic*),leftmargin=.7em,itemsep=.5em]
    \item For all $1\le i \le m$, we have $s\sigma\succ_\lpo t_i\sigma$. Then by~\ref{prop:lpo2}, $\sigma(s\comp t)$ equals $>$, and the $\OD$ traversal for $\sigma$ contains $n_1$. For each $1\le i\le m$, $n'_i$ is labeled with $s\comp t_i$, and $\sigma(s\comp t_i)$ equals $>$ by assumption. Hence, we get that the $\ODprime$ traversal for $\sigma$ contains $n_1$ too.

    \item There is some $1\le i\le m$ s.t. $s\sigma\succ_\lpo t_j\sigma$ for all $1\le j \le i$ and $s\sigma\nsucc_\lpo t_i\sigma$. \ref{prop:lpo1} and~\ref{prop:lpo2} do not apply. Further, for all $1\le j\le m$, we have $s\sigma\succ_\lpo s_j\sigma$ and $t\sigma\succ_\lpo t_i\sigma$, which means $s_j\sigma\succeq_\lpo t\sigma$ would contradict $s\sigma\nsucc_\lpo t_i\sigma$ and hence~\ref{prop:lpo3} does not apply either. Hence, $\sigma(s\comp t)$ equals $\ngeq$ and the $\OD$ traversal for $\sigma$ contains $n_3$.

    For each $1\le j< i$, $n'_j$ is labeled with $s\comp t_j$, and $\sigma(s\comp t_j)$ equals $>$ by assumption. Also, $n'_i$ is labeled with $s\comp t_i$ and $\sigma(s\comp t_i)$ equals either $=$ or $\ngeq$. In both cases, we get that the $\ODprime$ traversal for $\sigma$ contains $n_3$ too.
\end{enumerate}

\item $g\gg f$. In the $k=0$ case, the $\OD$ traversal for $\sigma$ contains $n_3$, and we replace $n$ by $n_3$ in $\OD'$, so the $\OD'$ traversal for $\sigma$ contains $n_3$ too. Otherwise $k>0$. In this case, we also add $k-1$ new nodes to $\ODprime$. Let us denote $n$ in $\ODprime$ with $n'_1$ and the $k-1$ new nodes with $n'_2,\ldots,n'_k$, respectively. We have the following subcases.

\begin{enumerate}[label=(2.\arabic*),leftmargin=.7em,itemsep=.5em]

    \item There is some $1\le i\le k$ s.t. $s_j\sigma\nsucceq_\lpo t$ for all $1\le j < i$ and $s_i\sigma\succeq_\lpo t\sigma$. Then by~\ref{prop:lpo3} $\sigma(s\comp t)$ equals $>$ and the $\OD$ traversal for $\sigma$ contains $n_1$. For each $1\le j< i$, $n'_j$ is labeled with $s_j\comp t$, and $\sigma(s_j\comp t)$ equals $\ngeq$ by assumption. Also, $n'_i$ is labeled with $s_i\comp t$ and $\sigma(s_i\comp t)$ equals either $>$ or $=$. In both cases, we get that the $\ODprime$ traversal for $\sigma$ contains $n_1$ too.

    \item For all $1\le i\le k$, $s_i\sigma\nsucceq_\lpo t\sigma$. Then none of~\ref{prop:lpo1}--\ref{prop:lpo3} applies, so $\sigma(s\comp t)$ equals $\ngeq$. For each $1\le i\le k$, $n'_i$ is labeled with $s_i\comp t$, and $\sigma(s_i\comp t)$ equals $\ngeq$ by assumption. Hence, we get that the $\ODprime$ traversal for $\sigma$ contains $n_3$ too.

\end{enumerate}

\item $f=g$. Also, $m=k$. Then, $m=k$. In the $k=0$ case, the $\OD$ traversal for $\sigma$ contains $n_2$, and we replace $n$ by $n_2$ in $\OD'$, so the $\OD'$ traversal for $\sigma$ contains $n_2$ too. Otherwise $k>0$. In this case, we also add $3(k-1)$ new nodes to $\ODprime$, shown as $k-1$ rows in Case 2 of Figure~\ref{fig:lpo-transformations}. We will denote the nodes from left to right in the $i$th row of this diagram with $n_{i+1,1}$, $n_{i+1,2}$ and $n_{i+1,3}$, respectively. We will also denote $n$ with $n_{1,2}$. We have the following subcases.

\begin{enumerate}[label=(3.\arabic*),leftmargin=.7em,itemsep=.5em]

    \item\label{case:success-in-greater} There exists $1\le i\le k$ s.t. $s_j\sigma=t_j\sigma$ for all $1\le j < i$, $s_i\sigma\succ_\lpo t_i\sigma$ and $s\sigma\succ_\lpo t_l\sigma$ for all $i < l \le k$. Then, \ref{prop:lpo1} applies, so $\sigma(s\comp t)$ equals $>$ and the $\OD$ traversal for $\sigma$ contains $n_1$. For each $1\le j < i$, $n_{j,2}$ is labeled $s_j\comp t_j$ and $\sigma(s_j\comp t_j)$ equals $=$ by assumption. Hence, the $\ODprime$ traversal for $\sigma$ contains $n_{i,2}$, which is labeled $s_i\comp t_i$. Then, $\sigma(s_i\comp t_i)$ equals $>$, so the $\ODprime$ traversal for $\sigma$ contains $n_1$, if $i=k$, otherwise it contains $n_{i+1,1}$. Similarly, for all $i<l\leq k$, $n_{l,1}$ is labeled $s\comp t_l$ and $\sigma(s\comp t_l)$ equals $>$ by assumption. Hence, the $\ODprime$ traversal for $\sigma$ contains $n_1$ too.

    \item\label{case:fail-in-greater} There exists $1\le i\le k$ and $i<j\le k$ s.t. $s_l\sigma=t_l\sigma$ for all $1\le l<i$, $s_i\sigma\succ_\lpo t_i\sigma$, $s\sigma\succ_\lpo t_l\sigma$ for all $i<l<j$ and $s\sigma\nsucc_\lpo t_j\sigma$. \ref{prop:lpo1} and \ref{prop:lpo2} do not apply. Furthermore, for all $1\le l\le k$, we have $s\sigma\succ_\lpo s_l\sigma$, and we also have $t\sigma\succ_\lpo t_j\sigma$. This means $s_l\sigma\succeq_\lpo t\sigma$ would contradict $s\sigma\nsucc_\lpo t_j\sigma$, so \ref{prop:lpo3} does not apply either. Hence, $\sigma(s\comp t)$ equals $\ngeq$ and the $\OD$ traversal for $\sigma$ contains $n_3$. Similarly to the previous case, we can conclude that the $\ODprime$ traversal for $\sigma$ contains $n_{j,1}$, which is labeled $s\comp t_j$. We have $\sigma(s\comp t_j)$ equals $=$ or $\ngeq$ by assumption. In both cases, the $\ODprime$ traversal for $\sigma$ contains $n_3$ too.

    \item For all $1\le i\le k$, we have $s_i\sigma=t_i\sigma$. \ref{prop:lpo1} and \ref{prop:lpo2} do not apply and for all $1\le i\le k$, $t\sigma\succ_\lpo t_i\sigma=s_i\sigma$ implies $s_i\sigma\nsucceq t\sigma$, so \ref{prop:lpo3} does not apply either. This means $\sigma(s\comp t)$ equals $=$ and the $\OD$ traversal for $\sigma$ contains $n_2$. For every $1\le i\le k$, $n_{i,2}$ is labeled $s_i\comp t_i$ and $\sigma(s_i\comp t_i)$ equals $=$ by assumption. Hence, the $\ODprime$ traversal for $\sigma$ contains $n_2$ too.

    \item There exists $1\le i\le k$ and $i<j\le k$ s.t. $s_l\sigma=t_l\sigma$ for all $1\le l<i$, $s_i\sigma\nsucceq_\lpo t_i\sigma$, $s_l\sigma\nsucceq_\lpo t\sigma$ for all $i<l<j$ and $s_j\sigma\succeq_\lpo t\sigma$. Since \ref{prop:lpo3} applies, $\sigma(s\comp t)$ equals $>$ and the $\OD$ traversal for $\sigma$ contains $n_1$. This case is symmetric to \ref{case:fail-in-greater}.

    \item There exists $1\le i\le k$ s.t. $s_j\sigma=t_j\sigma$ for all $1\le j<i$, $s_i\sigma\nsucceq_\lpo t_i\sigma$ and $s_l\sigma\nsucceq_\lpo t\sigma$ for all $i<l\le k$. \ref{prop:lpo1} and \ref{prop:lpo2} do not apply. Since (a) for all $1\le j<i$, $t\sigma\succ_\lpo t_j\sigma=s_j\sigma$ implies $s_j\sigma\nsucceq_\lpo t\sigma$, 
    (b) $s_i\sigma\nsucceq_\lpo t_i\sigma$ and $t\sigma\succ_\lpo t_i\sigma$ implies $s_i\sigma\nsucceq_\lpo t\sigma$ and (c) for all $i<l\le k$, $s_l\sigma\nsucceq_\lpo t\sigma$, \ref{prop:lpo3} does not apply either. This means $\sigma(s\comp t)$ equals $\ngeq$ and the $\OD$ traversal for $\sigma$ contains $n_3$. This case is symmetric to \ref{case:success-in-greater}.\QED

\end{enumerate}
\end{enumerate}
\end{proof}

\newpage 
\newcommand\TR{\mathcal{TR}}

\section{Forcing Function Implementation}
\label{sec:redundant-node-removal}
In this appendix, we describe the algorithm used as our forcing function $F$ in retrieval (Section~\ref{sec:retrieval}). Let $\OD$ be a \tod{} and $n$ a node in $\OD$. Moreover, let $\pi=n_0,\ldots,n_k,n$ be a path in $\OD$. We assume $\OD$, $n$ and $\pi$ arbitrary but fixed in this section. Our main goal is to determine whether the node $n$ forces a label using information within the path $\pi$. To achieve this, we have to translate this information into formulas. We build these formulas out of constraints, as defined next.
\begin{Definition}[Term Constraint]
Let $s$ and $t$ be terms. We call a \emph{term constraint} any expression of the form $s>t$ or $s=t$. Let $\sigma$ be a substitution. We say that $\sigma$ \emph{satisfies} the term constraint $s>t$ if $s\sigma\succ t\sigma$. We say that $\sigma$ \emph{satisfies} the term constraint $s=t$ if $s\sigma=t\sigma$.\QED
\end{Definition}
We denote $s>t$ also by $t<s$, denote $s>t\lor s=t$ by $s\geqslant t$, denote $t\geqslant s$ by $s\leqslant t$, and denote $\neg(s\geqslant t)$ by $s\ngeqslant t$.
%
Our algorithm works on formulas $F\to G$ where $F$ and $G$ are quantifier-free formulas containing term constraints.
The formula $F$ will contain information about $\pi$, while the formula $G$ will express that $n$ forces a specific label. Our reasoning requires the following two assumptions about $n$ and $\OD$:
\begin{enumerate}[label=(\arabic*)]
\item The node $n$ is reachable from the root node of $\OD$ via exactly one path.\label{one-path-assumption}
\item There is a substitution $\sigma$ such that the $\OD$ traversal for $\sigma$ contains $n$.\label{satisfiability-assumption}
\end{enumerate}
The main advantage of using assumption~\ref{one-path-assumption} is that there are no paths other than $\pi$ ending in $n$ and our reasoning simplifies to just a conjunction of constraints. Assumption~\ref{satisfiability-assumption} ensures that $n$ is reachable during retrievals and that the formula $F$ is satisfiable.

As described in Section~\ref{sec:retrieval} and as a consequence of Lemma~\ref{lem:visited}, these two assumptions are enforced by marking $n$ visited, after applying node replication to it if it has multiple incoming edges.

Our algorithm determines whether a term comparison node forces a label, using term constraints. Efficient implementation of this part is crucial, as KBO mostly uses term comparison nodes and LPO uses exclusively term comparison nodes.

First, we define a formula collecting term constraints from $\pi$ as follows.
\begin{Definition}[Term Formula]
Let $\pi=n_0,\ldots,n_k,n$ be a path in $\OD$. The set of \emph{top-level terms in $\pi$} is defined as:
$$\{s,t\mid s\comp t\text{ is a term comparison label in }\pi\}$$
The \emph{term formula for $\pi$} is the conjunction of term constraints containing
\begin{enumerate}[label=(\arabic*)]
\item $s>t$, $s=t$, or $s\ngeqslant t$, for every term comparison node (except for $n$) labeled $s\comp t$ in $\pi$ followed by an edge label $>$, $=$, or $\ngeq$, respectively, and
\item $s>t$, for all terms $s$ and $t$ in the set of top-level terms in $\pi$ with $s\succ t$.\QED
\end{enumerate}
\end{Definition}
It is easy to see that a query substitution $\sigma$ satisfies the term formula for $\pi$ if $\pi$ is a prefix of the $\OD$ traversal for $\sigma$.

Let us denote the conjunction of the following set of transitivity axioms with $\TR$:
\begin{align}
\forall x,y,z.(x=y\land y=z\to x=z)\tag{tr1}\label{eq:tr1}\\
\forall x,y,z.(x\leqslant y\land y<z\to x<z)\tag{tr2}\label{eq:tr2}\\
\forall x,y,z.(x<y\land y\leqslant z\to x<z)\tag{tr3}\label{eq:tr3}\\
\forall x,y,z.(x\ngeqslant y\land y\leqslant z\to x\ngeqslant z)\tag{tr4}\label{eq:tr4}\\
\forall x,y,z.(x\leqslant y\land y\ngeqslant z\to x\ngeqslant z)\tag{tr5}\label{eq:tr5}
\end{align}
The validity of the formula $\TR$ is a consequence of the properties of the simplification ordering $\succ$ underlying term constraint semantics. 

Let $n$ be a term comparison node labeled $s\comp t$ and $F$ the term formula for $n$. Clearly, if the $s>t$, $s=t$, or $s\ngeqslant t$ is implied by $\TR\land F$, then $n$ forces label $>$, $=$, or $\ngeq$, respectively.

We decide this using $\TR$ and $F$ as follows. We build a data structure for the node $n$, called a \emph{term partial ordering} (TPO for short), which contains the entire transitive closure of $F$ w.r.t. the transitivity axioms $\TR$. Then, we simply query if $s>t$, $s=t$ or $s\ngeqslant t$ is in the transitive closure.

\begin{example}
Let $f$ and $g$ be unary function symbols and consider the following path in a TOD:
\begin{center}
\begin{tikzpicture}[canvassed]
\node[inner sep=0pt] (source) at (0,.6) {};
\node[termnode] (c1) at (0,0) {$f(x)\comp y$};
\node[termnode] (c2) at (4,0) {$g(y)\comp z$};
\node[termnode] (c3) at (8,0) {$x\comp z$};

\draw[->] (source) -- (c1);
\draw[->] (c1) -- node[ngeqedgenode]{} (c2);
\draw[->] (c2) -- node[eqedgenode]{} (c3);
\end{tikzpicture}
\end{center}
We check whether the node labeled $x\comp z$ forces a label. The top-level terms in the path are $\{x,y,z,f(x),g(y)\}$. The term formula $F$ for the node labeled $x\comp z$ is the following:
$$f(x)\ngeqslant y\land g(y)=z\land f(x)>x\land g(y)>y$$
The first two conjuncts are added due to the term comparison nodes labeled $f(x)\comp y$ and $g(y)\comp z$, respectively. The third and fourth conjuncts are added because $f(x)$ and $x$, respectively $g(y)$ and $y$ are top-level terms such that $f(x)\succ x$, respectively $g(y)\succ y$. We derive the following new term constraints from $\TR\land F$:
\begin{enumerate}
    \item $x\ngeqslant y$ due to $x < f(x)$ and $f(x)\ngeqslant y$, using axiom~\eqref{eq:tr5},
    \item $z > y$ due to $y<g(y)$ and $g(y)=z$, using axiom~\eqref{eq:tr3},
    \item $x\ngeqslant z$ due to $x\ngeqslant y$ and $y < z$, using axiom~\eqref{eq:tr4}.
\end{enumerate}
Since $x\ngeqslant z$ is implied by $\TR\land F$, we conclude that the node labeled $x\comp z$ forces label $\ngeq$.\QED
\end{example}
\begin{remark}
It is possible to further strengthen reasoning over term comparison nodes. For example, a positivity check node labeled $2+|x|-|y|\pos 0$ followed by the edge label $>$ can be converted into a conjunction of term constraints of the form $s>t$, where $s$ contains exactly one more occurrence of $x$ than $t$, $t$ contains exactly one more occurrence of $y$ than $s$, and the difference in symbol weights between $s$ and $t$ add up to exactly 2. Adding such term constraints could lead to significant improvements in our algorithm. However, depending on the function signature $\Funs$, it may require reasoning over infinite sets of term constraints. We consider such improvements outside the scope of our work and leave it as future work.
\end{remark}

\paragraph{Implementation of term partial orderings.} We build TPOs efficiently as follows. The root node of the TOD $\OD$ is associated with the empty TPO. When marking $n$ visited, we copy the TPO from the predecessor of $n$, we extend it with new term constraints if needed, and then we update the transitive closure, assuming that the transitivity axioms only have to be used on new term constraints.

With this approach, TPOs are built iteratively and we do not need to recompute a TPO from scratch, which decreases our computational costs. However, TPOs need to be permanently stored for each visited node in a TOD, which can significantly increase our memory footprint. To reduce the memory cost as well, we introduce a level of abstraction, using so-called \emph{partial orderings}.

Similarly to TPOs, partial orderings store the transitive closure of equality and inequality constraints over the elements of any set $\mathcal{A}$. They store the transitive closure in a \emph{triangular array} and they are \emph{perfectly shared}: when adding new constraints to a partial ordering, we calculate the transitive closure, and check whether we already store an instance of this partial ordering. If yes, we use the already existing instance for further computations. This check can be simply done by hashing and comparing over the triangular arrays of partial orderings. By using partial orderings, we break symmetries in TPO computations, which can result in memory savings.

\newpage
\section{Making Term Comparisons Faster}

As our experiments have already shown, performing term comparisons can take up a significant part of a theorem prover's time. We have demonstrated that instances of the post-ordering problem can be solved efficiently using term ordering diagrams.

In this section, we focus on a slightly different problem related to term comparisons, defined below.

\begin{mdframed}[frametitle={\fcolorbox{black}{white}{~The Substituted Term Comparison Problem~}}, innertopmargin=0pt, innerbottommargin=8pt, frametitleaboveskip=-5pt, frametitlealignment=\center, frametitlefont=\scshape]
Given terms $s$ and $t$, and substitutions $\sigma$ and $\theta$, check if $s\sigma\succ t\theta$ holds.
\end{mdframed}
For example, evaluation of term comparison nodes during retrieval from TODs are instances of the above problem. In practice, the problem is solved by performing the following steps:
\begin{enumerate}
\item apply $\sigma$ on $s$ and $\theta$ on $t$, resulting in the terms $s'$ and $t'$, respectively,
\item compare $s'$ and $t'$ using $\succ$, that is, determine whether $s'\succ t'$, $s'=t'$ or $s'\prec t'$ holds.
\end{enumerate}
We noted earlier that KBO and LPO can be implemented in linear, respectively quadratic time in the size of the compared terms~\cite{ThingsToKnowWhenImplementingKBO,ThingsToKnowWhenImplementingLPO}. However, we can still improve the above steps using the following two observations:
\paragraph{Observation 1.} We do not have to compute $s'$ and $t'$. In other words, it is possible to solve the problem without actually applying $\sigma$ on $s$ and $\theta$ on $t$. We can exploit this observation by extending simplification orderings to so-called \emph{closure terms}.

\paragraph{Observation 2.} If $s'\succ t'$ cannot hold, it is unnecessary to check $s'=t'$ and $s'\prec t'$, as in the above problem formulation, we only need to know if $s'\succ t'$ holds. To exploit this, we implement and use \emph{unidirectional} comparisons for simplification orderings instead of using \emph{bidirectional} comparisons.

\subsection{Simplification Orderings over Closure Terms}

Observation 1 can be exploited for the following reasons. Let $s$ be a term and $\sigma$ a substitution. The main computational cost in applying $\sigma$ on $s$ resulting in $s'$ is having to \emph{construct} subterms of $s'$ and $s'$ itself. This operation takes linear time in the size of $s'$, since $s'$ has to be traversed and all its subterms have to be constructed anew.

The computational cost can be partly reduced by \emph{perfectly sharing terms}, that is, by representing instances of the same terms (up to syntactic equality) at most once in the search space. In that case, we can reuse ground subterms and the terms that are substituted by $\sigma$ for variables in $s$, resulting in a linear time algorithm in the size of $s$. But even perfectly sharing terms has an overhead, since for each subterm of $s$, it has to be determined whether $s\sigma$ is shared, that is, whether it is already represented in the search space. In the following, we assume that terms are perfectly shared.

While comparing terms, both KBO and LPO use recursive definitions that traverse the terms top-down: it is enough to inspect the top-most symbol of the terms, and then recurse into one of the subcases accordingly. Therefore, when we compare $s'$ with another term, we have to traverse it for a second time.

Our aim is now to define and implement KBO and LPO such that the comparison is performed \emph{upon first traversal of $s'$ without actually building $s'$}. This can be achieved by:
\begin{enumerate}
    \item traversing all non-variable subterms of $s$ as if they were subterms of $s'$,
    \item applying the substitution only when reaching a variable $x$ in $s$, in which case we continue traversal in $x\sigma$.
\end{enumerate}
For this, we have to work on pairs of terms and substitutions, as defined below.
\begin{Definition}[Closure Term]
Let $s$ be a term and $\sigma$ a substitution. We call the pair $s\cdot\sigma$ a \emph{closure term}.\QED
\end{Definition}
Since the point of using a closure term $s\cdot\sigma$ is to avoid applying $\sigma$ on $s$, we need to define some basic operations on closure terms.

For example, using perfectly shared terms makes checking syntactic equality ($=$) in practice a constant time operation: two terms are equal if they are represented by the same term in the search space, hence it is sufficient to check this representation\footnote{This is usually a pointer comparison taking constant time.} It would be much more costly to perfectly share closure terms. Instead, we use a recursive version of syntactic equality which takes linear time in the terms sizes, corresponding to the following inductive definition. Let us denote the empty substitution by $\varepsilon$.
\begin{Definition}[Equality for Closure Terms]
\label{def:closure-term-equality}
Let $s\cdot\sigma$ and $t\cdot\theta$ be closure terms. We say that $s\cdot\sigma$ and $t\cdot\theta$ are \emph{equal} (and write $s\cdot\sigma=t\cdot\theta$) if either:
\begin{enumerate}[label=(\arabic*)]
    \item $\sigma=\theta=\varepsilon$ or $s$ and $t$ are ground, and $s=t$,
    \item $\sigma\neq\varepsilon$, $s$ is a variable $x$, and $x\sigma\cdot\varepsilon=t\cdot\theta$,
    \item $\theta\neq\varepsilon$, $t$ is a variable $x$, and $s\cdot\sigma=x\theta\cdot\varepsilon$,
    \item $s=f(s_1,\ldots,s_k)$, $t=f(t_1,\ldots,t_k)$ and $s_i\cdot\sigma=t_i\cdot\theta$ for all $1\le i\le k$.\QED
\end{enumerate}
\end{Definition}
\begin{example}
Let $f$ be a binary symbol, $a$ and $b$ constants, $\sigma=\{x\mapsto a\}$ and $\theta=\{y\mapsto b\}$. We verify that $f(x,b)\cdot\sigma=f(a,y)\cdot\theta$ as follows. Case (4) of Definition~\ref{def:closure-term-equality} applies, so we recurse into the arguments:
\begin{enumerate}[leftmargin=1.2em]
\item For $x\cdot\sigma=a\cdot\theta$, case (2) applies. We get $a\cdot\varepsilon=a\cdot\theta$, which is true by case (1) since $a$ is ground.
\item For $b\cdot\sigma=y\cdot\theta$, case (3) applies. We get $b\cdot\sigma=b\cdot\varepsilon$, which is true by case (1) since $b$ is ground.\QED
\end{enumerate}
\end{example}
For KBO, we define linear expressions for closure terms as follows.
\begin{Definition}[Linear Expressions for Closure Terms]
Let $s\cdot\sigma$ be a closure term. The weight of $s\cdot\sigma$, denoted by $|s\cdot\sigma|$, is defined as:
\begin{enumerate}[label=(\arabic*)]
\item $|x\sigma|$ if $s$ is a variable $x$,
\item $w(f) + |s_1\cdot\sigma| + \ldots + |s_k\cdot\sigma|$ if $s=f(s_1,\ldots,s_k)$.\QED
\end{enumerate}
\end{Definition}
\begin{example}
Let $\sigma=\{x\mapsto f(y,z)\}$. We compute $|f(z,a)\cdot\sigma|$ as follows:
\begin{align*}
|f(z,a)\cdot\sigma|&=w(f)+|z\cdot\sigma|+|b\cdot\sigma|\\
&=1+|f(y,z)|+w(b)\\
&=2+w(f)+|y|+|z|\\
&=3+y+z\tag*{\QED}
\end{align*}
\end{example}
We are now ready to extend KBO and LPO to closure terms. These definitions follow the original definitions very closely.

For closure terms $s\cdot\sigma$ and $t\cdot\theta$, we write $s\cdot\sigma \succ_\kbo t\cdot\theta$ if either:
\begin{enumerate}[label=(K\arabic*),itemsep=2pt,leftmargin=2.5em]
\item $\sigma=\theta=\varepsilon$ or $s$ and $t$ are ground, and $s\succ_\kbo t$,
\item $\sigma\neq\varepsilon$, $s$ is a variable $x$, and $x\sigma\cdot\varepsilon\succ_\kbo t\cdot\theta$,
\item $\theta\neq\varepsilon$, $t$ is a variable $x$, and $s\cdot\sigma\succ_\kbo x\theta\cdot\varepsilon$,
\item $|s\cdot\sigma|-|t\cdot\theta|>0$,
\item $|s\cdot\sigma|-|t\cdot\theta|\gtrsim 0$, $s = f(s_1,...,s_n)$, $t = g(t_1,...,t_m)$ and $f\gg g$,
\item $|s\cdot\sigma|-|t\cdot\theta|\gtrsim 0$, $s = f(s_1,...,s_n)$, $t = f(t_1,...,t_n)$ and there exists $1\le i\le n$ such that
$s_j\cdot\sigma=t_j\cdot\theta$ for all $1\le j<i$ and $s_i\cdot\sigma\succ_\kbo t_i\cdot\theta$.
\end{enumerate}
Next, we extend the lexicographic-path ordering to closure terms as follows. Let $s\cdot\sigma$ and $t\cdot\theta$ be closure terms. We write  $s\cdot\sigma \succ_\lpo t\cdot\theta$ if either:
\begin{enumerate}[label=(L\arabic*),itemsep=2pt,leftmargin=2.5em]
\item $\sigma=\theta=\varepsilon$ or $s$ and $t$ are ground, and $s\succ_\lpo t$,
\item $\sigma\neq\varepsilon$, $s$ is a variable $x$, and $x\sigma\cdot\varepsilon\succ_\lpo t\cdot\theta$,
\item $\theta\neq\varepsilon$, $t$ is a variable $x$, and $s\cdot\sigma\succ_\lpo x\theta\cdot\varepsilon$,
\item $s=f(s_1,\ldots,s_n)$, $t=f(t_1,...,t_n)$ and there exists $1\le i\le n$ s.t. $s_j\cdot\sigma=t_j\cdot\theta$ for  $1\le j<i$, $s_i\cdot\sigma\succ_\lpo t_i\cdot\theta$, and $s\cdot\sigma\succ_\lpo t_k\cdot\theta$ for  $i<k\le n$,
\item $s=f(s_1,\ldots,s_n)$, $t=g(t_1,...,t_m)$, $f\gg g$ and $s\cdot\sigma\succ_\lpo t_i\cdot\theta$ for  $1\le i\le m$,
\item $s=f(s_1,\ldots,s_n)$, $s_i\cdot\sigma\succeq_\lpo t\cdot\theta$ for some $1\le i\le n$.
\end{enumerate}
Based on these definitions, it is straightforward to adapt any implementation of KBO or LPO to closure terms.
\begin{example}
Suppose we use an LPO $\succ$, and consider the term $f(x,y)$ and substitutions $\sigma=\{x\mapsto f(z,u), y\mapsto z\}$ and $\theta=\{x\mapsto z, y\mapsto f(u,z)\}$. We check $f(x,y)\cdot\sigma\succ_\lpo f(x,y)\cdot\theta$ as follows. (L4) applies, and we have the following subcomparisons:
\begin{enumerate} 
\item check $x\cdot\sigma\succ_\lpo x\cdot\theta$: (L2) applies, so we compare $f(z,u)\cdot\varepsilon$ and $x\cdot\theta$, then (L3) applies, so we compare $f(z,u)\cdot\varepsilon$ and $z\cdot\varepsilon$. Finally, $f(z,u)\cdot\varepsilon\succ_\lpo z\cdot\varepsilon$ holds due to (L1) and $f(z,u)\succ_\lpo z$.
\item check $f(x,y)\cdot\sigma\succ_\lpo y\cdot\theta$: (L3) applies, so we compare $f(x,y)\cdot\sigma\succ_\lpo f(u,z)\cdot\varepsilon$, then (L4) applies, so we perform the following subcomparisons:
\begin{enumerate}
\item check $x\cdot\sigma\succ_\lpo u\cdot\varepsilon$, which by (L2) is equivalent to checking $f(z,u)\cdot\varepsilon\succ_\lpo u\cdot\varepsilon$. This holds by (L1) and $f(z,u)\succ_\lpo u$.
\item check $f(x,y)\cdot\sigma\succ_\lpo z\cdot\varepsilon$: (L6) applies, so we check whether $x\cdot\sigma\succeq_\lpo z\cdot\varepsilon$ or $y\cdot\sigma\succeq_\lpo z\cdot\varepsilon$ holds. The first one holds, since (L2) applies, and the equivalent $f(z,u)\cdot\varepsilon\succ_\lpo z\cdot\varepsilon$ holds by (L1) and $f(z,u)\succ_\lpo z$.\QED
\end{enumerate}
\end{enumerate}
\end{example}

\subsection{Unidirectional Term Comparisons}

We now focus on exploiting Observation 2, by developing \emph{unidirectional} variants of the KBO and LPO implementations. We emphasize that KBO and LPO have linear, respectively quadratic time algorithms~\cite{ThingsToKnowWhenImplementingKBO,ThingsToKnowWhenImplementingLPO} for general, bidirectional comparisons, and these algorithms are known to be asymptotically optimal.

The unidirectional variants of these algorithms are very similar to their bidirectional counterparts: we compare terms $s$ and $t$ as usual, but when the results cannot be $s\succ t$ anymore, we fail.
\footnote{We can also check $s=t$ ``for free'' while checking $s\succ t$. Since both KBO and LPO compare subterms lexicographically in case of $s=t$, we cannot fail earlier than checking that $s$ and $t$ are equal.}
The complexity of the unidirectional variants is the same as the bidirectional ones, so we cannot improve them asymptotically.

Our LPO algorithm is essentially the same as the unidirectional variant described in~\cite{ThingsToKnowWhenImplementingLPO}. This variant can yield some improvements in time efficiency, however it still remains quadratic in nature. We have not yet found a way to improve this algorithm, so we will not discuss it here in detail.

For KBO, modifying the linear-time bidirectional algorithm to fail early yields only minor improvements. This can be explained by inspecting the definition of KBO: for two terms $s$ and $t$, in all three cases ~\ref{prop:kbo1}--\ref{prop:kbo3}, a unidirectional algorithm has to traverse at least $s$ fully to conclude that the linear expression $|s|-|t|$ cannot be non-negative, and hence all cases~\ref{prop:kbo1}--\ref{prop:kbo3} fail.

Instead of using the optimal unidirectional algorithm for KBO, we introduce a different algorithm which has worst-case quadratic time complexity, but often works much better in practice than the optimal, linear algorithm. Let $s$ and $t$ be terms. The main idea when checking $s\succ t$ is that we follow the definition of KBO:
\begin{enumerate}
    \item We compute the linear expression $|s|-|t|$.
    \item If either~\ref{prop:kbo1} or~\ref{prop:kbo2} applies, we exit early with success.
    \item If none of~\ref{prop:kbo1}--\ref{prop:kbo3} applies, we exit early with failure.
    \item Otherwise, we check if~\ref{prop:kbo3} applies recursively.
\end{enumerate}
This algorithm clearly takes quadratic time in the sizes of $s$ and $t$. However, instead of always computing the linear expressions from scratch, we utilize \emph{term sharing} to memoize the linear expressions for terms that we have already compared.

In particular, when seeing a term $s$ for the first time in the comparison algorithm, we compute the linear expression $|s|$, and memoize $|s|$ in the shared term representation of $s$. When computing a linear expression  $|s|-|t|$ for two terms $s$ and $t$ that we have already seen, we use the linear expressions $|s|$ and $|t|$ stored in the shared term representations of $s$ and $t$, respectively.

\subsection{Experimental results}

We have implemented term ordering checks with closure terms, and unidirectional comparisons for both KBO and LPO in \vampire{}. We now evaluate these variants, using them for the post-ordering check in forward demodulation. Our experimental setup is similar to that of Section~\ref{sec:evaluation}, using the option {\tt -sa} with values {\tt discount} and {\tt otter}, and the option {\tt -to} with values {\tt kbo} and {\tt lpo}. Note however that we do not use TODs in these experiments. We compare the following four configurations:
\begin{enumerate}
\item {\tt baseline} uses {\bf bidirectional} comparison {\bf without} closure terms,
\item {\tt conf1} uses {\bf bidirectional} comparison {\bf with} closure terms,
\item {\tt conf2} uses {\bf unidirectional} comparison {\bf without} closure terms,
\item {\tt conf3} uses {\bf unidirectional} comparison {\bf with} closure terms.
\end{enumerate}
Our experiments were run with a 60-second timeout. The results are shown in \Cref{tab:closure-unidirectional1,tab:closure-unidirectional2}. The tables show that both unidirectional comparisons and comparisons with closure terms improve the running times and also the number of solved problems.

\begin{table}[t]
    \caption{Number of problems solved by \vampire{} within 60s. The numbers in parentheses show the number of problems lost and won by {\tt conf1}, {\tt conf2} and {\tt conf3} compared to {\tt baseline}.}
    \centering
    \setlength{\tabcolsep}{.3em}
    \begin{tabular}{|l l| p{6.5em} | p{6.5em} | p{6.5em} | p{6.5em} |}
        \hline
                 &     & {\tt baseline} & {\tt conf1} & {\tt conf2} & {\tt conf3} \\
        \hline
        Otter    & KBO & 3235 (+0, -0) & 3246 (+13, -2) & 3236 (+2, -1) & 3252 (+19, -2) \\
                 & LPO & 3029 (+0, -0) & 3046 (+18, -1) & 3039 (+10, -0) & 3063 (+36, -2) \\\hdashline
        Discount & KBO & 3165 (+0, -0) & 3171 (+8, -2) & 3170 (+6, -1) & 3183 (+20, -2\\
                 & LPO & 2979 (+0, -0) & 2991 (+13, -1) & 2987 (+9, -1) & 3018 (+39, -0)\\
        \hline
    \end{tabular}
    \label{tab:closure-unidirectional1}
\end{table}

\begin{table}[t]
    \caption{Total time to solve problems by \vampire{} within 60s. Each line only considers problems solved by all configurations {\tt baseline/conf1/conf2/conf3}.}
    \centering
    \setlength{\tabcolsep}{.3em}
    \begin{tabular}{|l l| p{6em} | p{5em} | p{5em} | p{5em} | p{5em} |}
        \hline
                 &      & Solved by all & {\tt baseline} & {\tt conf1} & {\tt conf2} & {\tt conf3} \\
        \hline
        Otter    & KBO & 3232 & 3h44 & 3h33 & 3h43 & 3h28 \\
                 & LPO & 3027 & 3h17 & 3h03 & 3h09 & 2h51 \\\hdashline
        Discount & KBO & 3163 & 3h23 & 3h11 & 3h21 & 3h05 \\
                 & LPO & 2978 & 2h47 & 2h36 & 2h38 & 2h24 \\
        \hline
    \end{tabular}
    \label{tab:closure-unidirectional2}
\end{table}


We also found that for 682, 515, 536 and 438 benchmarks in Otter KBO, Otter LPO, Discount KBO, and Discount LPO runs, respectively, proof search in the configurations not using closure terms ({\tt baseline} and {\tt conf2}) diverged from the configurations using closure terms ({\tt conf1} and {\tt conf3}). This means that in these benchmarks, the proof in the former configurations was different in at least one step from the proof in the former configurations.

This is a side effect closure terms have on term sharing: without closure terms, some shared terms are instantiated unnecessarily before a post-ordering check failure, whereas with closure terms, we only instantiate terms after the post-ordering check already succeeded. This yields differences in the term sharing structure due to, for example, hash table collisions happening in different order, and the runs eventually diverge.

To conclude this section, our evaluation clearly favors using closure terms over terms in term orderings, and unidirectional comparisons over bidirectional comparisons. Therefore, our experimental setup on TODs (\Cref{sec:evaluation} and \Cref{app:emprical-eval}) uses {\tt conf3} as default.

\newpage
\section{Empirical Evaluation of TODs}
\label{app:emprical-eval}

In this appendix, we elaborate on our measurements already discussed in \Cref{sec:evaluation}. We performed two different experimental runs with (i) standard \vampire{} and (ii) profiled \vampire{}. In the former, we used the default measurement tools to benchmark overall solving time and instruction counts with a 60-second timeout. In the latter, we profiled various parts of \vampire{} related to TOD retrieval and maintenance, using more disruptive profiling measurements, which required a greater timeout. We discuss our profiling tooling in \Cref{sec:app-instr-count-measure}.

\subsection{Resource Usage}
\label{app:runtimes}
\Cref{fig:scatter-kbo,fig:scatter-lpo} compare the runtime of solved problems between the different values of {\tt -fdtod} using standard \vampire{}. For some benchmarks, using these different option values resulted in either different proofs, or no proof at all.
In these figures, we filtered out problems with different proofs between the two compared methods; blue crosses always have the same proof. Naturally, when only one algorithm succeeds, this criterion cannot be applied. These problems are shown with red triangles and green circles.

For example, upon close inspection of the problem {\tt PLA047\_1}, shown with a red triangle in the \Cref{fig:scatter-lpo} {\tt -sa discount -to lpo} subfigure, we could establish that the proof diverges independently of the performance of TODs.

Every dot below the diagonal line favors the newer approach. Further, those plots highlight that TODs are able to solve some problems significantly faster than the previous approach. We observe that the algorithms introduced in this paper improve the solving performance on almost all benchmarks. Further, comparing \Cref{fig:scatter-kbo,fig:scatter-lpo} highlights that LPO benefits the most from TODs.

\begin{remark}
    A keen reader might notice that the number of problems solved in \Cref{tab:vampire-vanilla} do not match the problems solved in \Cref{fig:scatter-kbo,fig:scatter-lpo}. This is due to the filtering applied for the generation of the figures. In \Cref{tab:vampire-vanilla}, we do not remove problems with different proofs.
\end{remark}

\paragraph{Time to solve.} \Cref{tab:runtimes} displays the cumulative solving time of all problems solved by \off, \on, \shared{} on different configurations. It helps us conclude that \vampire{} not only solved more problems (see \Cref{sec:evaluation}), but also solves them faster.
We reduce the overall runtime by up to 12\% with LPO.

\paragraph{Memory requirements.} In \Cref{sec:evaluation}, we mentioned that TODs seemed to consume more memory overall. We also hinted that there is a bias introduced because TODs allow the generation of more clauses before the timeout is reached. In \Cref{tab:memory}, we only considered the problems solved with the same proof (and number of clauses) and show that TODs indeed do not consume a significant amount of memory.

\begin{remark}
    Once again, there is a subtle difference in the filtering process between \Cref{tab:runtimes,tab:memory} and \Cref{fig:scatter-kbo,fig:scatter-lpo}. In the tables, we filter the problems for which one of the algorithms had a different proof. However, in the scatter plots, we only consider the proofs of the considered methods. For example, if the proof of a problem is identical with \off{} and \on{} but not \shared{}, it will appear on the plot comparing \off{} and \on{}, but not in the table.
\end{remark}

\begin{table}[t]
    \caption{Time for solving problems by \vampire{} vanilla. The first column shows the configuration considered, the second displays the number of problems that are solved by all three TOD configurations (\texttt{off/on/shared}). The next three columns show the total time to solve the aforementioned problems with their respective TOD setting.}
    \centering
    \setlength{\tabcolsep}{1em}
    \begin{tabular}{|l l| c | c | c | c |}
        \hline
                 &      & Solved by all & \off & \on & \shared \\
        \hline
        Otter    & KBO & 3159 & 3h22    & 3h17   & 3h15\\
                 & LPO & 2977 & 2h48    & 2h41   & 2h35\\\hdashline
        Discount & KBO & 3161 & 3h10    & 3h03   & 3h01\\
                 & LPO & 2958 & 2h46    & 2h32   & 2h27\\
        \hline
    \end{tabular}
    \label{tab:runtimes}
\end{table}

\begin{table}[t]
    \caption{Total memory used by \vampire{} to solve TPTP problems. The first column shows the configuration considered, the second displays the number of problems that are solved by all three TOD configurations (\texttt{off/on/shared}). The next three columns show the total memory used to solve the aforementioned problems with their respective TOD setting.}
    \centering
    \setlength{\tabcolsep}{1em}
    \begin{tabular}{|l l| c | c | c | c |}
        \hline
                 &      & Solved by all & \off & \on & \shared \\
        \hline
        Otter    & KBO & 3159 & 488 GB & 485 GB & 482 GB \\
                 & LPO & 2977 & 433 GB & 431 GB & 438 GB \\
        \hdashline
        Discount & KBO & 3161 & 530 GB & 523 GB & 530 GB \\
                 & LPO & 2958 & 460 GB & 460 GB & 474 GB \\
        \hline
    \end{tabular}
    \label{tab:memory}
\end{table}

\label{sec:scatter-plots}
\begin{figure}
    \begin{minipage}{.49\linewidth}
        \centering
        \includegraphics[width=\textwidth]{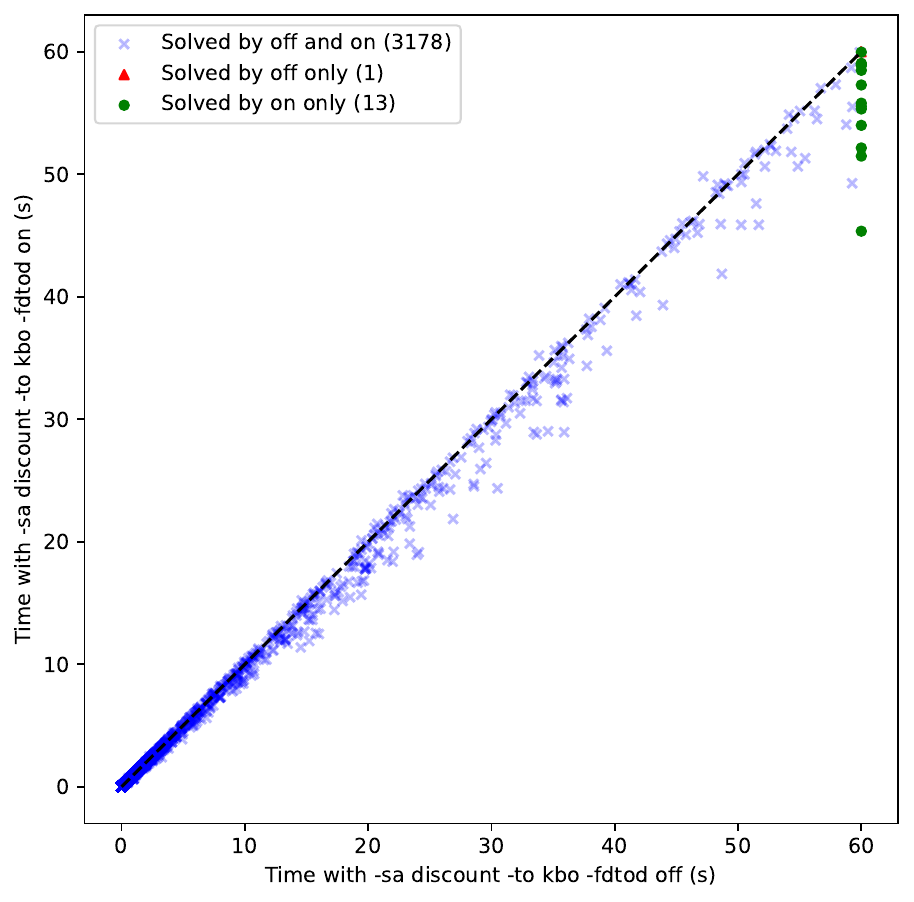}    
        \includegraphics[width=\textwidth]{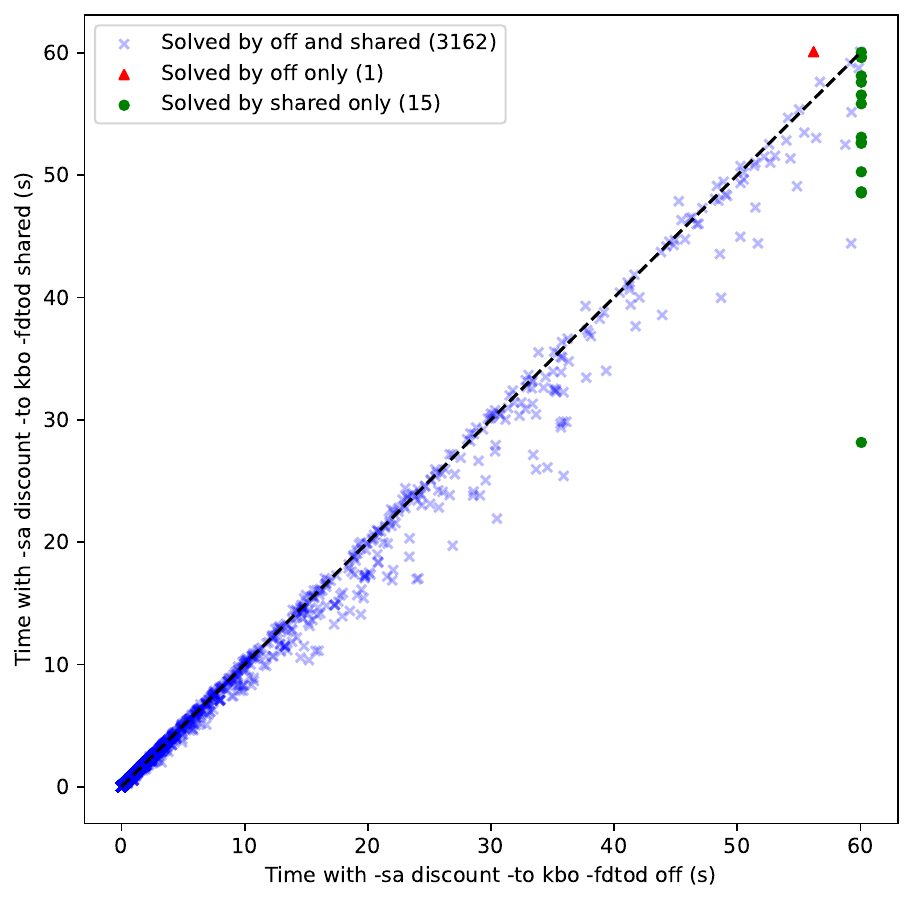}
        \includegraphics[width=\textwidth]{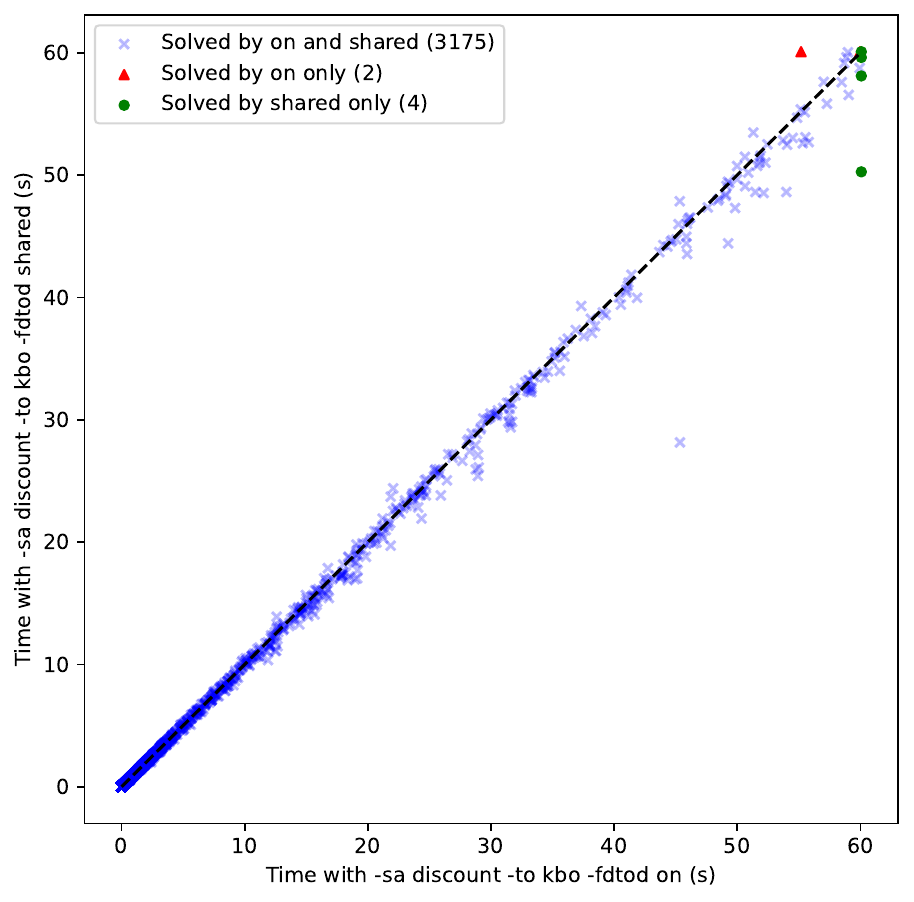}
    \end{minipage}
    \hfill
    \begin{minipage}{.49\linewidth}
        \centering
        \includegraphics[width=\textwidth]{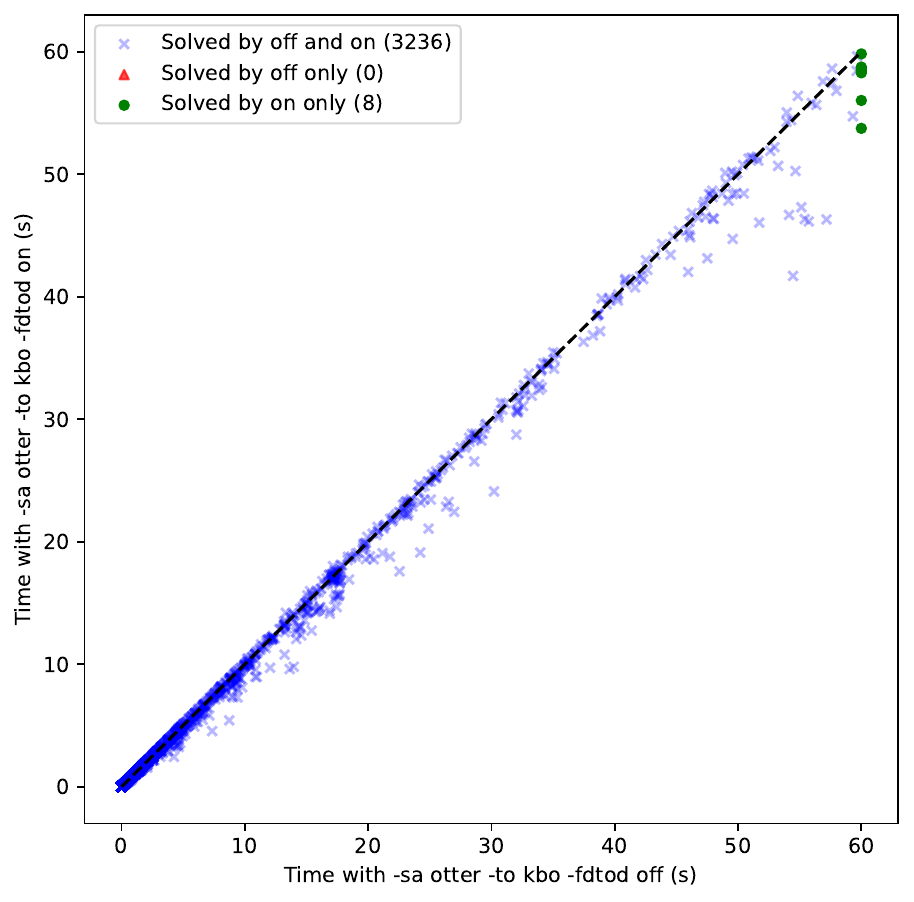}
        \includegraphics[width=\textwidth]{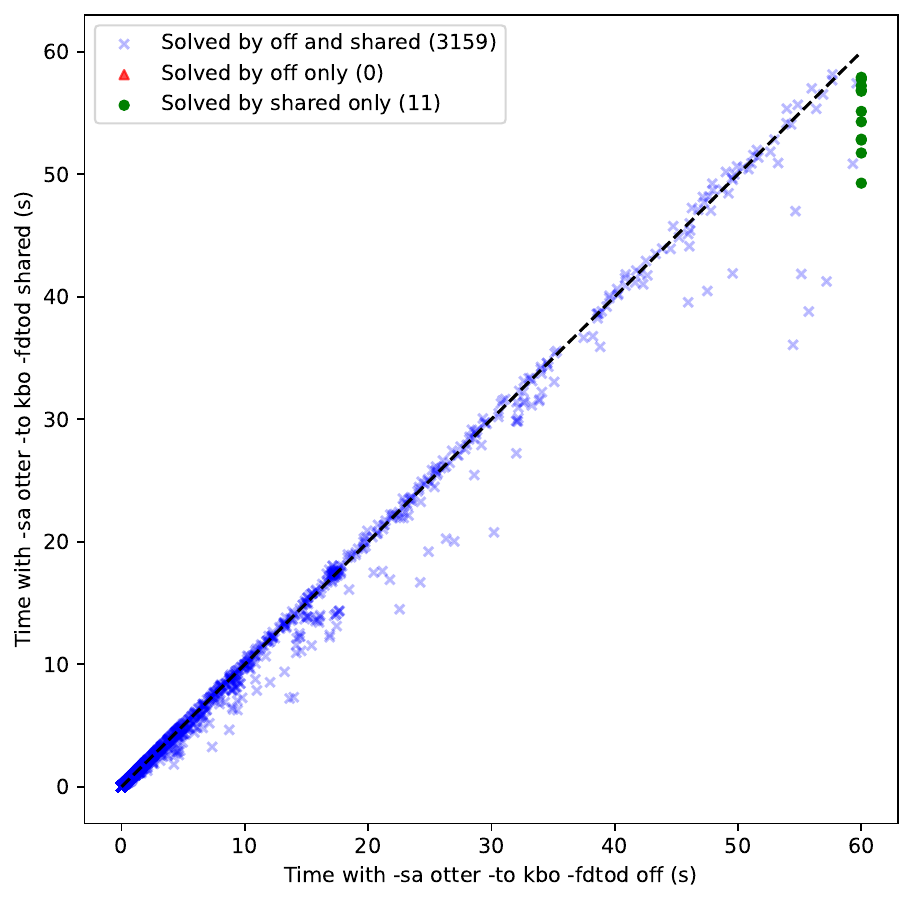}
        \includegraphics[width=\textwidth]{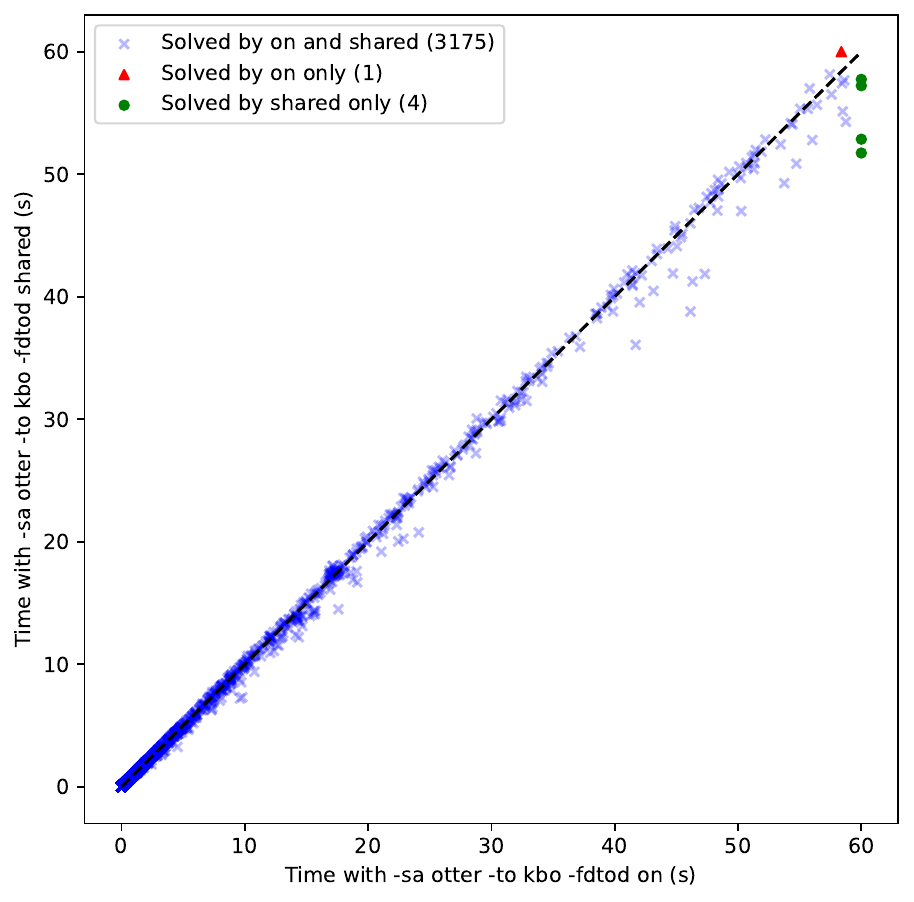}
    \end{minipage}
    \caption{\vampire{} run times with \texttt{-to kbo}. We show {\tt -sa discount} in the left column and {\tt -sa otter} in the right column. In each column, we compare {\tt -fdtod} values \off{} with \on{} (top), \off{} with \shared{} (middle), and \on{} with \shared{} (bottom).}
    \label{fig:scatter-kbo}
\end{figure}
\begin{figure}
    \begin{minipage}{.49\linewidth}
        \centering
        \includegraphics[width=\textwidth]{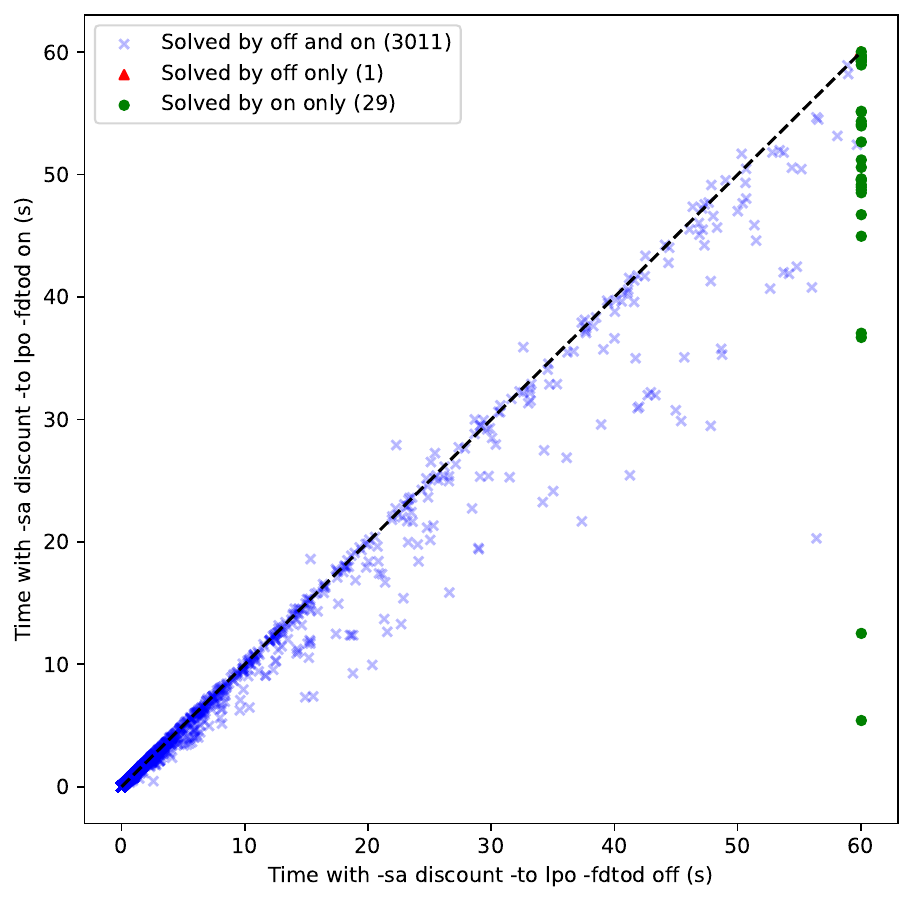}    
        \includegraphics[width=\textwidth]{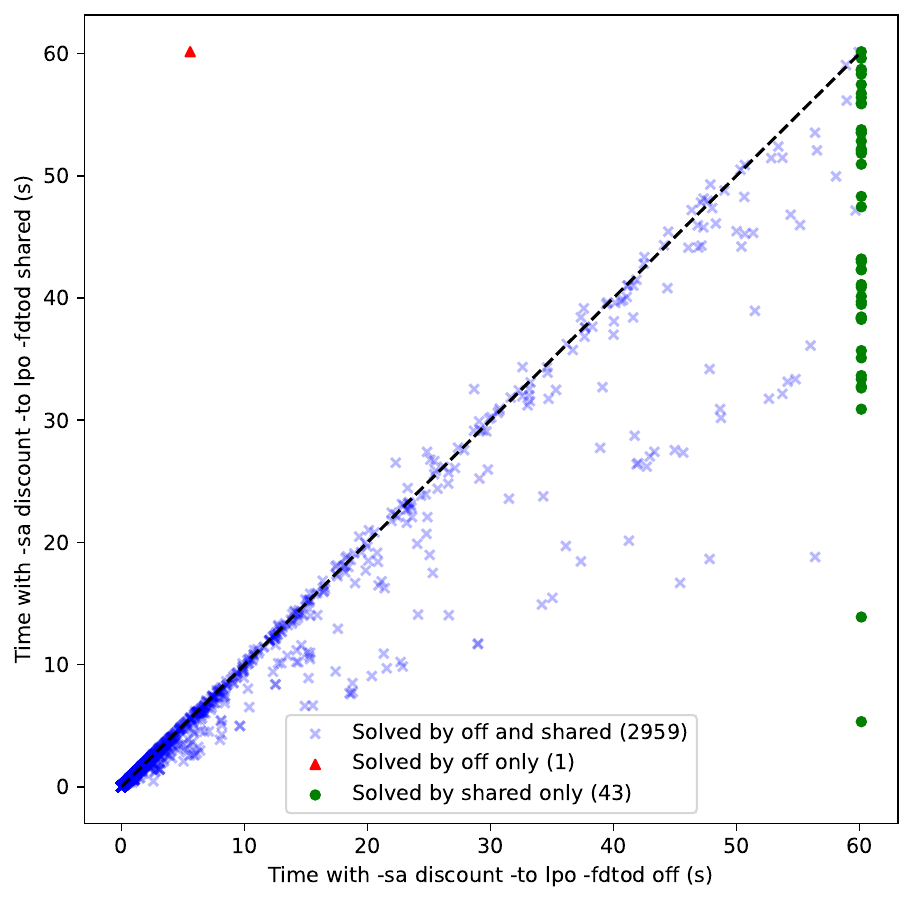}
        \includegraphics[width=\textwidth]{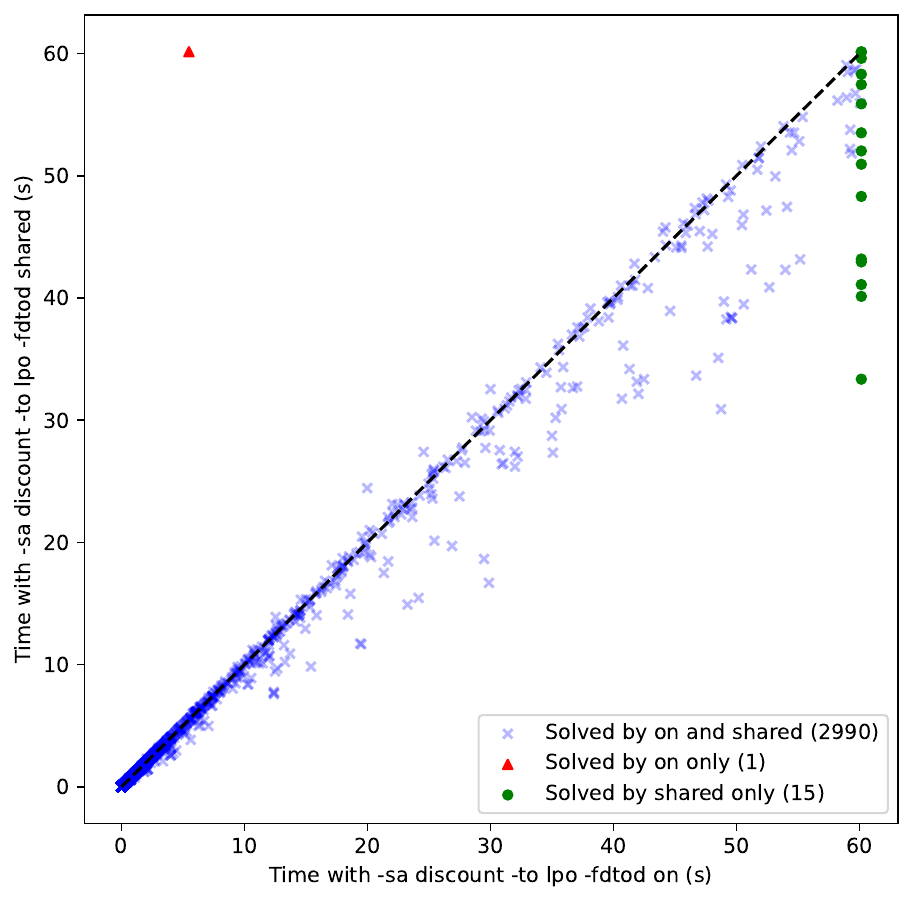}
    \end{minipage}
    \hfill
    \begin{minipage}{.49\linewidth}
        \centering
        \includegraphics[width=\textwidth]{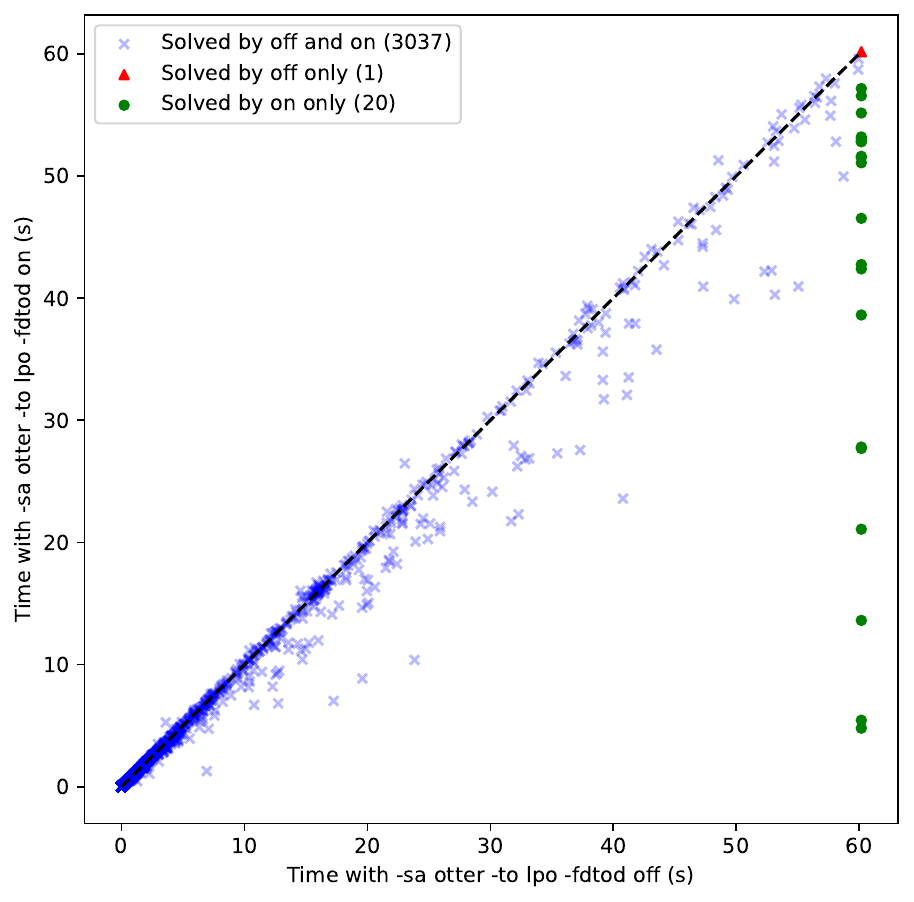}    
        \includegraphics[width=\textwidth]{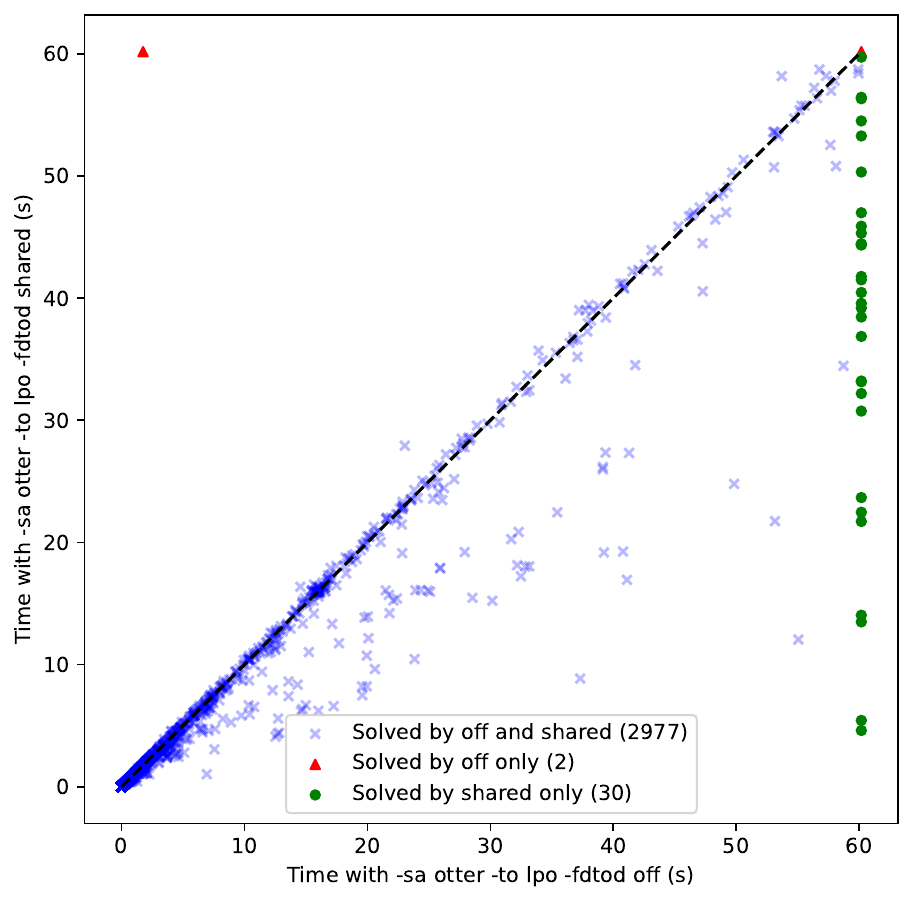}
        \includegraphics[width=\textwidth]{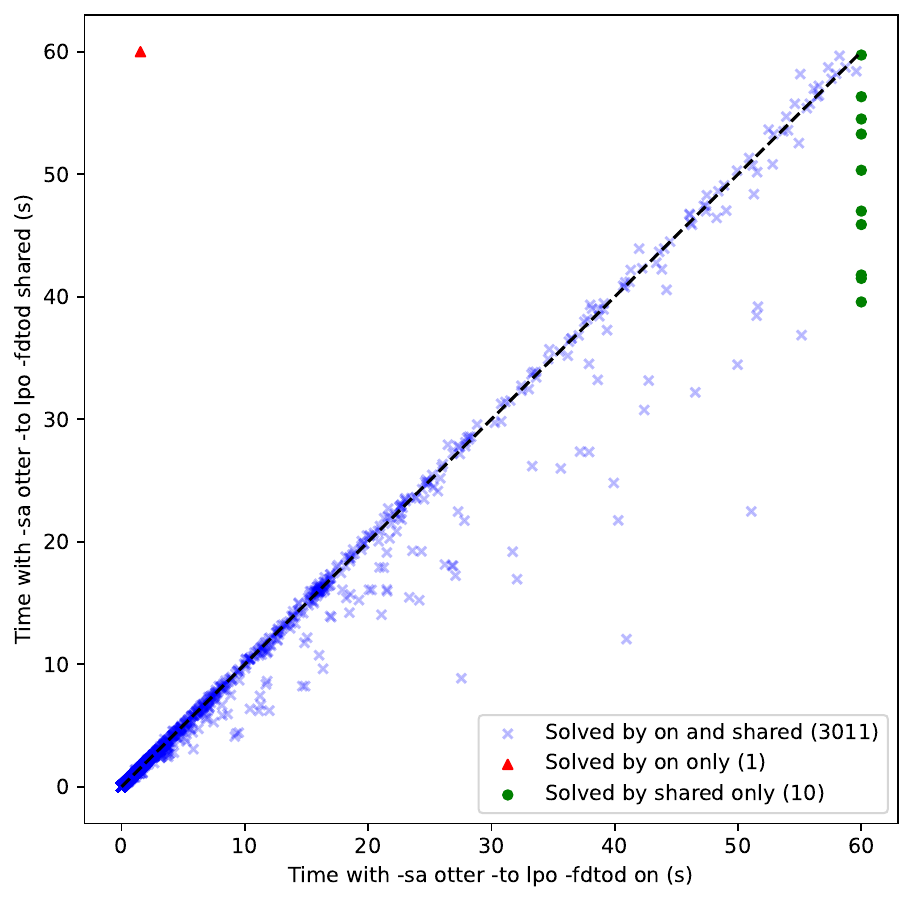}
    \end{minipage}
    \caption{\vampire{} run times with \texttt{-to lpo}. We show {\tt -sa discount} in the left column and {\tt -sa otter} in the right column. In each column, we compare {\tt -fdtod} values \off{} with \on{} (top), \off{} with \shared{} (middle), and \on{} with \shared{} (bottom).}
    \label{fig:scatter-lpo}
\end{figure}

\subsection{Distribution of Computation Time}
\Cref{fig:boxplots,fig:boxplots-post-order} show the portion of time spent in forward demodulation compared to an entire profiled \vampire{} run within $200 \times 10^9$ instructions.
Those figures are a more detailed representation of the data presented in \Cref{tab:instr-count} of \Cref{sec:evaluation}. In addition to showing that TODs reduce the number of instructions spent on post-ordering checks on average, we can see that the number of problems with the highest demodulation cost decreases significantly when TODs are introduced.
For example, this value decreased for problem {\tt NUN133-1} with configuration {\tt -sa discount -to lpo} from 99.65\% when using {\tt -fdtod off} to only 62.89\% when using {\tt -fdtod shared}. This effectively multiplied the number of instructions spent outside forward demodulation by an order of 100.

\begin{remark}
    There are differences between the proportion of instructions displayed in \Cref{tab:instr-count} and \Cref{fig:boxplots,fig:boxplots-post-order}. They are not errors but differences of calculations methods. In \Cref{tab:instr-count}, all instruction counts were summed before the ratios were computed. In \Cref{fig:boxplots,fig:boxplots-post-order}, the ratios are computed for each problem before averaging them.
\end{remark}

\begin{figure}
    \centering
    \includegraphics[width=0.49\textwidth]{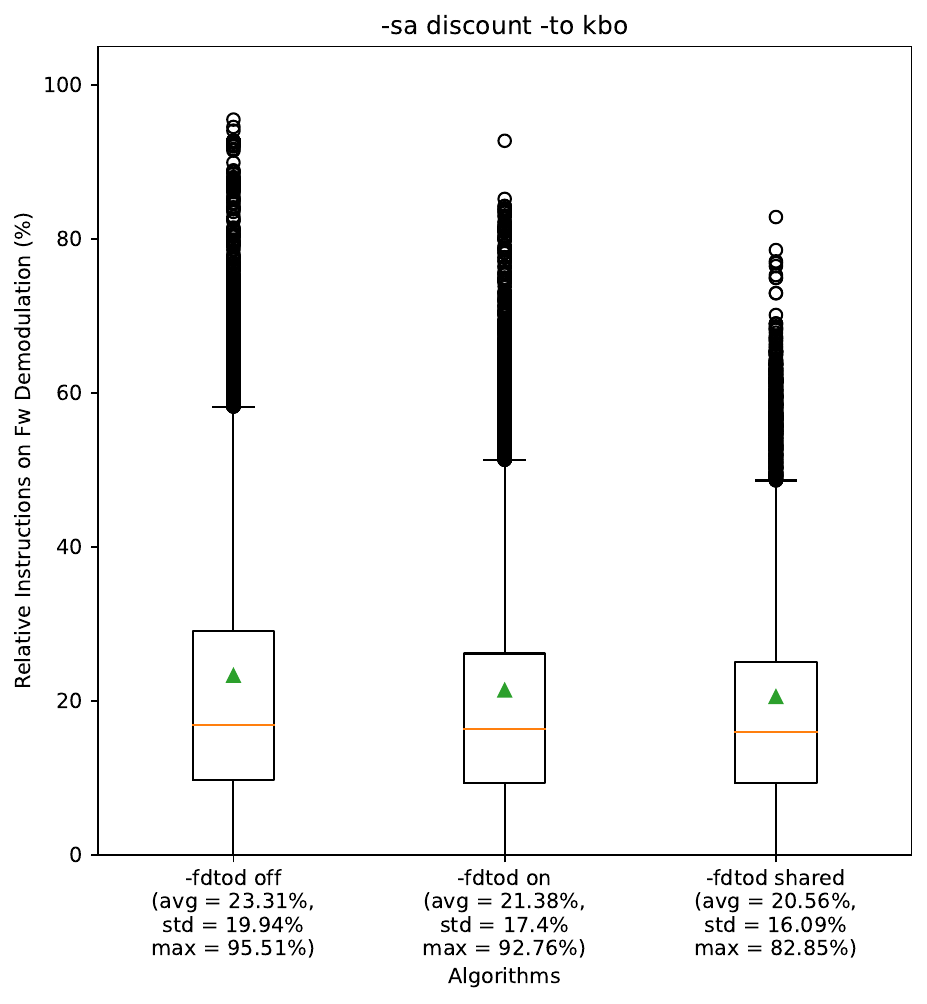}
    \includegraphics[width=0.49\textwidth]{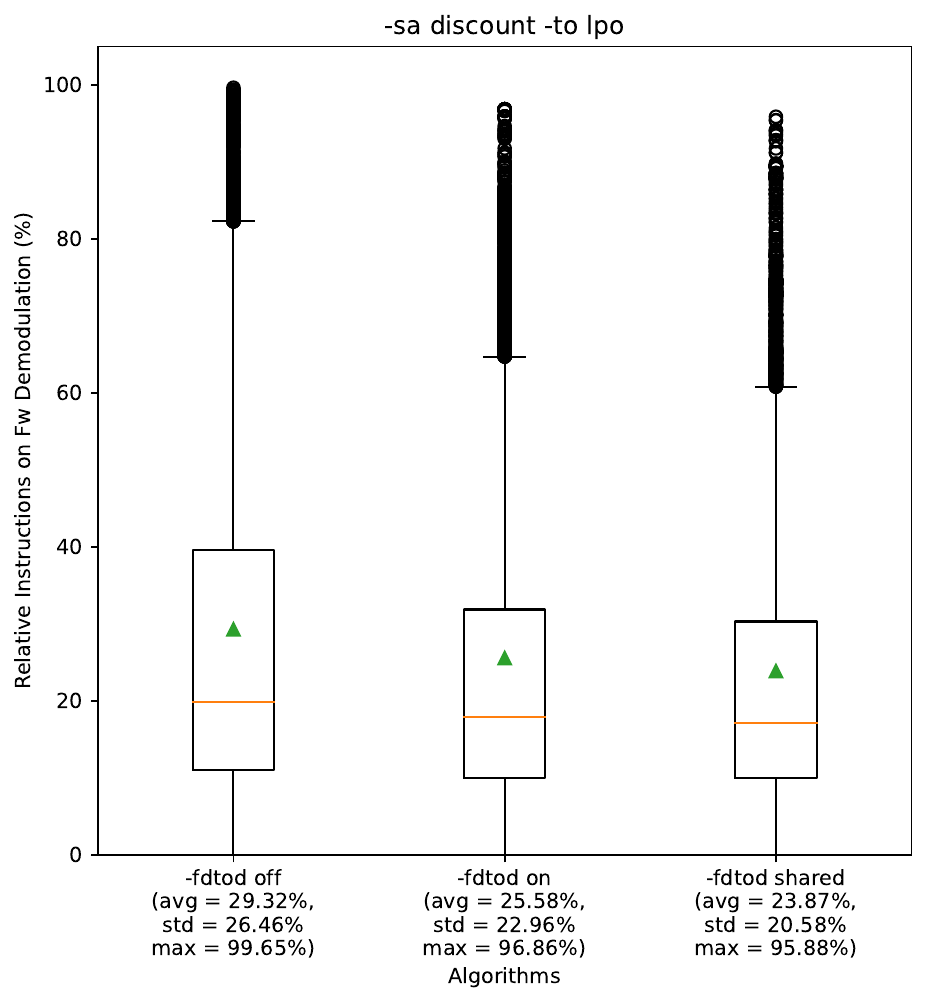}
    \includegraphics[width=0.49\textwidth]{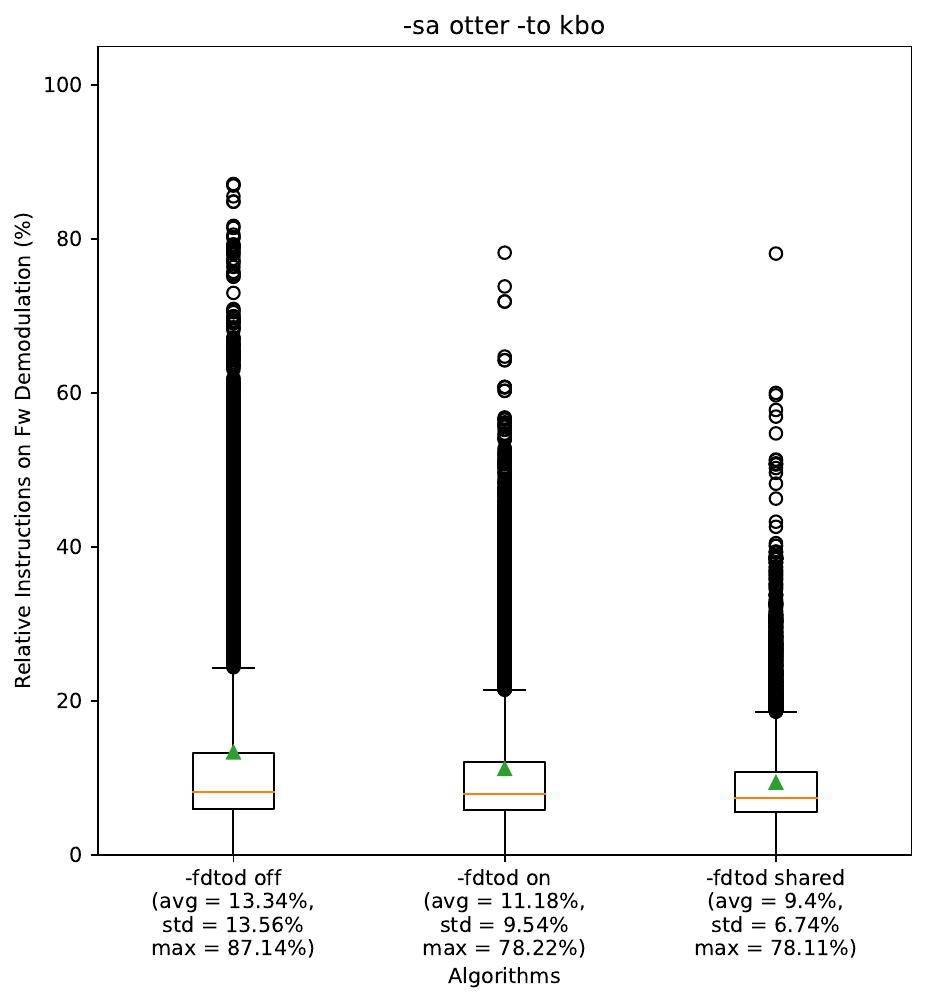}
    \includegraphics[width=0.49\textwidth]{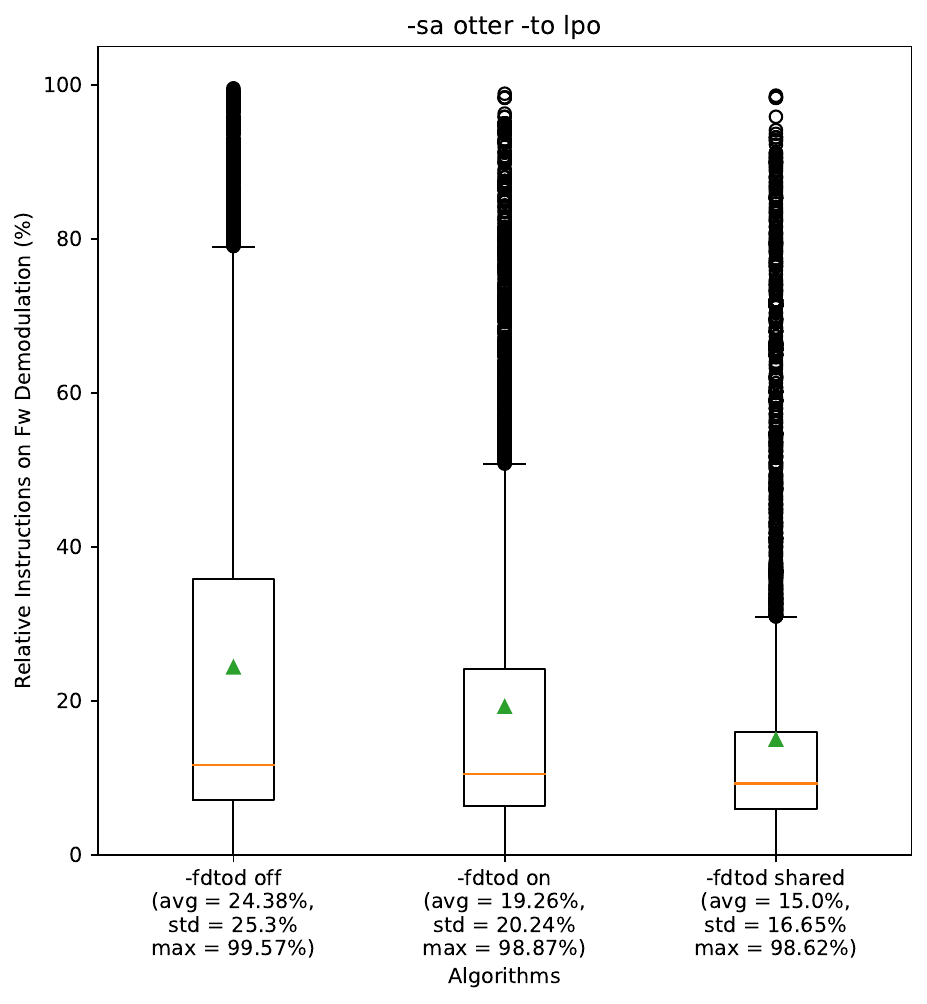}
    \caption{Proportion of number of instructions spent in forward demodulation over the overall number of instructions. We show {\tt -to kbo} in the left column, {\tt -to lpo} in the right column; {\tt -sa discount} in the top row, and {\tt -sa otter} in the bottom row. All runs were limited to $200\times 10^9$ instructions.}
    \label{fig:boxplots}
\end{figure}

\begin{figure}
    \centering
    \includegraphics[width=0.49\textwidth]{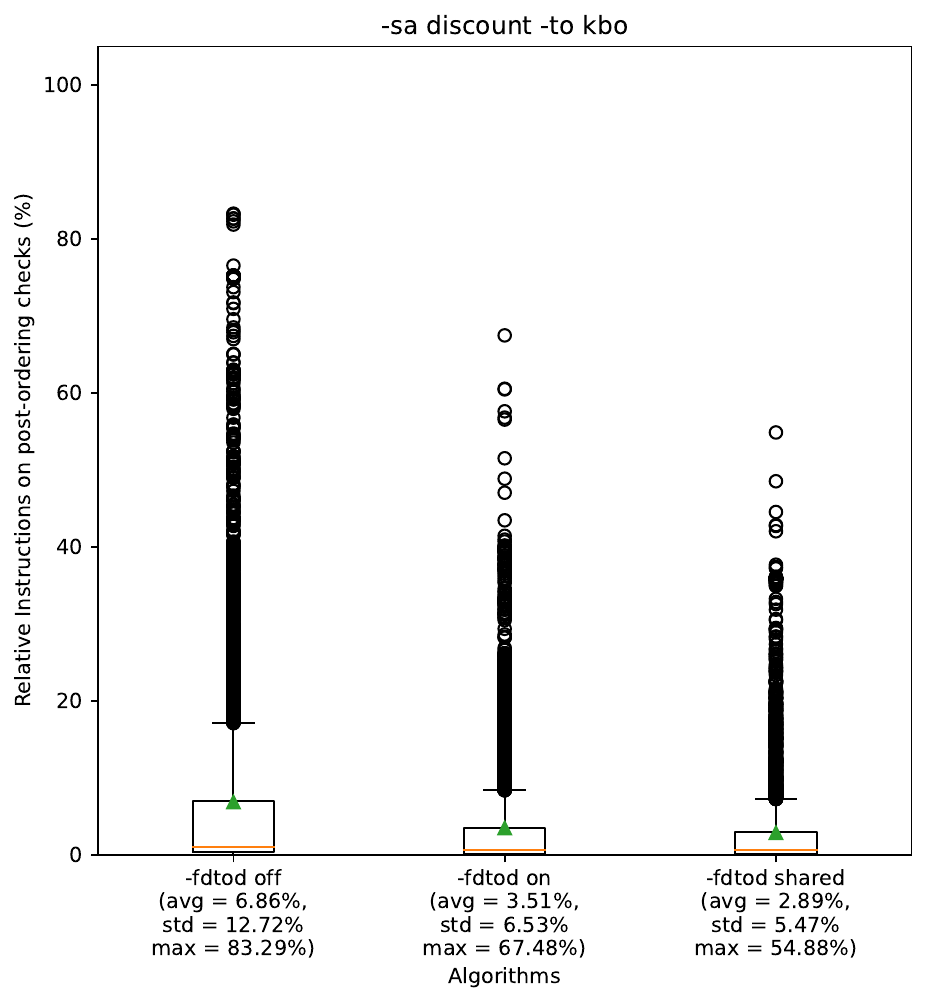}
    \includegraphics[width=0.49\textwidth]{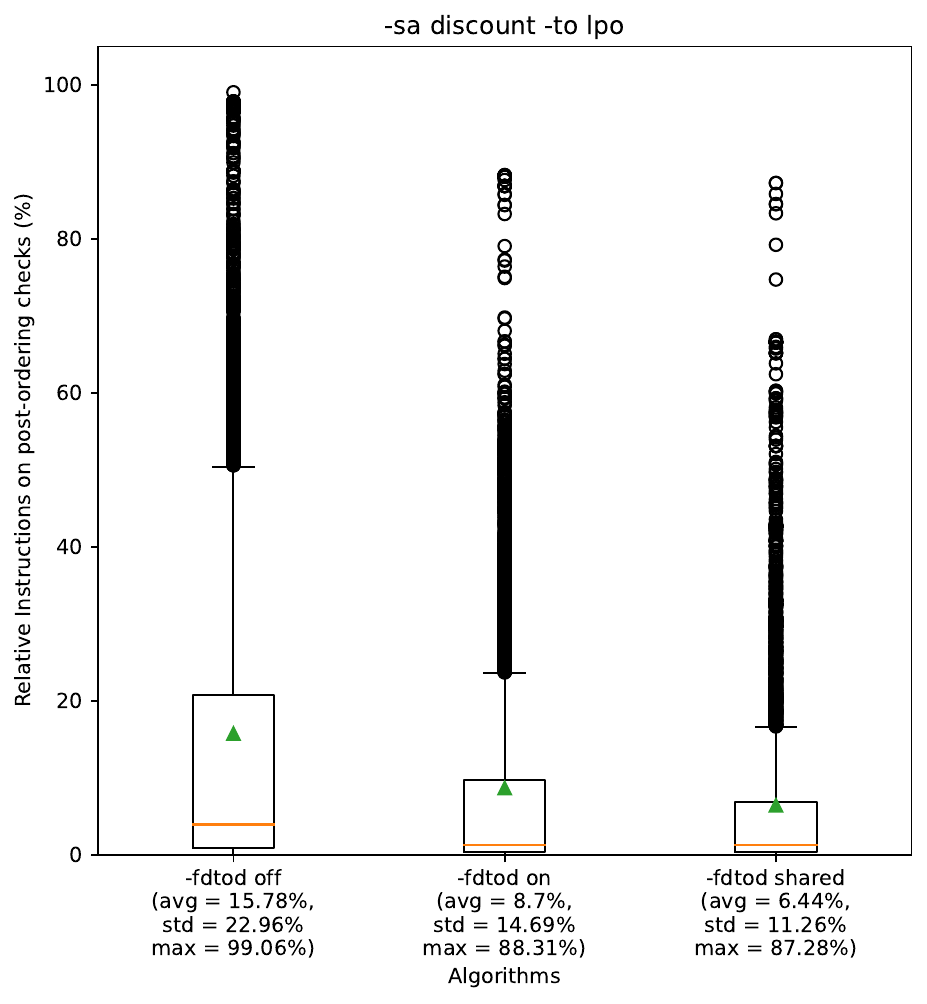}
    \includegraphics[width=0.49\textwidth]{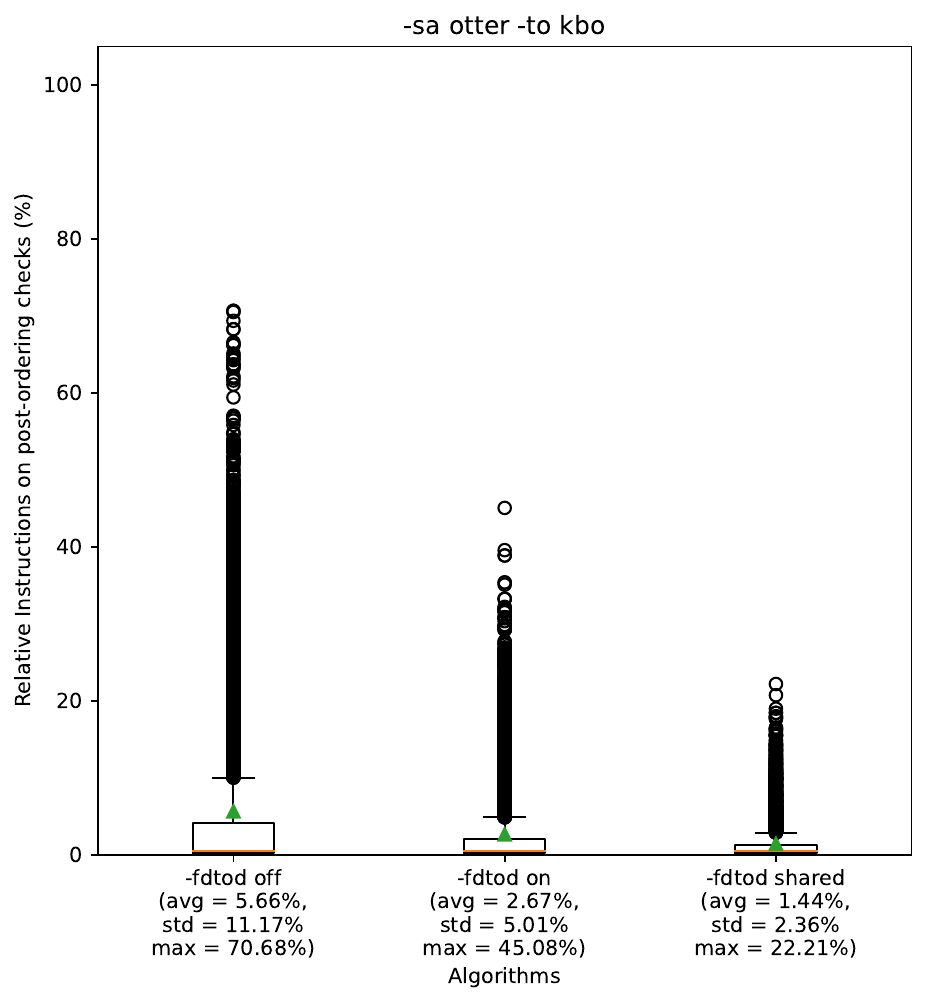}
    \includegraphics[width=0.49\textwidth]{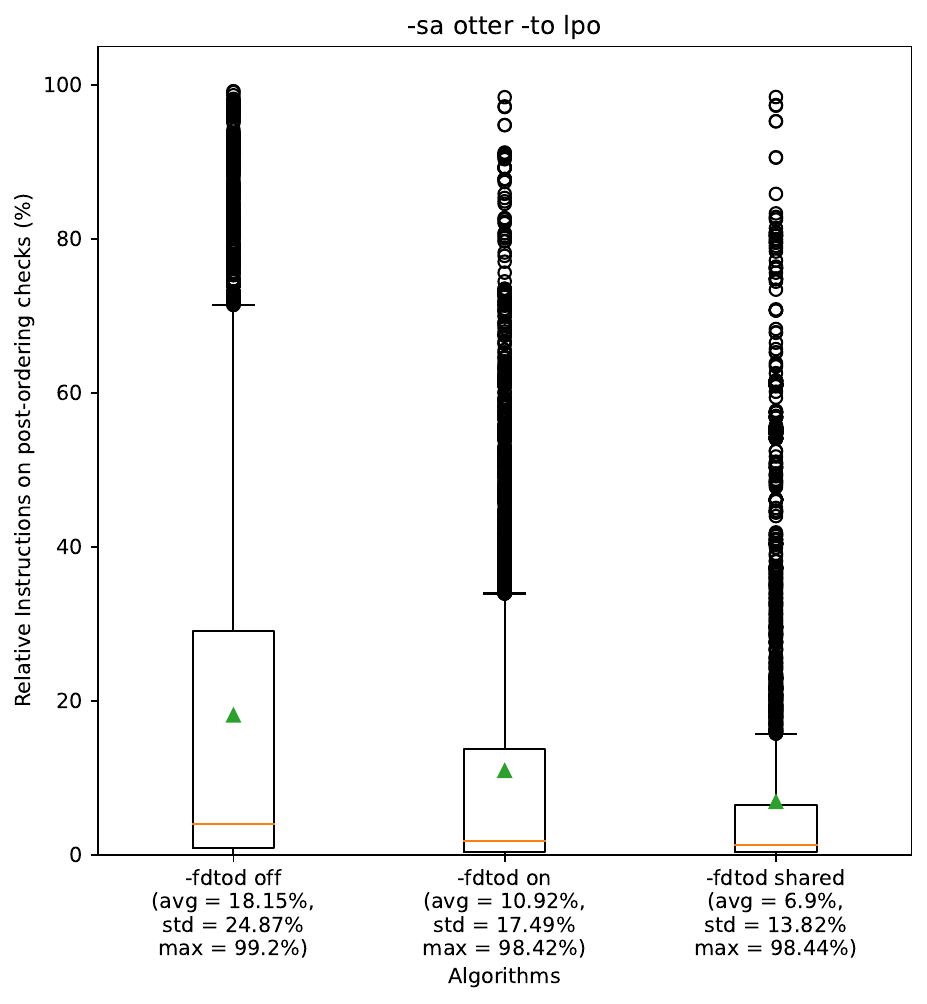}
    \caption{Proportion of number of instructions spent in post-ordering checks over the overall number of instructions. We show {\tt -to kbo} in the left column, {\tt -to lpo} in the right column; {\tt -sa discount} in the top row, and {\tt -sa otter} in the bottom row. All runs were limited to $200\times 10^9$ instructions.}
    \label{fig:boxplots-post-order}
\end{figure}

\begin{table}[t]
    \caption{Sharing of TODs for demodulators. The first column shows the \vampire{} options. The next two columns show the number of demodulators and TODs inserted. with {\tt -fdtod on}, since there is one TOD per demodulator, those two values are identical. The fourth column shows the sharing ratio of TODs. That is the average number of demodulators per TOD. The last two columns display the number of TOD retrievals performed: ``queries'' represent the number of retrievals for TODs, whereas ``answers'' is the number of times a success or exit node was traversed during retrievals.}
    \centering
    \renewcommand*\arraystretch{1.1}
    \setlength{\tabcolsep}{0.3em}
    \begin{tabular}{|l l l| r r | r | r r |}
        \hline
        \multicolumn{3}{|c|}{Options} & \multicolumn{2}{c|}{\# count $\times 10^{3}$} & Sharing ratio & \multicolumn{2}{c|}{Retrieval $\times 10^{6}$}\\
        \hdashline
                 &     &              & demodulators & TODs  &  & queries & answers \\
        \hline
\multirow{4}{0.5cm}{\rotatebox{90}{Otter}}    & \multirow{2}{0.8cm}{KBO}
              & \on     & 436 755 &  436 755 & 1.000 & 70 150 & 70 150 \\
         &    & \shared & 443 798 &  402 096 & 1.104 & 17 776 & 17 854 \\
\cdashline{2-8}
         & \multirow{2}{0.7cm}{LPO}
              & \on     & 296 168 &  296 168 & 1.000 &  100 241 &  100 241 \\
         &    & \shared & 306 878 &  274 271 & 1.119 & 23 922 & 23 964 \\
\cdashline{1-8}
\multirow{4}{0.5cm}{\rotatebox{90}{Discount}} & \multirow{2}{0.8cm}{KBO}
              & \on     & 14 328 & 14 328 & 1.000 &  126 773 &  126 773 \\
         &    & \shared & 14 396 & 13 907 & 1.035 & 50 396 & 50 432 \\
\cdashline{2-8}
         & \multirow{2}{0.7cm}{LPO}
              & \on     & 12 894 & 12 894 & 1.000 &  156 601 &  156 601 \\
         &    & \shared & 13 020 & 12 532 & 1.039 & 67 480 & 67 523 \\
        \hline
    \end{tabular}
    \label{tab:demodulator-stats}
\end{table}

\subsection{Distribution of Demodulators, TODs, and Retrievals}
\Cref{tab:demodulator-stats} shows the total number of demodulators, the overall number of TODs and the relative ratio of the latter two. Moreover, it shows the number of times TODs were queried and the number of answers (including failures) they provided during retrievals. Note that there are three orders of magnitude fewer demodulators, and hence TODs, in Discount configurations compared to in Otter configurations, which is due to the fact that Discount only uses a small subset of the clauses (so-called \emph{active clauses}) for forward demodulation, whereas Otter uses all generated and non-redundant clauses.

The third column of \Cref{tab:demodulator-stats} shows the ratio between the number of demodulators and TODs, which corresponds to the average number of distinct demodulators per TOD. For {\tt shared}, the column shows that most of the demodulators are not shared. The ratio is also smaller than the relative improvement of {\tt shared} over {\tt on} in all configurations (see Table~\ref{tab:runtimes}). This suggests that the demodulators that are actually shared appear more often when checking forward demodulation.

The fourth column of \Cref{tab:demodulator-stats} shows, first, that there is an at least three-order-of-magnitude difference between the number of TODs and number of retrievals performed on them. Second, when performing a retrieval, we rarely get multiple answers from a TOD retrieval, as most of the time we either fail or succeed with the first demodulator. This is not surprising as extra checks after the post-ordering check rarely have to be performed.

\begin{table}[t]
    \caption{Distribution of TOD nodes. The table shows the total number of term/data/poly nodes in TODs and their proportion of the total. The data originates from profiled \vampire{} runs within $200\times 10^9$ instructions.}
    \centering
    \renewcommand*\arraystretch{1.1}
    \setlength{\tabcolsep}{0.6em}
    \begin{tabular}{|l l l| r | r r r | r r r |}
        \hline
        \multicolumn{3}{|c|}{Options} & \multicolumn{4}{c|}{\# nodes $\times 10^{6}$} &  \multicolumn{3}{c|}{Ratio of nodes}\\
        \hdashline
                 &     &              & total & term & success & pos & term & success & pos \\
        \hline
\multirow{4}{0.5cm}{\rotatebox{90}{Otter}}    & \multirow{2}{0.8cm}{KBO}
              & \on     & 475.3 & 169.3 & 284.8 & 21.3 & 35.6\% & 59.9\% & 4.5\% \\
         &    & \shared & 1704.2 & 163.2 & 1501.8 & 39.2 & 9.6\% & 88.1\% & 2.3\% \\
\cdashline{2-9}
         & \multirow{2}{0.7cm}{LPO}
              & \on     & 760.1 & 568.9 & 191.2 & - & 74.8\% & 25.2\% & - \\
         &    & \shared & 2034.7 & 988.8 & 1046.0 & - & 48.6\% & 51.4\% & - \\
\cdashline{1-9}
\multirow{4}{0.5cm}{\rotatebox{90}{Discount}} & \multirow{2}{0.8cm}{KBO}
              & \on     & 15.1 & 5.0 & 9.7 & 0.5 & 33.1\% & 63.8\% & 3.1\% \\
         &    & \shared & 50.7 & 2.2 & 47.6 & 0.9 & 4.3\% & 93.9\% & 1.8\% \\
\cdashline{2-9}
         & \multirow{2}{0.7cm}{LPO}
              & \on     & 21.8 & 13.0 & 8.8 & - & 59.8\% & 40.2\% & - \\
         &    & \shared & 61.1 & 17.6 & 43.5 & - & 28.7\% & 71.3\% & - \\
        \hline
    \end{tabular}
    \label{tab:type-of-nodes}
\end{table}

\subsection{Distribution of Nodes}

\Cref{tab:type-of-nodes} shows the distribution of term comparison, positivity check, and success nodes at the end of the execution. Positivity nodes for the KBO runs constitute only a few percent of all nodes. The relative distribution of term and success nodes varies slightly, but in most configurations, success nodes dominate TODs. This relative difference is also greater in {\tt shared} variants than in {\tt on} variants, which is an indicator of how much redundancy we can avoid using, for example, redundant node removal, when sharing multiple demodulators.

In \Cref{tab:node-type-visits-stats}, we can observe that the number of created nodes is much lower than the number of processed nodes. This motivates our lazy approach to applying expensive processing steps. In addition, the number of node traversals is orders of magnitude higher than the number of processed nodes. This means that investing some computation budget to make later traversals faster will prove beneficial in the long run.

\begin{table}[t]
    \caption{Comparison of created and processed nodes classified by their type. The last three columns display the number of times those nodes are traversed during retrieval. The data originates from profiled \vampire{} runs within $200\times 10^9$ instructions.}
    \centering
    \renewcommand*\arraystretch{1.1}
    \setlength{\tabcolsep}{0.3em}
    \begin{tabular}{|l l l| r r r | r r r | r r r |}
        \hline
        \multicolumn{3}{|c|}{Options} & \multicolumn{3}{c|}{\# created $\times 10^{6}$} & \multicolumn{3}{c|}{\# processed $\times 10^{6}$} & \multicolumn{3}{c|}{\# traversed $\times 10^{9}$}\\
        \hdashline
                 &     &              & term & succ. & pos. & term & succ. & pos. & term & succ. & pos. \\
        \hline
\multirow{4}{0.5cm}{\rotatebox{90}{Otter}}    & \multirow{2}{0.8cm}{KBO}
              & \on     & 812.8 & 2164.9 & 36.1 & 17.0 & 2.1 & 31.9 & 63.2 & 68.3 & 7.8 \\
         &    & \shared & 210.4 & 2272.6 & 41.5 & 10.1 & 53.2 & 19.8 & 27.3 & 17.9 & 3.9 \\
\cdashline{2-12}
         & \multirow{2}{0.7cm}{LPO}
              & \on     & 1520.9 & 1478.8 & - & 141.9 & 2.4 & - & 205.7 & 98.8 & - \\
         &    & \shared & 1154.7 & 1584.4 & - & 93.4 & 29.3 & - & 61.9 & 24.0 & - \\
\cdashline{1-12}
\multirow{4}{0.5cm}{\rotatebox{90}{Discount}} & \multirow{2}{0.8cm}{KBO}
              & \on     & 25.3 & 70.3 & 0.9 & 0.4 & 0.4 & 0.8 & 113.7 & 124.2 & 12.5 \\
         &    & \shared & 3.7 & 71.5 & 1.0 & 0.3 & 7.8 & 0.7 & 80.0 & 50.4 & 10.0 \\
\cdashline{2-12}
         & \multirow{2}{0.7cm}{LPO}
              & \on     & 48.6 & 63.8 & - & 3.6 & 0.6 & - & 248.7 & 154.3 & - \\
         &    & \shared & 26.7 & 65.4 & - & 3.3 & 6.7 & - & 148.4 & 67.5 & - \\
        \hline
    \end{tabular}
    \label{tab:node-type-visits-stats}
\end{table}

\subsection{Unit Equality Results}
Our approach shines especially well in the UEQ (unit equality) category of TPTP.
\Cref{tab:runtimes-ueq} shows that our approach shaves up to 30\% off the solving time when LPO is combined with Discount.
\Cref{tab:ueq-res} displays the speed up of the forward demodulation part of \vampire.

\begin{table}[t]
    \caption{Time for solving UEQ problems by \vampire{}. Similar to \Cref{tab:runtimes}, this table displays the total solve time of UEQ problems.}
    \centering
    \renewcommand*\arraystretch{1.1}
    \setlength{\tabcolsep}{1em}
    \begin{tabular}{|l l| c | c | c | c |}
        \hline
                 &      & Solved by all & \off & \on & \shared  \\
        \hline
        \multirow{2}{1cm}{Otter}
                 & KBO & 516 & 29m 28s    & 27m 35s   & 26m 35s\\
                 & LPO & 453 & 20m 20s    & 17m 18s   & 14m 58s\\\hdashline
        \multirow{2}{1cm}{Discount}
                 & KBO & 465 & 29m 33s    & 27m 26s   & 26m 41s\\
                 & LPO & 413 & 21m 27s    & 16m 31s   & 15m 00s\\
        \hline
    \end{tabular}
    \label{tab:runtimes-ueq}
\end{table}

\subsection{Exact Instruction Measurement}
\label{sec:app-instr-count-measure}
As is usual in fine measurements, the act of measuring itself can introduce errors. If the computer is programmed to {\tt start} and {\tt stop} a clock without doing anything in between, the measurements will not be 0. Further, time measurements are very noisy. Experiments on our cluster showed that starting a clock on average takes $28ns$, but the variance was $340 ns^2$ (and depending on the execution of the program). This might be negligible when measuring expensive functions but becomes problematic when profiling small but regularly invoked methods such as ordering checks.

On the other hand, instruction counting is more reproducible and can be exactly computed. It is possible to calibrate an instruction counter that corrects its own impact on the measurements.

\paragraph{Calibration of instruction counters.}
When calibrating the counters, we have to take into account the inner and outer impact of the counter. When running a {\tt start/stop} cycle, the inner overhead is the amount that will be measured on top of the instructions between {\tt start} and {\tt stop}. The outer overhead is introduced by nested counters. Both overheads can be computed with two nested {\tt start/stop} cycles. The inner overhead is the number of instructions measured by the inner counter. The outer overhead is the difference between the number of instructions measured by the outer and inner counters. On our experimental setup, the inner and outer overheads were measured at 18 and 37 instructions respectively. The calibration is performed at runtime, adapting to the hardware architecture.

\paragraph{Correcting the overhead.}
Now that we know by how much we overshoot our measurements, we can offset them. To do so, we define a global variable to track the total overhead. When starting a new counter, it records this total. When the counter stops, it corrects the increment by removing the difference between the current and recorded value of the total overhead. The increment is further reduced by subtracting the inner overhead. Finally, the global overhead is incremented with the outer overhead.

With this method, we can measure precisely the number of instructions necessary to run even small parts of the program.

\paragraph{Tradeoff.}
For precise measurements, we pay two prices. i) Measuring the number of instructions invokes expensive system calls and is approximately one hundred times slower than time measurements (3 $\mu s$ per {\tt start/stop} cycle on our machine). Therefore, the computation budget is largely increased. ii) The number of instructions is not always a good metric of the efficiency of a program. Indeed, it does not consider cache behaviors, branch predictions, varying instruction costs, e.g. floating-point division or integer addition have different impacts.

This approach is, therefore, best complemented with coarser time measurements for a clearer picture of the performance gains.

\end{document}